\newcommand{\vertiii}[1]{{\left\vert\kern-0.25ex\left\vert\kern-0.25ex\left\vert #1 
    \right\vert\kern-0.25ex\right\vert\kern-0.25ex\right\vert}_\infty}
\DeclareMathOperator*{\argmin}{argmin}
\newtheorem{theorem}{Theorem}
\newtheorem{lemma}[theorem]{Lemma}
\newtheorem{proposition}[theorem]{Proposition}
\newtheorem{example}[theorem]{Example}
\newtheorem{assumption}{Assumption}
\newcommand{\tr}{\mathrm{tr}}
\newcommand{\diag}{\mathrm{diag}}
\title[Integer programming for causal additive models]{Convex Mixed-Integer Programming for Causal Additive Models with Optimization and Statistical Guarantees}
\author[X. Zhang]{Xiaozhu Zhang $^1$}
\address{$^1$ Department of Statistics, University of Washington}
\email{xzzhang@uw.edu}
\author[N. Keret]{Nir Keret $^2$}
\address{$^2$ Department of Biostatistics, University of Washington}
\email{nirkeret@uw.edu}
\author[A. Shojaie]{Ali Shojaie $^{1,2}$}
\email{ashojaie@uw.edu}
\author[A. Taeb]{Armeen Taeb $^{1}$}
\email{ataeb@uw.edu}
\date{}
\begin{document}

\begin{abstract}
We study the problem of learning a directed acyclic graph from data generated according to an additive, non-linear structural equation model with Gaussian noise. We express each non-linear function through a basis expansion, and derive a maximum likelihood estimator with a group $\ell_0$-regularization that penalizes the number of edges in the graph. The resulting estimator is formulated through a convex mixed-integer program, enabling the use of branch-and-bound methods to obtain a solution that is guaranteed to be accurate up to a pre-specified optimality gap. Our formulation can naturally encode background knowledge, such as the presence or absence of edges and partial order constraints among the variables. We establish consistency guarantees for our estimator in terms of graph recovery, even when the number of variables grows with the sample size. Additionally, by connecting the optimality guarantees with our statistical error bounds, we derive an early stopping criterion that allows terminating the branch-and-bound procedure while preserving consistency. Compared with existing approaches that either assume equal error variances, restrict to linear structural equation models, or rely on heuristic procedures, our method enjoys both optimization and statistical guarantees. Extensive simulations and real-data analysis show that the proposed method achieves markedly better graph recovery performance.



\end{abstract}

\maketitle

{\small{\noindent\textbf{Keywords:} causal discovery, directed acyclic graphs, discrete optimization, generalized additive models, $\ell_0$-penalization, non-linear structural equation models}}

\vspace{0.2in}
\section{Introduction}
Understanding causal relationships is arguably the ultimate aim of science. It enables us to predict how a system will behave under external interventions---a crucial step toward both understanding and engineering that system. Directed acyclic graphs (DAGs) offer a convenient and powerful framework for modeling causal relationships.  
Although in some cases 
prior knowledge and expert intuition help suggest likely causal models, in many cases, the underlying causal structure is unknown and must be inferred from data. 
The task of inferring a DAG from observational or experimental data is known as \emph{causal discovery}. The learned DAG then serves as a basis for making causal inferences. 

Many methods in the causal discovery literature assume that the relationships among the variables are linear \citep{chickering2002optimal,xu2025integer,shimizu2006ALN,chen2019causal,Wang2018HighdimensionalCD,ghoshal2019direct}; see also the survey paper \citep{Glymour2019ReviewOC}. 
Such methods are naturally limited by their inability to capture non-linear effects. This paper focuses on causal discovery when relationships among variables may be non-linear.


Causal discovery in non-linear settings is commonly performed using a combinatorial search for a DAG structure, either via constraint-based or score-based methods.
Constraint-based methods identify conditional independencies from the data. An example is the PC algorithm \citep{spirtes2000causation} which initiates with a complete undirected graph and iteratively removes edges based on conditional independence assessments. However, conditional independence testing is known to be hard in non-linear settings \citep{Shah2018TheHO}; indeed, existing approaches require complicated tests \citep{zhang2011kernel,heinze2018invariant,Strobl2017ApproximateKC, chakraborty2022nonparametric}, with low power and poor finite-sample performance. Additionally, constraint-based methods require a condition known as ``strong-faithfulness", which is known to be restrictive \citep{uhler2013geometry, sondhi2019reduced}. Score-based methods, which is the approach considered in this paper, often use penalized log-likelihood
as a score function to seek the optimal graph within the entire space of DAGs. Unlike constraint-based methods, score-based approaches do not require the ``strong-faithfulness" assumption. 

A prominent example of a score-based method in non-linear settings assumes a \emph{causal additive model} (CAM) where the non-linear functions as well as the noise terms are of additive form \citep{buhlmann2014}. CAMs, which build on generalized additive models in standard regression \citep{hastie1986generalized}, are quite expressive and can automatically model non-linear relationships that linear models miss. Furthermore, since the functional forms are non-linear and the noise is additive, the underlying DAG structure is identifiable from observational data \citep{peters2014causal,Hoyer2008NonlinearCD}. With this powerful model and under Gaussian noise, \cite{buhlmann2014} adopt a multi-step approach to estimate a DAG: they obtain a topological ordering of the variables by maximizing a likelihood score over the space of permutations, and then perform variable selection with respect to the ordering. They prove that the first step of their algorithm is asymptotically consistent, provided that they can solve the optimization problem exactly. However, they deploy a heuristic greedy algorithm (called \emph{IncEdge}) with no optimality guarantees. 

Indeed, \cite{gao2020polynomial} prove that the solution produced by the heuristic \emph{IncEdge} algorithm in \cite{buhlmann2014} can be inconsistent, highlighting a discrepancy between their optimization and statistical guarantees. To address this issue, for general non-linear models, \cite{gao2020polynomial} propose a polynomial-time algorithm for estimating a topological ordering with statistical guarantees. Similar to \cite{buhlmann2014}, in a second step, they perform variable selection to obtain a DAG. Their approach relies critically on a condition on the residual variances. This assumption is closely related to assumptions made in prior work on linear models with equal noise variances \citep{peters2013identifiability,ghoshal2019direct,chen2019causal}. As we illustrate next, when this condition is violated, the accuracy of the estimated graph deteriorates. 

\begin{example}
Consider the following structural equation model (SEM):
\begin{equation} \label{ex:teaser}
\begin{cases}
    X_1 \sim \mathcal{N}(0,0.5^2), & \\
    X_2 = X_1^2 - 0.25 +z_2, & z_2 \sim \mathcal{N}(0,\sigma_2^2), \\
    X_3 = 2X_1^2 - 0.5 +h(X_2)- \mathbb{E}[h(X_2)] +z_3, & z_3 \sim \mathcal{N}(0,\sigma_3^2),   \\
    X_4 \sim \mathcal{N}(0, 0.5^2), \quad X_5 \sim \mathcal{N}(0, 0.5^2),
\end{cases}
\end{equation}
where $X_1, z_2, z_3, X_4, X_5$ are mutually independent.
We consider two scenarios: (i) equal noise variances, in which $\sigma_2 = \sigma_3 = 0.5$, ensuring that the conditions required for the approach in \cite{gao2020polynomial} are satisfied; and (ii) unequal noise variances, in which $\sigma_2 = 0.1$ and $\sigma_3 = 0.3$, thereby violating those conditions. In each setting, we generate $n$ samples according to the SEM using five different functions for $h(x)$: $h(x) = \sin(x)$, $h(x) = 0.5x^3$, $h(x) = \arctan(x^2)$, $h(x) = |x|$, and $h(x) = \exp(x)$. We apply the procedures in \cite{buhlmann2014} (denoted CAM-IncEdge) and \cite{gao2020polynomial} (denoted NPVAR) across all the ten settings, and evaluate the accuracy of their graph estimates, over 100 independent trials. We also apply the mixed-integer programming method in \cite{xu2025integer} (denoted MIP-linear), which was designed for optimal causal discovery in linear settings, to highlight the benefits of deploying non-linear models. In Figure~\ref{fig:intro}, we display the results, in terms of the probability of exact recovery and structural Hamming distance (SHD), for varying sample size $n$. We observe that CAM-IncEdge behaves unstably across different functions $h(x)$ under the equal-variance setting. Moreover, regardless of the sample size, both CAM-IncEdge and NPVAR cannot estimate the correct graph in the unequal-variance setting. In fact, as the sample size increases, their performance does not seem to improve. Our mixed-integer programming approach (MIP-nonlinear), which provides both optimization and statistical guarantees and accounts for nonlinearities, consistently outperforms the other methods across different choices of $h(x)$ when the sample size $n$ is sufficiently large. In particular, it is the only approach that successfully recovers the correct graph under the unequal-variance setting. Furthermore, as we see in Section~\ref{sec:experiment:comparison}, a larger number of variables can result in greater deterioration in the performance of CAM and NPVAR (under unequal variances). 
\end{example}

\begin{figure}[!ht]
\includegraphics[width = \textwidth]{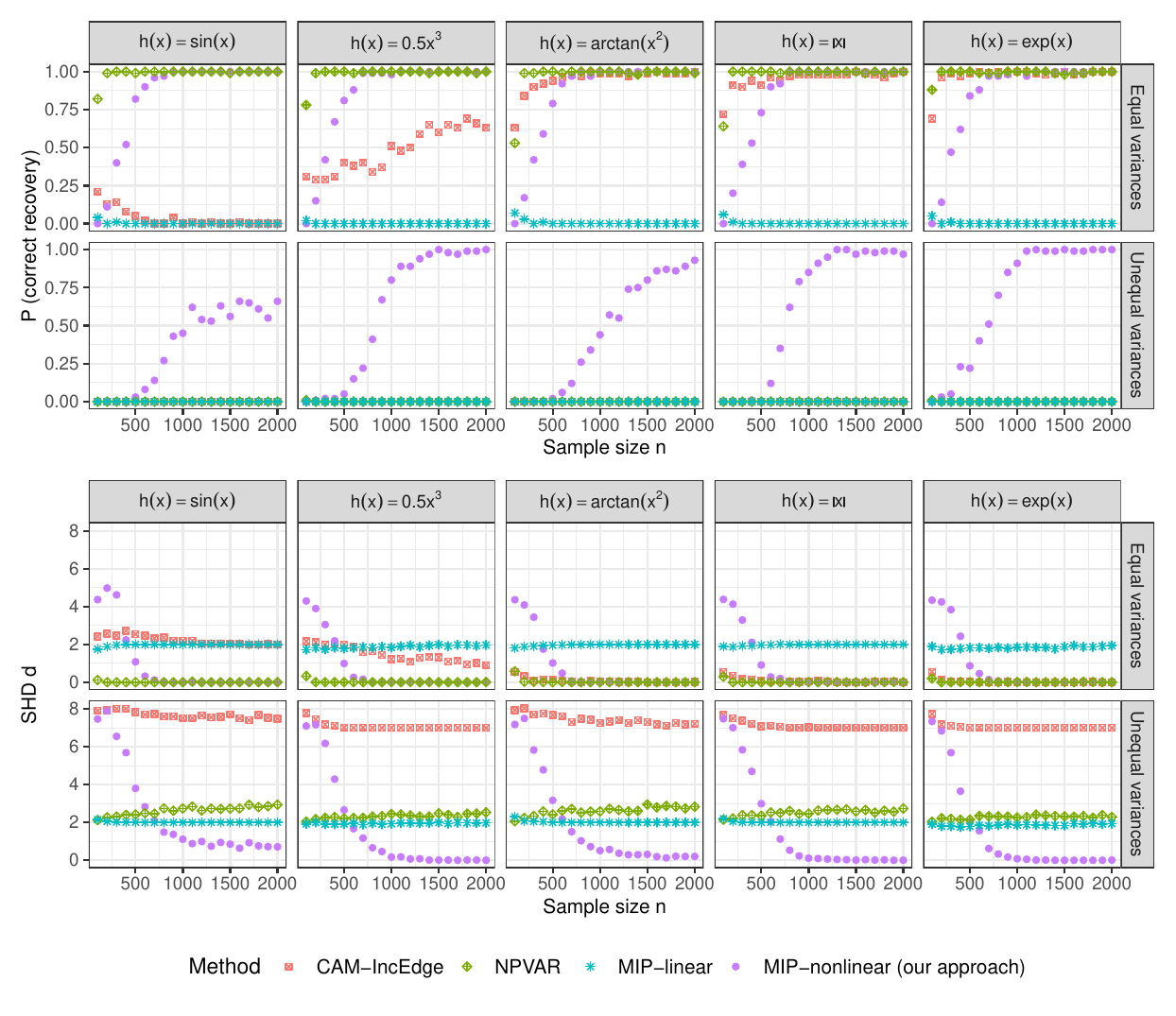}
\caption{\small Graph recovery performance of existing methods and our proposed approach. Top: the probability of recovering the true graph. Bottom: the structural Hamming distance (SHD) $d$ between the estimated graph and the true graph, defined as the number of incorrectly recovered or missed edges. The function $h(x)$ corresponds to the one used in model~\eqref{ex:teaser}. Implementation details are provided in~Appendix~\ref{sec:appendix:setup-intro}. }
\label{fig:intro}
\end{figure}

There are score-based approaches that do not perform a discrete search over DAGs. Rather, they relax the discrete search space to a continuous search space, allowing gradient descent and
other techniques from continuous optimization to be applied to causal discovery \citep{zheng2018dags,yu2019dag}. However, the search space of these problems is highly non-convex, so
that the optimization procedure may become stuck in a local minima \citep{Wei2020DAGsWN}. Indeed, there are indications that the performance relies on an artifact of the synthetic evaluation data, and can be matched by a simple algorithm that directly exploits it \citep{reisach2021varsort}. Furthermore, \cite{seng2022notears} found this reliance to be a vulnerability to adversarial attacks. For additional discussion on non-linear causal discovery, we refer the reader to surveys by \citet{peters2017elements} and \citet{Glymour2019ReviewOC}.


\subsection{Our contributions}
We propose a mixed-integer programming (MIP) formulation for score-based causal discovery in causal additive models with Gaussian errors. Each non-linear function is parameterized as a linear combination of a fixed (possibly large) set of basis functions. The score function combines a convex negative log-likelihood with a sparsity-inducing regularizer that penalizes the number of edges in the DAG implied by the model parameters. This regularizer can be expressed as a group $\ell_0$-penalty on the model parameters, where each group corresponds to the coefficients of a non-linear basis expansion. Our estimator minimizes the score subject to DAG constraints, enforced through a layered-network formulation involving both continuous and integer variables \citep{manzour2021integer}. In our layered-network formulation, the group $\ell_0$-penalty admits a linear representation in the integer variables and thus introduces no additional computational burden. Setting the basis functions to be linear recovers the estimator of \citet{xu2025integer}, which was developed for linear causal discovery. Incorporating non-linear basis functions allows our method to capture non-linear dependencies that the previous approach cannot. 

Our MIP framework is different from previous work \citep{buhlmann2014,gao2020polynomial} in several key ways. First, it enables the use of branch-and-bound methods to obtain a solution that is guaranteed to be accurate up to a pre-specified optimality gap. Second, unlike the method in \cite{gao2020polynomial}, it does not assume homoscedastic noise (i.e., equal noise variance). Third, it involves a joint optimization over all model parameters, including a sparse DAG structure via a group $\ell_0$-regularization, as opposed to a two-step procedure that first estimates a topological ordering and then performs variable selection. Finally, unlike other methods, background knowledge---such as the presence or absence of certain edges and partial order constraints among variables---can be organically incorporated as linear constraints in our optimization framework.

We establish consistency guarantees for our estimator, allowing for the number of variables to grow with the sample size. Our analysis combines, and substantially extends, techniques from \cite{buhlmann2014} and \cite{vandegeer2013}. The approach in \cite{buhlmann2014} addresses only the consistency of topological ordering and does not accommodate regularization, whereas the analysis in \cite{vandegeer2013} is restricted to linear models and cannot handle non-linear settings. Our work integrates these threads to handle both challenges simultaneously. Furthermore, by connecting the optimality and statistical guarantees of our estimator, we present an early stopping criterion under which we can terminate the branch-and-bound procedure while retaining both optimality and statistical consistency guarantees. 

We develop several computational strategies to ensure that our approach scales to moderate-sized graphs (around 30 nodes) while retaining optimization guarantees. First, we re-parameterize the negative-log likelihood function to be convex. This ensures that the resulting MIP is convex once the integer constraints are relaxed, making it amenable to tools from the integer programming literature that provide optimality guarantees. Second, we identify a sufficient statistic and reformulate the optimization problem in terms of it, eliminating the need to iterate over the full dataset each time the objective is evaluated. Third, as discussed above, we apply early stopping without sacrificing statistical guarantees. Finally, we restrict the search space using constraints computed prior to optimization: (i) a super-structure constraint, where we use efficient neighborhood selection methods to estimate an undirected graphical model \citep{meier2009high,ravikumar2009sparse, voorman2014graph} and restrict candidate edges to those in this graph; (ii) an edge-presence constraint, where we run an existing causal discovery method \citep[e.g.,][]{buhlmann2014, gao2020polynomial} on many bootstrap samples and retain edges that appear with high stability; and (iii) a partial-order constraint, where we enforce topological ordering relations that are stable across the bootstrap estimates. All of these constraints can be imposed as linear constraints in our optimization program. 

Finally, we demonstrate our method using both synthetic and real data from a tightly controlled physical system. First,  we show that the computational strategies introduced earlier substantially improve efficiency while largely preserving accuracy. Second, in synthetic experiments for non-linear models, we observe that our approach consistently outperforms existing methods, almost always achieving the smallest structural Hamming distance under both homoscedastic and heteroscedastic noise. This advantage is observed across different true functions and choices of basis functions, and can be further enhanced by incorporating richer basis expansions when the underlying non-linear effects are more complex. Lastly, our method attains the most accurate recovery of an underlying DAG in a real physical system. 


The paper is organized as follows. Section~\ref{sec:setup} describes the problem setup; Section~\ref{sec:mip} presents our MIP formulation; Section~\ref{sec:speed-up} outlines computational strategies for accelerating the method; Section~\ref{sec:theory} provides consistency guarantees; and Section~\ref{sec:experiments} demonstrates the utility of our approach on synthetic and real data. Section~\ref{sec:discussion} discusses future work.

\subsection{Notation}\label{sec:notation}
Let $\mathrm{V} := \{1, \dots, p\}$ and $\mathcal{E} := \mathrm{V} \times \mathrm{V} \setminus \{(j,j): j \in \mathrm{V}\}$. A DAG $\mathcal{G} = (\mathrm{V}, E)$ over $p$ nodes consists of a node set $\mathrm{V}$ and a directed edge set $E \subseteq \mathcal{E}$, such that the graph contains no directed cycles. Here, we use $(k,j) \in E$ to indicate that $k \to j$ in the DAG $\mathcal{G}$. The sparsity level of a DAG is defined as its number of edges: $|\mathcal{G}|:=|E|$. The smallest eigenvalue of a matrix $A$ is denoted as $\lambda_{\rm min}(A)$. For any deterministic vector $x \in \mathbb{R}^n$, we define its $n$-norm as $\|x\|_n := \|x\|_2 / \sqrt{n}$. For a random variable $X \in \mathbb{R}$ with $n$ independent realizations $\{x_1, \dots, x_n\}$, we define its empirical $n$-norm as $\|X\|_n := (n^{-1} \sum_{i=1}^n x_i^2 )^{1/2}$.
We use $X_j$, $j = 1, \dots, p$, to denote the random variable corresponding to the $j$-th node. Unless otherwise specified (e.g., in certain proofs), the notation $X_j$, $\epsilon_j$, and their linear combinations refer to real-valued random variables, rather than vectors of their sample realizations.
Lastly, for two sequences $a_n$ and $b_n$, we write $a_n \lesssim b_n$ if there exists a constant $C>0$ and $n_{0} \in \mathbb{N}$ such that $a_{n}\le C b_{n}$ for all $n \geq n_{0}$. We write $a_n \asymp b_n$ if both $a_{n} \lesssim b_{n}$ and $a_{n} \gtrsim b_{n}$ hold. We write $a_n = o(b_n)$ if $a_n /b_n \rightarrow 0$ as $n \rightarrow \infty$.

\vspace{0.2in}
\section{Problem setup}
\label{sec:setup}
\subsection{Model}
Consider an unknown oracle DAG $\mathcal{G}^\star := (\mathrm{V}, E^\star)$, where $\mathrm{V} = \{1,\dots, p\}$ is the node set and $E^\star \subseteq \mathcal{E}$ is the true set of directed edges. The $p$ nodes of $\mathcal{G}^\star$ correspond to observed random variables $X \in\mathbb{R}^p$.
We assume for each node $j \in \mathrm{V}$, the variable $X_j \in \mathbb{R}$ satisfies a structural equation model~\citep{lauritzen1996graphical}:
\begin{equation*}
    X_j = f_j(X_{\mathrm{pa}(j)}, \epsilon_j), \quad
    \epsilon_1, \dots, \epsilon_p \text{ mutually independent},
\end{equation*}
where $\mathrm{pa}(j):= \{k:(k,j) \in E^\star \}$ denotes the set of parents for node $j$ in the graph $\mathcal{G}^\star$. The independence of the error terms ensure that there is no latent confounding. We primarily focus on the setting where $f_j(\cdot)$ is a non-linear function. Importantly, we assume that $f_j(X_{\mathrm{pa}(j)}, \epsilon_j)$ is additive in each of its arguments, allowing the noise and the causal effects from each parental node to be decoupled from one another. This is known as a causal additive model~\citep{ buhlmann2014, kertel2025boosting, maeda2021causal}, which applies the principles of generalized additive models~\citep{hastie1986generalized} to causal settings.
We also model each noise variable as Gaussian $\epsilon_j \sim \mathcal{N}(0, \sigma_j^{\star2})$ with non-degenerate variances $\sigma_j^\star > 0$ ($j = 1, \dots, p$). This specification enables us to formulate the distribution of $X$, which is crucial for establishing identifiability and a score-based estimator. We arrive at an additive structural equation model with Gaussian errors:
\begin{equation}\label{eq:DAG-model}
\begin{gathered}
    X_j = \sum_{k\in \mathrm{pa}(j)} f^\star_{kj}(X_k) + \epsilon_j, \\
    \epsilon_j \overset{}{\sim} \mathcal{N}(0, \sigma_j^{\star2} ), \quad
\sigma_j^\star > 0, \quad
\mathbb{E}[f^\star_{kj}(X_k)] = 0, \ \forall j,k \in \mathrm{V},
\end{gathered}
\end{equation}
where $f^\star_{kj}(\cdot)$ is a function from $\mathbb{R} \rightarrow \mathbb{R}$. Note that $f^\star_{kj}(\cdot) \not\equiv 0$ if and only if $k \in \mathrm{pa}(j)$. Without loss of generality, each additive component $f^\star_{kj}(X_k)$ has mean zero; see Appendix~\ref{sec:appendix:zeromean} for additional discussion. We denote by $P$ the joint distribution induced by the structural equation model in~\eqref{eq:DAG-model}:
$$
(X_1, \dots, X_p) \sim P.
$$

When the functions $f^\star_{kj}$ are linear, the model \eqref{eq:DAG-model} reduces to a linear Gaussian structural equation model. In that setting, identifiability is a challenge: multiple structural equation models may induce the same distribution $X$, making it impossible to distinguish the true underlying structural equation model based solely on $P$ \citep{vandegeer2013}. However, when the functions $f^\star_{kj}$ are non-linear and, notably, smooth enough, $\mathcal{G}^\star$ is the only DAG that is compatible $P$:
\begin{proposition}[Theorem 1 in \cite{Hoyer2008NonlinearCD}, Corollary 30 in \cite{peters2014causal}, Lemma 1 in \cite{buhlmann2014}]
\label{prop:identifiability}
    Suppose all non-zero functions $\{f^\star_{kj}\}_{(k,j)\in \mathcal{E}}$ are non-linear and three times differentiable.
    Then any distribution $Q$ that is generated by model \eqref{eq:DAG-model} with a different DAG $\mathcal{G}' \neq \mathcal{G}^\star$ and non-constant, three times differentiable functions $\{f'_{kj}\}$ differs from $P$.
\end{proposition}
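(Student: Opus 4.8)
The plan is to reduce the multivariate identifiability claim to a bivariate additive-noise identifiability result and then propagate it across the DAG, following the ``restricted additive noise model'' argument underlying \cite{Hoyer2008NonlinearCD} and \cite{peters2014causal}. First I would establish the bivariate cornerstone. Consider a single nonlinear, three-times-differentiable function $f$ and Gaussian noise $N \perp X$, giving $Y = f(X) + N$. Writing $\nu = \log p_N$ and $\xi = \log p_X$, I would show that the existence of a backward additive model $X = g(Y) + \tilde N$ with $Y \perp \tilde N$ and smooth non-constant $g$ forces $(\xi, f, \nu)$ to satisfy a third-order linear ordinary differential equation in $\xi$. Because this equation is linear of order three, its solution space is at most three-dimensional; specializing to Gaussian $N$ (so that $\nu$ is quadratic and $\nu''$ is a nonzero constant) and invoking the hypothesis that $f$ is nonlinear (so $f'' \not\equiv 0$), I would argue that no consistent solution exists, contradicting the reversibility assumption. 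This yields the key fact: a nonlinear Gaussian additive model admits no reverse additive model.

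Second, I would lift this to the full DAG by contradiction. Suppose a distribution $Q$ generated from model \eqref{eq:DAG-model} with a distinct DAG $\mathcal{G}' \neq \mathcal{G}^\star$ and smooth non-constant functions $\{f'_{kj}\}$ satisfies $Q = P$. Both DAGs are then Markov to $P$, and since $\mathcal{G}' \neq \mathcal{G}^\star$, there is an edge whose orientation is reversed between them. Selecting a suitable pair of adjacent nodes oriented oppositely in the two graphs, I would condition on the union of their relevant ancestral variable sets. The additive structure of \eqref{eq:DAG-model} ensures that, after this conditioning, the relationship between the two distinguished nodes is itself a bivariate additive noise model with Gaussian noise and a nonlinear link. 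Applying the bivariate result to this conditional model shows it cannot be reversed, contradicting the assumption that $\mathcal{G}'$ generates $P$ with the opposite orientation.

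The main obstacle is the conditioning step in the multivariate reduction: one must condition on exactly the right variable set so that the induced bivariate relationship remains a bona fide additive noise model---preserving both independence of the residual noise and nonlinearity of the conditional mean function---rather than collapsing into a more complicated conditional dependence. Making this precise requires verifying that conditioning on the union of the parent sets preserves additivity and keeps the conditional function nonlinear $P$-almost surely, which is exactly where the three-times-differentiability and non-constancy hypotheses are used and where the bulk of the technical care resides. Since the statement is classical, an equally valid route is simply to invoke \cite{Hoyer2008NonlinearCD,peters2014causal,buhlmann2014} directly.
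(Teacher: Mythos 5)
The paper does not prove this proposition at all: it is imported verbatim from the literature, with the proof delegated to Theorem~1 of \cite{Hoyer2008NonlinearCD}, Corollary~30 of \cite{peters2014causal}, and Lemma~1 of \cite{buhlmann2014}. Your proposal is therefore not ``the paper's route'' but a reconstruction of the cited argument, and as a sketch it has the right architecture: bivariate non-reversibility of the nonlinear Gaussian additive noise model, lifted to the DAG by a conditioning reduction. Two points deserve flagging. First, the dimension-counting step (``the ODE is linear of order three, so its solution space is at most three-dimensional'') by itself only yields \emph{generic} identifiability --- it excludes all but a three-parameter family of cause densities, whereas the proposition asserts identifiability for \emph{every} distribution generated by \eqref{eq:DAG-model}. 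The sharper fact actually used in the Gaussian-noise case is that $\nu''$ constant and $f''\not\equiv 0$ make the ODE incompatible with $\xi$ being the log-density of the cause induced by the upstream structural equations (equivalently, by the Zhang--Hyv\"arinen characterization, linear-Gaussian is the only reversible case under Gaussian noise); your phrasing ``I would argue that no consistent solution exists'' is where this work lives, and it is not automatic from dimension counting. Second, the multivariate reduction is more delicate than ``condition on the union of their relevant ancestral sets'': the argument in \cite{peters2014causal} (their Propositions~28--29) must first locate a reversed edge with a carefully chosen conditioning set $S$ so that, after conditioning, the residual noise remains independent of the putative cause and the conditional mean remains non-constant and nonlinear almost surely --- you correctly identify this as the main obstacle, but it is the bulk of the proof rather than a verification. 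Given that the statement is classical and the paper itself only cites it, invoking the references directly, as you note at the end, is the appropriate resolution; your sketch is a faithful, if necessarily incomplete, outline of what those references do.
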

Proposition \ref{prop:identifiability} reveals that the graph $\mathcal{G}^\star$ is guaranteed to be \textit{identifiable} from the distribution $P$ when non-zero functions $f^\star_{kj}$ are assumed to be non-linear and three-times differentiable. We adopt this assumption throughout the paper. This identifiability enables and underpins our goal: to recover the true DAG $\mathcal{G}^\star$.

\subsection{The function class}
We formalize the class of functions to which the functions $\{f^\star_{kj}\}$ belong. For any $(k, j) \in \mathcal{E}$, we consider a set of basis functions $\{b_{rkj}(\cdot)\}_{r=1}^\infty$, where each function is three times differentiable and has infinite support. Let
\begin{equation*}\label{eq:Fkj_functionclass}
\mathcal{F}_{kj} := \left\{
f_{kj}
: f_{kj}(\cdot) = \beta_{0kj} + \sum_{r=1}^\infty \beta_{rkj} b_{rkj}(\cdot) , \quad
\beta_{0kj} = - \mathbb{E}\left[ \sum_{r=1}^\infty \beta_{rkj} b_{rkj}(X_k) \right]
\right\},
\end{equation*}
where the choice of $\beta_{0kj}$ ensures zero mean of $f_{kj}(X_k)$.
We assume throughout that the non-linear true functions belong to these function classes: $f^\star_{kj} \in \mathcal{F}_{kj}$ for all $k$ and $j$. Hence,
$$
X_j = \sum_{k\in\mathrm{pa}(j)} \beta^\star_{0kj} + \sum_{r=1}^\infty \beta^\star_{rkj} b_{rkj}(X_k) + \epsilon_j.
$$
In practice, estimators are fitted using a finite collection of basis functions. With $n$ samples, we further consider the space 
\begin{equation*}\label{eq:fkj_functionclassRn}
\mathcal{F}_{kj,n} := \left\{ f_{kj,n} \in \mathcal{F}_{kj} : \beta_{rkj} = 0, \quad \forall r \geq R_n +1
\right\}  \subseteq \mathcal{F}_{kj}, 
\end{equation*}
where $R_n$ is a truncation parameter that increases with $n$. In other words, estimation is carried out over a subset of the full function class, using only the first $R_n$ basis functions, and increasing the sample size expands the search space over which estimation is performed. 
In Section~\ref{sec:theory}, in order to facilitate theoretical analysis, we focus on slightly more restricted function classes than $\mathcal{F}_{kj}$ and $\mathcal{F}_{kj,n}$ by imposing additional conditions on the basis functions and coefficients.

\subsection{An intractable group \texorpdfstring{$\ell_0$}{l0}-penalized maximum likelihood estimator}
We assume we have $n$ independent and identically distributed samples of $X$ generated according to model \eqref{eq:DAG-model} with $f^\star_{kj} \in \mathcal{F}_{kj}$. Denote by $\theta$ the parameter with additive functions and error variances: $\theta :=(\{f_{kj}\}_{(k,j) \in \mathcal{E}}, \{\sigma_j\}_{j \in \mathrm{V}})$. Accordingly, we define $\mathcal{G}(\theta) := (\mathrm{V},E(\theta))$ as the graph induced by $\theta$, where $(k,j)\in{E}(\theta)$ if and only if $f_{kj} \not\equiv 0$. Then the negative log-likelihood of model \eqref{eq:DAG-model} parameterized by $\theta$ is proportional to
\begin{equation}\label{eq:log-likelihood}
    \ell_n(\theta) := \sum_{j=1}^p \log \sigma_j^2 + \sum_{j=1}^p \frac{ \left\| X_j - \sum_{k \neq j } f_{kj}(X_k) \right\|_n^2 }{\sigma_j^2}.
\end{equation}
The $\|\cdot\|_n$ norm is defined in Section \ref{sec:notation}. Naturally, it is desirable to identify a model that not only fits the data well, as measured by a low negative log-likelihood, but also exhibits structural simplicity through a sparse DAG. To this end, we penalize the negative log-likelihood $\ell_n(\theta)$ by the number of edges in the DAG, resulting in an $\ell_0$-penalized maximum likelihood estimator with the regularization parameter $\lambda_n \geq 0$: 
\begin{equation}\label{eq:MLE-groupl0}
    \argmin_{\theta \in \Theta} \ell_n(\theta) + \lambda_n^2 \cdot  |\mathcal{G}(\theta)|.
\end{equation}
Here $\Theta = \{\theta: f_{kj} \in \mathcal{F}_{kj,n},\ \sigma_j > 0, \ \forall (k,j) \in \mathcal{E},\ \text{$\mathcal{G}(\theta)$ is a DAG} \}$ is a non-convex constraint set. An equivalent formulation of the model is via the parameter $\tilde{\theta}:=(\{\beta_{rkj}\}_{(k,j) \in \mathcal{E},\ r\in [R_n]} ,\ \{\sigma_j\}_{j\in \mathrm{V}} )$, where $f_{kj}(\cdot) = \beta_{0kj}+\sum_{r=1}^{R_n} \beta_{rkj} b_{rkj}(\cdot) $. Under this parameterization, the penalty term $|\mathcal{G}(\theta)|$ is equivalent to a group $\ell_0$-penalty $\sum_{(k,j)\in \mathcal{E}} \mathds{1} \{  \|\beta_{kj}\|_2 \neq 0\}$, where $\beta_{kj} = (\beta_{1kj}, \dots, \beta_{R_nkj}) \in \mathbb{R}^{R_n}$. This leads to the following equivalent formulation of \eqref{eq:MLE-groupl0}
\begin{equation}\label{eq:MLE}
    \widehat{\theta}_n \in \argmin_{\tilde{\theta}\in \tilde{\Theta}} \ell_n(\tilde{\theta}) + \lambda_n^2  \sum_{(k,j) \in \mathcal{E}} \mathds{1} \left\{ \|\beta_{kj}\|_2 \neq 0 \right\}.
\end{equation}
Here, $\tilde{\Theta} = \{\tilde{\theta}: \sigma_j > 0,\ \forall k \in \mathrm{V},\ \mathcal{G}(\tilde{\theta}) \text{ is a DAG}\}$, where $\mathcal{G}(\tilde{\theta}) = (\mathrm{V},E(\tilde{\theta}) )$ with $(k,j) \in E(\tilde{\theta})$ if $\|\beta_{kj}\|_2\neq 0$. 

The optimization problem \eqref{eq:MLE-groupl0} can be re-written as
$
\argmin_{\theta\in \Theta} \sum_{j=1}^p \log \omega_j + \lambda_n^2 \cdot |\mathcal{G}(\theta)|
$,
where $\omega_j = \| X_j - \sum_{k \neq j } f_{kj}(X_k) \|_n^2$. Disregarding the penalty term, this formulation closely resembles the maximum likelihood estimator proposed by \citet{buhlmann2014}, which seeks the topological ordering that minimizes $\sum_{j=1}^p \log \omega_j$. However, due to the enormous permutation space, the  search procedure in \textit{IncEdge} is carried out heuristically, greedily adding the edge that yields the largest increase in log-likelihood at each step. Consequently, it offers no optimality guarantees.
Furthermore, this maximum likelihood estimation step alone does not produce a sparse graph; an additional pruning step is required to obtain sparsity given an estimated ordering. Such a multi-step procedure can accumulate errors. In contrast, our proposed $\ell_0$-penalized estimation process integrates variance estimation, permutation estimation, and graph recovery in a joint optimization procedure. We present methods for solving the optimal solution of \eqref{eq:MLE}, thereby improving the reliability of the recovered graph structure. The experiments in Section \ref{sec:experiment:comparison} empirically demonstrate the strong graph-recovery performance of our method compared with the method in \cite{buhlmann2014}.

The $\ell_0$-penalty has been widely used for causal discovery problem \citep{chickering2002optimal, silander2006simple, hauser2012characterization, vandegeer2013, xu2025integer}. When multiple basis functions are used to represent a single causal effect between two nodes, the group $\ell_0$-penalty arises as a natural extension. The group $\ell_0$-penalty directly counts the number of nonzero edges in the graph and thus encourages the minimal edge set needed to characterize the distribution of $X$. This would not be true when choosing a group $\ell_1$-penalty: $\sum_{(k,j)\in \mathcal{E}} \|\beta_{kj}\|_2$. Although an $\ell_1$-penalty can be preferred in general for its convexity, in our setting, the constraint set $\tilde{\Theta}$ still remains non-convex. Furthermore, as we will see in Section~\ref{sec:convex:constraint}, when formulating the constraint set $\tilde{\Theta}$ via integer programming, the group $\ell_0$-penalty admits a linear representation in the integer variables, and can thus be incorporated naturally. Figure~\ref{fig:l0l1} presents the regularization paths for when the regularization term is a group $\ell_0$-penalty---i.e., our formulation in \eqref{eq:MLE-groupl0}---and when the regularization term is swapped with a group $\ell_1$-penalty. The figure highlights that group $\ell_0$-penalty empirically achieves superior graph recovery. 

\begin{figure}[ht]
    \centering
    \includegraphics[width=0.8\textwidth]{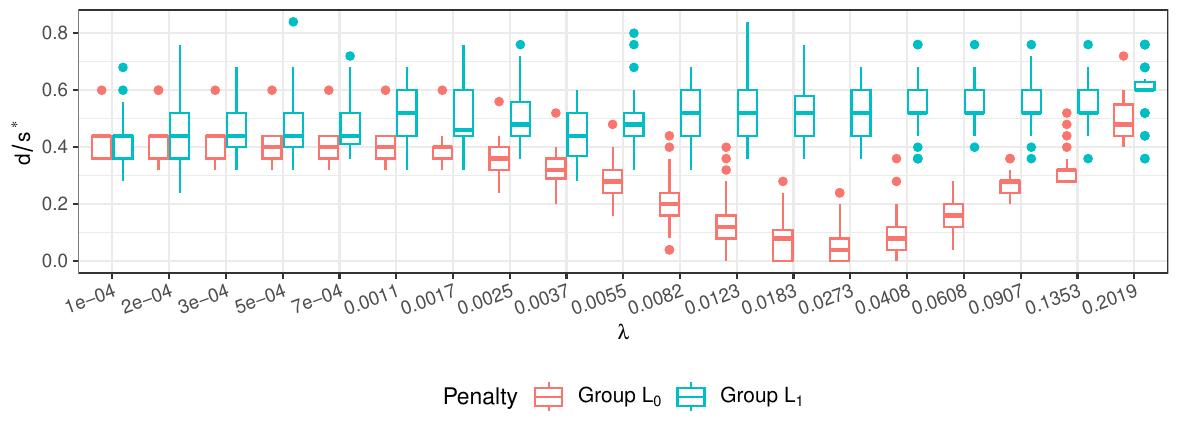}
    \caption{\small The performance of the group $\ell_0$-estimator and group $\ell_1$-estimator across a range of tuning parameter values $\lambda_n$. The performance is measured by $d/s^\star$, where $d$ indicates the number of incorrectly recovered and missed edges, and $s^\star$ indicates the number of true edges. A smaller $d/s^\star$ indicates better accuracy. The implementation details are provided in Appendix~\ref{sec:appendix:setup-l0l1}.
    }
    \label{fig:l0l1}
\end{figure}

Finally, as stated, both the objective function and the constraint set of \eqref{eq:MLE-groupl0} are non-convex, rendering this optimization problem generally intractable. 
In the next section, we reformulate our problem using a convex MIP framework to enable efficient computation.

\vspace{0.2in}
\section{Convex Mixed Integer Programming Formulation}
\label{sec:mip}
\subsection{An efficient log-likelihood using a sufficient statistic}

The negative log-likelihood function $\ell_n(\theta)$, as formulated in \eqref{eq:log-likelihood}, can be cumbersome and computationally intensive to evaluate. We provide two main reasons. First, $\ell_n(\theta)$ is not necessarily a convex function of $\theta$, which may hinder reaching the optimal solution and can substantially slow down standard optimization solvers, such as \texttt{Gurobi}. Second, the formulation \eqref{eq:log-likelihood} involves computing the empirical norm $\|\cdot\|_n$, which necessitates iterating over the entire dataset each time the objective function is evaluated.

We re-parameterize and reformulate the negative log-likelihood function as follows. Recall that for each $f_{kj}\in\mathcal{F}_{kj,n}$, the function $f_{kj}(\cdot) = \beta_{0kj} + \sum_{r=1}^{R_n} \beta_{rkj} b_{rkj}(\cdot)$ is a linear combination of $R_n$ basis functions. Without loss of generality, we further assume that the same set of basis functions $\{b_r(\cdot)\}_{r=1}^{R_n}$ is used across all edges $(k, j) \in \mathcal{E}$, so that each $b_{rkj}(\cdot)$ reduces to $b_r(\cdot)$. Let $Z\in\mathbb{R}^{pR_n + p+1}$ be an extended basis random vector:
\begin{equation*}
    Z:=
    \begin{bmatrix}
        X_1, \ldots, X_p,  & 1, & b_1(X_1), \ldots, b_1(X_p), & \ldots, & b_{R_n}(X_1), \ldots, b_{R_n}(X_p)
    \end{bmatrix}^\top.
\end{equation*}
We then define a scaled coefficient matrix $\Gamma\in\mathbb{R}^{p \times (pR_n + p+1)}$ accordingly:
\begin{equation}\label{eq:gamma-formulation}
    \Gamma:= \diag(\sigma_1, \ldots, \sigma_p)^{-1} \cdot
    \begin{bmatrix}
        I_{p\times p} & -B^{(0)}_{p\times 1} & -B^{(1)}_{p \times p} & \cdots & -B^{(R_n)}_{p\times p}
    \end{bmatrix},
\end{equation}
where $I_{p\times p}$ is an identity matrix, the sub-matrix $B_{p\times p}^{(r)}$, ($r = 1, \dots, R_n$) encapsulates the coefficients associated with the sub-vector $[b_r(X_1), \dots, b_r(X_p)]$ in $Z$, and sub-matrix $B^{(0)}_{p\times 1}$ encapsulates the intercept coefficients. Specifically, for $r=1, \dots, R_n$, the $j$-th row and $k$-th column of $B^{(r)}_{p \times p}$ is equal to $\beta_{rkj}$, with $\beta_{rjj} = 0$ for all $j\in \mathrm{V}$. The $j$-th row of $B^{(0)}_{p\times 1}$ is equal to $\sum_{k=1}^p \beta_{0kj}$. 

There exists a bijective mapping between the parameters $\Gamma$ and $\theta\in\Theta$. Specifically, for any given $\Gamma$, the map to $\theta$, $\theta = \mu(\Gamma)$, is defined as $f_{kj}(\cdot) = [ \sum_{r=1}^{R_n} \Gamma_{j, pr + 1+k} \cdot (\mathbb{E}[b_r(X_k)]-b_r(\cdot)) ] \cdot \Gamma_{jj}^{-1}$ and $\sigma_j = \Gamma_{jj}^{-1}$.
Particularly, $\sum_{k=1}^p f_{kj}(\cdot)  = -[ \Gamma_{j,p+1} + \sum_{k=1}^p \sum_{r=1}^{R_n} \Gamma_{j,pr + 1+k} \cdot b_r(\cdot) ] \cdot \Gamma_{jj}^{-1}$.
Under this mapping, 
we establish the following equivalence.

\begin{proposition}\label{prop:log-lik-equiv}
    Let $\{ Z^{(i)} \}_{i=1}^n$ be $n$ independent and identically distributed samples from the distribution of $Z$. Let $\widehat{\Sigma}_n := n^{-1} \sum_{i=1}^n Z^{(i)} Z^{(i) \top}$ be the Gram matrix. Then, the negative log-likelihood function can be equivalently formulated as:
    \begin{equation} \label{eq:log-lik-equiv}
    \ell_n(\theta) = \ell_n(\mu(\Gamma))  := \sum_{j=1}^p -2 \log \Gamma_{jj} + \tr(\Gamma^\top \Gamma \widehat{\Sigma}_n).
    \end{equation}
\end{proposition}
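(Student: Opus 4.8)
The plan is to verify the claimed identity term by term, splitting $\ell_n(\theta)$ into its log-variance part $\sum_j \log\sigma_j^2$ and its residual part $\sum_j \|X_j - \sum_{k\neq j} f_{kj}(X_k)\|_n^2/\sigma_j^2$. The first part is immediate: the mapping $\mu$ sets $\sigma_j = \Gamma_{jj}^{-1}$, so $\log\sigma_j^2 = -2\log\Gamma_{jj}$ and $\sum_j \log\sigma_j^2 = \sum_j -2\log\Gamma_{jj}$ with no further work. The entire content of the proposition therefore reduces to showing that the residual part equals $\tr(\Gamma^\top\Gamma\,\widehat{\Sigma}_n)$. The organizing observation is that $\Gamma$ is built precisely so that, applied to the extended basis vector $Z$, it returns the vector of \emph{scaled residuals}: I expect the $j$-th coordinate of $\Gamma Z$ to recover $\sigma_j^{-1}\big(X_j - \sum_{k\neq j} f_{kj}(X_k)\big)$.

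The key step is to confirm this coordinate-wise on each sample $Z^{(i)}$. Reading the $j$-th row of $\Gamma$ against $Z^{(i)}$, the identity block contributes $\sigma_j^{-1}X_j^{(i)}$, the intercept column contributes $-\sigma_j^{-1}B^{(0)}_j$, and each block $-B^{(r)}$ contributes $-\sigma_j^{-1}\sum_k \beta_{rkj}\,b_r(X_k^{(i)})$. Collecting these and using $B^{(0)}_j = \sum_k\beta_{0kj}$ together with $f_{kj}(x) = \beta_{0kj} + \sum_r \beta_{rkj}b_r(x)$ should give
\begin{equation*}
(\Gamma Z^{(i)})_j = \sigma_j^{-1}\Big(X_j^{(i)} - \sum_{k=1}^p f_{kj}(X_k^{(i)})\Big) = \sigma_j^{-1}\Big(X_j^{(i)} - \sum_{k\neq j} f_{kj}(X_k^{(i)})\Big),
\end{equation*}
where the last equality uses $\beta_{rjj}=0$, so $f_{jj}\equiv 0$. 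Squaring and averaging over the $n$ samples identifies the $j$-th residual term $\|X_j - \sum_{k\neq j} f_{kj}(X_k)\|_n^2/\sigma_j^2$ with $n^{-1}\sum_i (\Gamma Z^{(i)})_j^2$; summing over $j$ then yields $n^{-1}\sum_i \|\Gamma Z^{(i)}\|_2^2$.

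To close the argument I would use a routine trace manipulation: $\|\Gamma Z^{(i)}\|_2^2 = Z^{(i)\top}\Gamma^\top\Gamma Z^{(i)} = \tr\big(\Gamma^\top\Gamma\, Z^{(i)}Z^{(i)\top}\big)$, and pulling the empirical average inside the trace gives $\tr(\Gamma^\top\Gamma\,\widehat{\Sigma}_n)$ by the definition $\widehat{\Sigma}_n = n^{-1}\sum_i Z^{(i)}Z^{(i)\top}$. The proof is essentially bookkeeping rather than analysis, and the main obstacle I anticipate is purely notational: aligning the column indexing of $\Gamma$ with the coordinates of $Z$ --- correctly matching the intercept column $p+1$ and the basis blocks at columns $pr+1+k$ --- and checking that the zero-mean convention encoded in $B^{(0)}$ is exactly what lets the constant and basis contributions reassemble into the centered functions $f_{kj}$.
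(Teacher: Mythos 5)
Your proposal is correct and follows essentially the same route as the paper: both arguments reduce to checking that the $j$-th coordinate of $\Gamma Z$ equals the scaled residual $\sigma_j^{-1}\bigl(X_j-\sum_{k\neq j}f_{kj}(X_k)\bigr)$ (the paper writes this by substituting the inverse map $\mu$ into $\ell_n$, you read the rows of $\Gamma$ against $Z$ directly, which is the same computation in the opposite direction), and both finish with the identity $n^{-1}\sum_i\|\Gamma Z^{(i)}\|_2^2=\tr(\Gamma^\top\Gamma\,\widehat{\Sigma}_n)$. The bookkeeping you flag (intercept column $p+1$, basis blocks at $pr+1+k$, and $f_{jj}\equiv 0$ via $\beta_{rjj}=0$ together with the zero-mean intercept convention) is exactly what the paper's displayed computation carries out.
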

The proof of Proposition~\ref{prop:log-lik-equiv} is given in Appendix~\ref{sec:appendix:equivalence}. First, we observe that the equivalent formulation in  \eqref{eq:log-lik-equiv} is a convex function of $\Gamma$. Second, given the Gram matrix $\widehat{\Sigma}_n$, the trace term $\tr(\Gamma^\top \Gamma \widehat{\Sigma}_n)$ does not require iteration over all samples. It has computational complexity on the order of $R_n^2 p^3$, which is significantly smaller---when the sample size $n$ is sufficiently large---than $nR_np^2$, the complexity of evaluating $\sum_{j=1}^p \| X_j - \sum_{k\neq j} f_{kj}(X_k) \|_n^2 / \sigma_j^2$. The two observations above contribute to accelerating both the objective function evaluation and the optimization procedure itself. These observations are corroborated by Figure~\ref{fig:suff_stat}. Lastly, we point out that the result above can be straightforwardly extended to the case where basis functions vary across $(k, j)$.

\begin{figure}[ht]
    \centering
    \includegraphics[width=0.85\textwidth]{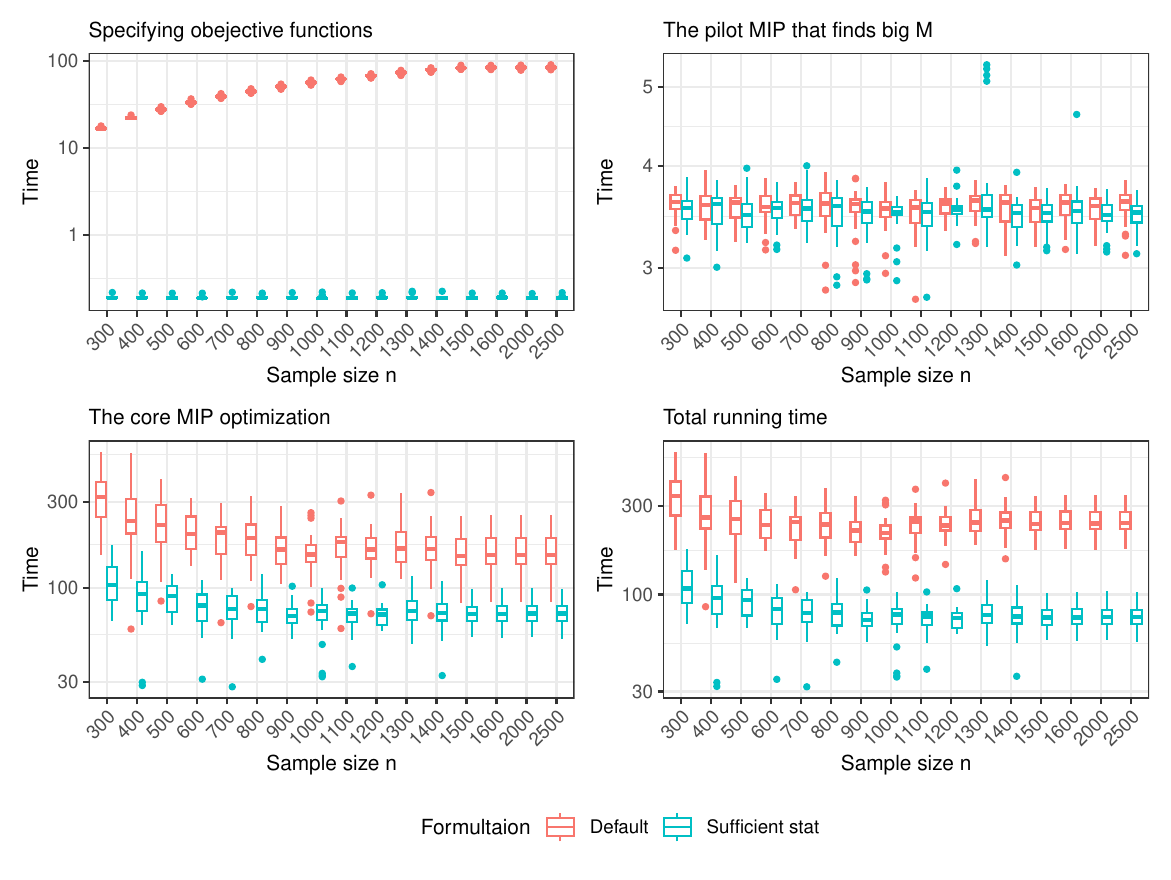}
    \caption{\small The running time of different components in the MIP \eqref{eq:MIP} solved using \texttt{Gurobi}. All running times are reported in seconds, and the vertical axis is displayed on a log scale. Top left: running time for specifying the objective function involving decision variables before optimization. Top right: running time for the pilot optimization without any constraints to determine the big $M$. Bottom left: running time for the core mixed-integer optimization that finds the solution. Bottom right: total running time.
    Each box summarizes results from 30 independent trials. The implementation details are provided in Appendix~\ref{sec:appendix:setup-suff_stat}.
    }
    \label{fig:suff_stat}
\end{figure}

\subsection{A convex mixed-integer constraint set}\label{sec:convex:constraint}
Handling the acyclicity constraint, that $\mathcal{G}(\theta)$ forms a DAG, is a key computational challenge in the $\ell_0$-penalized maximum likelihood estimator \eqref{eq:MLE-groupl0}. Many integer programming approaches have been developed to encode this constraint \citep{grotschel1985acyclic, park2017bayesian, manzour2021integer}. Among these formulations, we adopt the \textit{layered network formulation} due to its demonstrated effectiveness in continuous-variable scenarios \citep{manzour2021integer}. This approach has also been applied to a closely related problem under linear structural equation models \citep{xu2025integer}.

We introduce two sets of decision variables: (i) A set of binary variables $\{g_{kj} \in\{0,1\} \}_{(k,j) \in \mathcal{E}}$, where $g_{kj} = 1$ represents the presence of the edge from node $k$ to $j$ and $g_{kj} = 0$ otherwise; (ii) A set of continuous variables $\{ \psi_j \in [1,p] \}_{j \in \mathrm{V}}$, where $\psi_j$ represents the \textit{layer value} of each node. The layer values impose a topological ordering on the nodes, such that any ancestor node must have a strictly higher layer value than its descendants. Accordingly, we have:
\begin{subequations} \label{eq:MIP}
\begin{align} 
    \min_{ \substack{\Gamma \in \mathbb{R}^{p \times (pR_n + p + 1)} \\ 
    g_{kj} \in \{0,1\},\ \forall {(k,j)\in \mathcal{E}} \\
    \psi \in [1,p]^p } } \quad & \sum_{j=1}^p -2 \log \Gamma_{jj} + \tr(\Gamma^\top \Gamma \widehat{\Sigma}_n) + \lambda_n^2 \cdot \sum_{(k,j) \in \mathcal{E}} g_{kj} \label{eq:MIP:a}  \\
    \mathrm{s.t.} \quad & -M g_{kj} \leq \Gamma_{j, pr+1 +k} \leq M g_{kj}, \quad  \forall (k,j) \in \mathcal{E}, \ r\in \{1, \dots, R_n\}, \label{eq:MIP:b} \\ 
    & - M \sum_{k=1}^p g_{kj} \leq \Gamma_{j,p+1} \leq M \sum_{k=1}^p g_{kj}, \quad \forall j\in \mathrm{V} , \label{eq:MIP:b2} \\
    & 0 < \Gamma_{jj} \leq M, \quad \forall j \in [p], \label{eq:MIP:c} \\  
    & 1 -  p + p g_{kj} \leq \psi _j - \psi _k, \quad \forall (k,j) \in \mathcal{E}.  \label{eq:MIP:d}
\end{align}
\end{subequations}
The constant $M > 0$ is a pre-specified upper bound that exceeds all possible values of $|\Gamma_{ij}|$ for $i \in [p]$ and $j \in [pR_n + p + 1]$. It is used to linearize non-linear terms, resulting in the \emph{big-$M$ constraints} given in \eqref{eq:MIP:b}--\eqref{eq:MIP:c}. A smaller value of $M$ can improve computational efficiency. In practice, one may first solve the optimization problem without the big-$M$ constraints to obtain a preliminary estimate $\widehat{\Gamma}$, and then set $M = 2 \max_{i\in [p],\ j \in[pR_n +p+1] } |\widehat{\Gamma}_{ij}|$  \citep{park2017bayesian, kucukyavuz2023consistent, xu2025integer}.

Any feasible solution to the optimization problem \eqref{eq:MIP} necessarily forms a DAG. We establish this claim as follows. First, if there exists a path from node $k$ to $j$, then the layer value assigned to node $j$ must be higher than that of node $k$. To see this, consider a path $k \rightarrow m \rightarrow \cdots \ \rightarrow j$; by constraint \eqref{eq:MIP:d}, the layer values satisfy $\psi_j > \psi_m > \psi_k$. This property automatically ensures that no cycle can exist between $k$ and $j$, otherwise it would imply both $\psi_k < \psi_j$ and $\psi_k > \psi_j$ simultaneously. Second, if there is no edge between nodes $k$ and $j$, which is encoded by $g_{kj} = 0$, the corresponding coefficients $\beta_{rkj}$ must be zero for all $r \in \{1, \dots, R_n\}$; this is guaranteed by constraint \eqref{eq:MIP:b}. Moreover, if there are no parents for node $j$, which is encoded by $\sum_{k=1}^pg_{kj} =0$, the corresponding intercept coefficient $\sum_{k=1}^p\beta_{0kj}$ must also be zero; this is guaranteed by constraint \eqref{eq:MIP:b2}. Conversely, when $g_{kj} = 1$, encoding the presence of an edge between node $k$ and $j$, the associated coefficients $\beta_{rkj}$ are generally not exactly zero since additional degrees of freedom contribute to a better model fit. Finally, the penalty term, which is the number of edges in the DAG, is naturally encoded as $\sum_{(k,j)\in \mathcal{E}} g_{kj}$.

We recommend choosing the tuning parameter $\lambda_n$ that results in the smallest the Bayesian information criterion (BIC) score \citep{schwarz1978estimating}. Specifically, if $\widehat{\Gamma}$ and $\widehat{g}$ are the estimates of \eqref{eq:MIP}, the BIC score is $\sum_{j=1}^p -2\log\widehat{\Gamma}_{jj}+\tr(\widehat{\Gamma}^\top \widehat{\Gamma} \widehat{\Sigma}_n) + (p+ \sum_{(k,j) \in \mathcal{E}} \widehat{g}_{kj})\log(n)/n$. 

When the basis functions $b_{jk}(\cdot)$ are linear, \eqref{eq:MIP} reduces to the MIP proposed in \cite{xu2025integer}. Thus, our proposed estimator is a generalization of the existing framework to non-linear settings.  
\subsection{An equal-variance formulation} 
When the noise variances are assumed to be equal, meaning that $\sigma^\star_j \equiv \sigma^\star > 0$ for all $j \in \mathrm{V}$, the optimization problem \eqref{eq:MIP} can be substantially simplified. Let
$$
B = 
\begin{bmatrix}
    I_{p\times p} & -B^{(0)}_{p\times 1} & -B^{(1)}_{p \times p} & \cdots & -B^{(R_n)}_{p\times p}
\end{bmatrix} \in \mathbb{R}^{p \times (pR_n + p + 1)}.
$$
Then, we have $B = \sigma^\star \Gamma$ under the equal-variance condition, and the objective function \eqref{eq:MIP:a} can be written as
$
    (\sigma^\star)^{-2} \cdot  [ 2p \sigma^{\star2}\log \sigma^\star + \tr(B^\top B \widehat{\Sigma}_n) + \widetilde{\lambda}_n^2 \cdot \sum_{(k,j) \in \mathcal{E}} g_{kj} ]
$, where $\widetilde{\lambda}_n^2 = \lambda_n^2 \sigma^{\star2}$. Minimizing this resulting objective function is equivalent to
\begin{subequations} \label{eq:MIP-equal}
\begin{align} 
    \min_{ \substack{B \in \mathbb{R}^{p \times (pR_n + p + 1)} \\ 
    g_{kj} \in \{0,1\},\ \forall {(k,j)\in \mathcal{E}} \\
    \psi \in [1,p]^p } } \quad &  \tr(B^\top B \widehat{\Sigma}_n) + \widetilde{\lambda}_n^2 \cdot \sum_{(k,j) \in \mathcal{E}} g_{kj} \label{eq:MIP-equal:a}  \\
    \mathrm{s.t.} \quad & -M g_{kj} \leq B_{j, pr+1 +k} \leq M g_{kj}, \quad  \forall (k,j) \in \mathcal{E}, \ r\in \{1, \dots, R_n\}, \label{eq:MIP-equal:b} \\ 
    & - M \sum_{k=1}^p g_{kj} \leq B_{j,p+1} \leq M \sum_{k=1}^p g_{kj}, \quad \forall j\in \mathrm{V} , \label{eq:MIP-equal:b2} \\ 
    & 1 -  p + p g_{kj} \leq \psi _j - \psi _k, \quad \forall (k,j) \in \mathcal{E}.  \label{eq:MIP-equal:d}
\end{align}
\end{subequations}
Suppose that $\widehat{B}$ is the estimated coefficient matrix obtained from \eqref{eq:MIP-equal}, then the variance can be estimated as $\widehat{\sigma}^2 = \| \widehat{B}Z \|_n^2$. This formulation removes the logarithmic term and reduces the degrees of freedom by $p$ compared with \eqref{eq:MIP}. Consequently, incorporating the equal-variance assumption can substantially improve computational efficiency. Similar to \eqref{eq:MIP}, in practice, we choose $\widetilde{\lambda}_n$ that yields the smallest the BIC score. Specifically, if $\widehat{B}$ and $\widehat{g}$ are the estimates of \eqref{eq:MIP-equal}, the BIC score is $\tr(\widehat{B}^\top \widehat{B} \widehat{\Sigma}_n) + \sum_{(k,j) \in \mathcal{E}} \widehat{g}_{kj} \cdot \log(n)/n$. 

Since we do not usually know whether the data is generated according to SEM with equal variances or unequal variances, in practice, we implement both estimators (after choosing their regularization parameters via BIC scores), and choose the model with a smaller BIC score among the two. 

\vspace{0.2in}
\section{Computational Speedups: early stopping, and incorporating partial order and edge structures}\label{sec:speed-up}

In this section, we introduce several strategies for improving the computational efficiency of the MIP \eqref{eq:MIP}.

\subsection{Super-structure}\label{sec:speed-up:superstruc} For the DAG $\mathcal{G}^\star = (\mathrm{V}, E^\star)$, a super-structure $E^\circ \subseteq \mathcal{E}$ is an edge set containing the true edge set $E^\star$: $E^\star \subseteq E^\circ$. The super-structure $E^\circ$ may permit bidirectional edges and cycles, though self-loops are excluded. The idea of imposing a super-structure has been adopted by various related methods \citep{chickering2002optimal, solus2021consistency, kucukyavuz2023consistent, xu2025integer}. We impose this super-structure following the approach of \cite{xu2025integer}: set $g_{kj} = 0$ and $\Gamma_{j,pr + 1 + k} = 0$ for all $(k,j) \in \mathcal{E} \setminus E^\circ$ and $r \in\{1, \dots, R_n\}$. That is, the search space of decision variables is restricted to edges within $E^\circ$.

Enforcing a super-structure offers multiple advantages. First, in many applications, prior information about the presence or absence of specific edges is available, for example, from domain expertise or some widely accepted knowledge. Such information can be naturally incorporated into the super-structure. Second, as shown above, edges outside the super-structure $E^\circ$ are excluded from consideration. Consequently, when $E^\circ$ is sparse, the search space is greatly reduced, which in turn can substantially accelerate the optimization algorithm. Lastly, the existence of a sparse super-structure greatly facilitates the theoretical analysis of our proposed estimator \eqref{eq:MIP}. 
As we will see in Section \ref{sec:theory}, we assume access to a super-structure $E^\circ$ within which the number of parents for each node is uniformly bounded by a constant. 

In addition to leveraging prior knowledge when available, we propose two approaches for estimating a super-structure. The goal here is to identify one that is as sparse as possible, while still capturing the true edge set $E^\star$. The first approach is neighborhood selection following the general idea in \cite{meier2009high, voorman2014graph}, where we perform variable selection in an additive model of $X_j$ on the set of all other variables $X_{-j} = \{X_k: k\neq j\}$, via methods such as the group Lasso or the sparse additive model~\citep{ravikumar2009sparse, haris2022generalized}. The tuning parameter may be selected using cross-validation, but alternative strategies that incorporate additional information are also possible. In particular, one may first apply a baseline method~\citep[e.g.,][]{buhlmann2014, peters2014causal, zheng2018dags, gao2020polynomial}, and use the cardinality of the resulting graph as a reference. The tuning parameter for our neighborhood selection can then be gradually increased until the selected super-structure reaches a size, for instance, no more than twice of this reference. Note that the neighborhood selection approach obtains an undirected graph, serving as an estimate for the true moral graph of $\mathcal{G}^\star$. (The true moral graph is the undirected graph obtained from $\mathcal{G}^\star$ by adding edges between any two nodes that share a common child and then making all edges undirected.) The second approach also leverages additional information from a related method. Specifically, we generate $B$ bootstrap samples and apply the baseline method to each sample. An edge is included in the super-structure if its selection proportion across the $B$ samples exceeds a prescribed threshold. We denote this threshold by $\tau^\circ$. Similar to the tuning parameter in neighborhood selection, one can create a reference for the cardinality of the graph, and then gradually increase the selection proportion threshold $\tau^\circ$ until the size of the resulting super-structure is no larger than this reference. In practice, we use the union from both approaches as $E^\circ$.

\subsection{Partial order and stable edges} \label{sec:speed-up:partial-stable}
The topological ordering, or sometimes referred to as the permutation of nodes, is a key concept in DAG estimation. Several approaches focus on estimating the node ordering as the primary step, and then apply pruning procedures to obtain a fitted graph~\citep{buhlmann2014, peters2014causal, chen2019causal, gao2020polynomial}. Although this is not the central idea pursued in our work, some prior information about the ordering of nodes in $E^\circ$ can nevertheless be highly beneficial \citep{shojaie2024learning}. If it is known a priori that node $j$ cannot be a descendant of node $k$, then this relation can be encoded by imposing the constraint $\psi_k < \psi_j$ in our MIP formulation \eqref{eq:MIP}. From here, we define the preliminary \textit{partial order set}, as a collection of edges in $E^\circ$ whose ordering is predetermined: $E^p := \{(k,j) \in E^\circ: \psi_k < \psi_j \}$.

The constraint $\psi_k < \psi_j$ excludes any path from node $j$ to $k$. However, the presence of edge from $k$ to $j$ remains undetermined: either $g_{kj} = 0$ or $g_{kj} = 1$. If the presence of an edge from $k$ to $j$ is further known from prior knowledge, then the constraint $g_{kj} = 1$ can be directly imposed. Accordingly, we define the preliminary \textit{stable set} as the collection of edges in $E^p$ that are predetermined to exist: $E^s := \{ (k,j) \in E^p: g_{kj} = 1 \}$. We view the stable set as a more ``confident'' set than the partial order set, since imposing it directly eliminates integer decision variables.

\begin{figure}[ht]
    \centering
    \includegraphics[width=\textwidth]{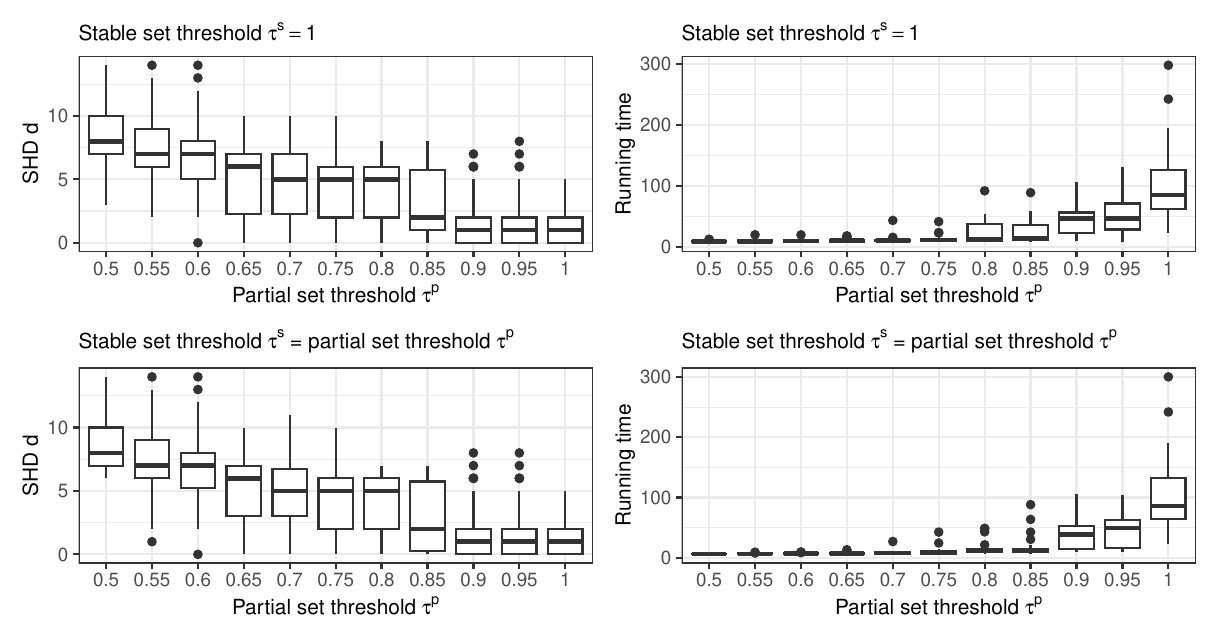}
    \caption{\small The performance of the estimator obtained from solving \eqref{eq:MIP} across a range of threshold values. The metric $d$ is the structural Hamming distance (SHD) which quantifies the number of incorrectly recovered and missed edges, with smaller values indicating better accuracy. All running times are reported in seconds. Top two panels: only $\tau^p$ varies while $\tau^s$ is fixed at 1. Bottom two panels: both thresholds vary simultaneously with $\tau^p = \tau^s$.
    Each box summarizes results from 30 independent trials. The implementation details are provided in Appendix~\ref{sec:appendix:setup-thres}.
    }
    \label{fig:bootstrap_thres}
\end{figure}

Enforcing the stable set and partial order set can substantially reduce the search space, thereby improving optimization efficiency. When prior information is unavailable or insufficient, a preliminary partial order set and stable set can be constructed using bootstrap procedures. Similar to the second approach of the super-structure estimation (see Section \ref{sec:speed-up:superstruc}), we apply a baseline method to $B$ bootstrap samples, and an edge is included in the partial order set or stable set if its selection proportion exceeds a prescribed threshold. We denote the thresholds for the partial order set and stable set by $\tau^p$ and $\tau^s$, respectively. To ensure the nesting property $E^s \subseteq E^p \subseteq E^\circ$, the thresholds are chosen to satisfy $\tau^\circ \leq \tau^p \leq \tau^s$. Finally, note that the selected partial order set and stable set may contain cycles. To prevent this, edges are added sequentially in decreasing order of their selection proportion; any edge that would introduce a cycle with previously added edges is omitted.

Figure~\ref{fig:bootstrap_thres} compares the performance of our proposed group $\ell_0$-estimator, where partial order and stable sets under different threshold levels are imposed. As the threshold increases, the partial order and stable set shrink in size. With fewer constraints imposed, the estimated graph selects fewer incorrect edges but requires more time to fit. Thus, a balance between accuracy and efficiency is needed, and in practice a good default can be $\tau^p = 0.95$ and $\tau^s = 1$.

\subsection{Early stopping} \label{sec:speed-up:early-stop}

We employ a branch-and-bound procedure to solve the convex MIP \eqref{eq:MIP}. In this process, we iteratively update the lower and upper bounds of the objective function value. At the same time, we track the optimality gap of a solution, defined as the difference between these bounds for that solution. A solution is deemed optimal once its corresponding optimality gap reaches zero. At each iteration, we relax the integer constraints on $g$ and solve the resulting, much simpler convex optimization problem. The obtained optimal objective value then serves as a lower bound. If any variable $g$ takes a fractional value in this solution, the problem is then branched into two subproblems by enforcing either a smaller or larger integer value for $g$. A feasible solution obtained through this process provides an upper bound.

\begin{figure}[ht]
    \centering
    \includegraphics[width=\textwidth]{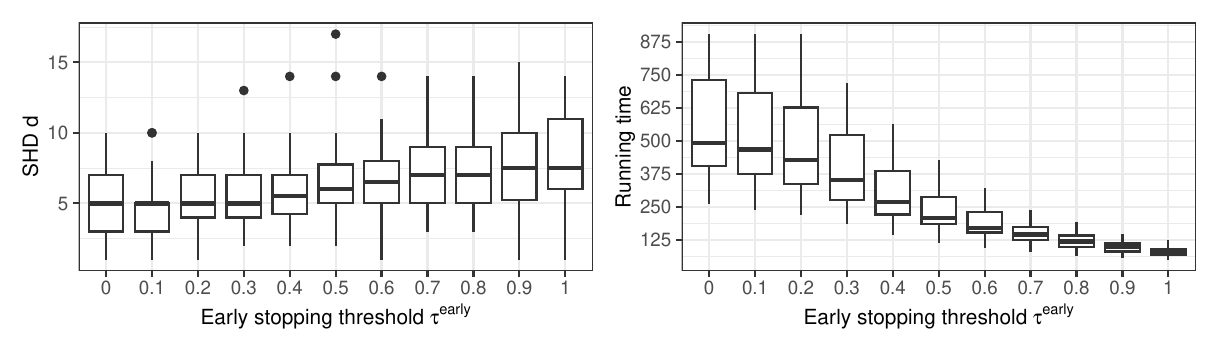}
    \caption{\small The performance of the estimator obtained from solving \eqref{eq:MIP} across a range of $\tau^{\rm early}$ values. Left panel: structural Hamming distance (SHD) $d$, which quantifies the number of incorrectly recovered and missed edges, with smaller values indicating better accuracy. Right panel: total running time in seconds. Each box summarizes results from 30 independent trials. The implementation details are provided in Appendix~\ref{sec:appendix:setup-early}.
    }
    \label{fig:early_stopping}
\end{figure}

Rather than waiting for the lower and upper bounds to meet, the algorithm can be terminated early once the optimality gap falls below a specified threshold $\tau^{\rm early}$. This practice is referred to as \textit{early stopping}, and it is commonly used in practice to speed up the algorithm~\citep{chen2023unified, kucukyavuz2023consistent, xu2025integer}. One may wish that the sub-optimal solution obtained through early stopping, despite its non-zero optimality gap, still behaves similarly to the optimal solution. As shown in Figure \ref{fig:early_stopping}, as $\tau^{\rm early}$ increases from zero, the performance of the estimator deteriorates only slightly, while the running time decreases substantially. Furthermore, it would be desirable if this sub-optimal solution achieved perfect recovery of the true DAG. In Section \ref{sec:theory}, we provide a theoretical justification for the early-stopping solution when the stopping threshold is chosen below a certain order.

\vspace{0.2in}
\section{Statistical Guarantees}
\label{sec:theory}
We provide statistical guarantees for our estimator \eqref{eq:MIP}. Section \ref{sec:theory:setup} introduces the necessary notation as well as Assumptions \ref{assump:model-sparsity}--\ref{assump:compatibility}, which are required for all subsequent theorems. In Section \ref{sec:theory:permutation}, we present Theorem \ref{thm:correct-permutation}, which establishes that our estimator achieves a correct permutation with high probability. Note that the additional Assumption \ref{assump:separation} introduced in this section is also required for all theorems, although it is most naturally stated in the context of permutations. Section \ref{sec:theory:noise-convergence} presents Theorem \ref{thm:var-converge}, which shows that our estimated variances converge to the true variances with high probability. This theorem requires no further assumptions. Finally, Section \ref{sec:theory:recovery} introduces Theorem \ref{thm:recovery}, which states that our estimator perfectly recovers the true graph with high probability, under two additional Assumptions \ref{assump:bounded-Rn} and \ref{assump:beta-min}. 
For each result, we also provide an analogous guarantee for the early-stopping suboptimal solution in addition to the optimal solution.

The theoretical analysis in this section substantially extends those of \cite{buhlmann2014} and \cite{vandegeer2013}, as their results do not suffice for the guarantees we provide. Specifically, the analysis in \cite{buhlmann2014} addresses only the consistency of the estimated topological ordering (and not the graph) and does not accommodate regularization; indeed, they view graph estimation as a second step after estimating an ordering among the variables, while we jointly estimate all the model parameters including the DAG structure. On the other hand, the analysis of \cite{vandegeer2013} is restricted to linear models and cannot handle non-linear settings. Our analysis unifies and extends these two lines of work, and in particular establishes consistency in terms of graph recovery in non-linear settings.

\subsection{Setup}
\label{sec:theory:setup}

Let $s^\star := |\mathcal{G}^\star|$ denote the sparsity level of the true graph $\mathcal{G}^\star = (\mathrm{V},E^\star)$. We consider the estimator \eqref{eq:MIP} with two additional constraints. The first constraint is that the edges are restricted to be in a super-structure $E^\circ$, for which we assume that $E^\star \subseteq E^\circ$. As described in Section~\ref{sec:speed-up:superstruc}, neighborhood selection methods estimate a moral graph, which can serve as the super-structure $E^\circ$. Existing theoretical results (e.g., \citet[Lemma 4]{buhlmann2014}) guarantee that the estimated moral graph is a supergraph of the true moral graph. By interpreting an undirected edge as bi-directed, the super-structure $E^\circ$ would satisfy the desired assumption. As a result, for simplicity, we assume access to such a super-structure in the remainder. Second, we add a constraint that the number of edges does not exceed $s_n$, where conditions on $s_n$ are described later. 

Note that the parameter $\Gamma$ in \eqref{eq:MIP} is in bijective correspondence with the model parameterization $\tilde{\theta}=(\{\beta_{rkj}\}_{(k,j)\in\mathcal{E},, r\in[R_n]},\ \{\sigma_j\}_{j\in\mathrm{V}}) \in \tilde{\Theta}$ through the formulation in \eqref{eq:gamma-formulation}. Given the basis functions, it is also equivalent to the model parameter $\theta = (\{f_{kj}\}_{(k,j)\in\mathcal{E}},\ \{\sigma_j\}_{j\in\mathrm{V}}) \in \Theta$ via the mapping $\mu$. For notational convenience, we denote the estimator by $\widehat{\theta} = ( \{\widehat{\beta}_{rkj} \}_{(k,j) \in E^\circ,\ r\in[R_n]}, \{ \widehat{\sigma}_j \}_{j\in\mathrm{V}}  )$, while noting that all three parameterizations are mutually transformable. For the induced graph $\mathcal{G}(\widehat{\theta})$, an edge from node $k$ to node $j$ is present if $\sum_{r=1}^{R_n} |\beta_{rkj} |^2 \neq 0$, and absent otherwise.
The sparsity of the estimated graph $\widehat{s} := |\mathcal{G}(\widehat{\theta})|$ is defined as the number of selected edges. We also denote their early-stopping counterparts as $\widehat{\theta}^{\rm early}= ( \{ \widehat{\beta}_{rkj}^{\rm early} \}_{(k,j) \in E^\circ,\ r \in [R_n]},  \{\widehat{\sigma}^{\rm early}_j\}_{j\in \mathrm{V}} )$ and $\widehat{s}^{\rm early}$. Throughout, we allow the number of nodes $p$ to vary with the sample size $n$. A good choice is $n \gtrsim (\log p)^2$ for Theorem~\ref{thm:correct-permutation}, and $n \gtrsim p (\log p)^2$ for Theorems~\ref{thm:var-converge} and \ref{thm:recovery}.


\begin{assumption}[super-structure sparsity]\label{assump:superstructure-sparsity}
The super-structure $E^\circ$ is sparse: $s_j:=|\{k: (k,j)\in E^\circ\}| \leq K$ for each node $j\in \mathrm{V}$. 
\end{assumption}

\begin{assumption}[model sparsity]\label{assump:model-sparsity}
    The quantity $s_n$ that constrains the number of edges satisfies $s^\star \leq s_n$. 
\end{assumption}

\begin{assumption} [Sobolev function class with bounded basis functions]  \label{assump:sobolev}
For all function classes $\{ \mathcal{F}_{kj} \}_{(k,j) \in \mathcal{E}}$, the basis functions $\{b_{rkj}(\cdot)\}_{r=1}^\infty$ are orthogonal, satisfying $\mathbb{E}[b_{rkj}(X_k) b_{r'kj}(X_k)] = \upsilon \mathds{1}(r = r')$ for some $\upsilon \in (0, 1]$, and uniformly bounded with $\sup_{r}|b_{rkj}(\cdot)| \leq 1$. Furthermore, their coefficients decay at a fast rate: $\sum_{r=1}^\infty |\beta_{rkj}| \cdot r^\eta \leq C$ and $\eta > 1$ for any $(k,j)\in \mathcal{E}$.
\end{assumption} 

Assumptions \ref{assump:superstructure-sparsity} and \ref{assump:model-sparsity} impose sparsity at both the node and graph levels. Specifically, for each node $j$, the super-structure $\{k: (k,j)\in E^\circ\}$ contains all true parents while including no more than $K$ candidate parents per node. As a result, both the true graph and the estimated graph have at most $K$ parents per node. This assumption resembles Condition (B2) in \cite{buhlmann2014}. Moreover, the size of the true graph, upper bounded by $s_n$, may grow with the sample size. The added constraint $\widehat{s} \leq s_n$ ensures that the estimated graph adheres to the same sparsity level.
Assumption \ref{assump:sobolev} imposes additional structure on the function classes. In particular, it maps each function class $\mathcal{F}_{kj}$ into a Sobolev ellipsoid, allowing control over both the complexity of $\mathcal{F}_{kj}$ and the truncation error $\|f_{kj} - \beta_{0kj} - \sum_{r=1}^{R_n} \beta_{rkj} b_{rkj}\|_\infty$ for all $f_{kj} \in \mathcal{F}_{kj}$. This assumption was also adopted in \cite{ravikumar2009sparse}. In Appendix~\ref{sec:appendix:sobolev}, we provide a choice of basis functions that satisfies the conditions in Assumption~\ref{assump:sobolev}.

\vspace{0.1in}
\emph{Additional notation for variances}:
Let $\vec{g}_j := \{k: g_{kj} = 1, (k,j) \in E^\circ\}$ be an index set of underlying parents of node $j$. Then we can define the additive function classes $\mathcal{F}^{\oplus \vec{g}_j}:= \bigoplus_{k\in \vec{g}_j} \mathcal{F}_{kj}$ and $\mathcal{F}_n^{\oplus \vec{g}_j}:= \bigoplus_{k\in \vec{g}_j} \mathcal{F}_{kj, n}$. With the function $\mathcal{R}_j(f; \vec{g}_j):= [ X_j - \sum_{k\in \vec{g}_j} f_{kj}(X_k) ]^2$, the residual variances after projecting $X_j$ onto $\mathcal{F}^{\oplus \vec{g}_j}$ and $\mathcal{F}_n^{\oplus \vec{g}_j}$ are given by
$
    \nu_j(\vec{g}_j) := \min_{f \in \mathcal{F}^{\oplus\vec{g}_j}} \mathbb{E} \left[ \mathcal{R}_j(f; \vec{g}_j) \right] $
and
$
    \nu^n_j(\vec{g}_j) := \min_{f \in \mathcal{F}_n^{\oplus \vec{g}_j}} \mathbb{E} \left[ \mathcal{R}_j(f; \vec{g}_j) \right],
$ respectively.
Since each node has at most $K$ parents, we can denote the maximal gap between these variances as $d_{n,p} :=  \max_{j\in\mathrm{V}} \max_{|\vec{g}_j| \in \{0, \dots, K\} }  \left| \nu_j(\vec{g}_j) - \nu^n_j(\vec{g}_j) \right|$, the maximal value of $\nu_j(\vec{g}_j)$ as $\overline{\nu}_p := \max_{j\in\mathrm{V}} \mathbb{E}[X_j^2]$, and the minimal value as $\underline{\nu}_p := \min_{j\in\mathrm{V}} \min_{|\vec{g}_j| =K  } \nu_j(\vec{g}_j)$.

\begin{assumption}[bounded noise variances]\label{assump:bound-variance}
    There exist a constant $\overline{\sigma}$ such that the noise variances are bounded:
    $\sigma_j^{\star2} \leq \overline{\sigma}^2 < \infty$ for all $j \in \mathrm{V}$. In addition, the variance $\nu_j(\vec{g}_j)$ is uniformly bounded below by a constant $\underline{c} > 0$: $\underline{\nu}_p \geq \underline{c}$. 
\end{assumption}


Assumption~\ref{assump:bound-variance} ensures that $\overline{\nu}_p \leq \overline{c} $ for some $\overline{c} < \infty$, thereby facilitating the boundedness of the estimated variances. This condition is similar to Condition (A3) in \cite{buhlmann2014}. 
Intuitively, the estimated variance $\widehat{\sigma}_j^2$ should concentrate around $\nu_j^n(\widehat{g}_j)$, where $\widehat{g}_j:=\{k: \widehat{g}_{kj} = 1, (k,j)\in E^\circ\}$ denotes the estimated parent set on node $j$. Thus, the estimated variances are approximately bounded below and above by the smallest and largest possible values of $\nu_j^n(\widehat{g}_j)$, respectively. 
One may observe that $\underline{\nu}_p$ and $\overline{\nu}_p$ are appropriate bounds if $\nu_j^n(\vec{g}_j)$ does not deviate significantly from $\nu_j(\vec{g}_j)$, or equivalently $d_{n,p} = o(1)$. See Appendix~\ref{sec:appendix:variances} for a detailed discussion on the control of $\bar{\nu}_p$ and $d_{n,p}$.

\begin{assumption}[functional compatibility]\label{assump:compatibility}
    Under the distribution $P$ generated from \eqref{eq:DAG-model}, there exists $\phi^2 > 0$ such that for all $\gamma \in \mathbb{R}^p$,
    $
    \min_{j\in \mathrm{V}} \| \sum_{k=1}^p \gamma_k f_{kj}(X_k) \|_{L_2}^2 \geq \phi^2 \|\gamma\|^2_2
    $
    for all $f_{kj} \in \mathcal{F}_{kj}$ with $\|f_{kj}(X_k)\|_{L_2} = 1$.
\end{assumption}
Assumption \ref{assump:compatibility} guarantees that the spaces $\mathcal{F}^{\oplus \vec{g}_j}$ and $\mathcal{F}_n^{\oplus \vec{g}_j}$ are closed under the $L_2$ norm. This assumption is standard in high-dimensional additive models \citep{meier2009high} and is directly adopted from Lemma~2 of \cite{buhlmann2014}.

\subsection{Correct permutation}
\label{sec:theory:permutation}
A permutation of the nodes is a mapping from an ordering to the nodes that respects the ancestor-descendant relationships in the graph. Specifically, for any permutation $\pi$ of $\{1, \dots, p\}$, we say that $\pi$ is consistent with a DAG if, whenever there is an edge from $\pi(k)$ to $\pi(j)$, it holds that $k < j$. In particular, the node $\pi(1)$ has no parents. A given DAG may admit multiple such permutations. For example, if the edge set is $\{(1,2), (2,3), (2,4)\}$, then both $\pi_1 = (1,2,3,4)$ and $\pi_2 = (1,2,4,3)$ are valid permutations. Let $\Pi^\star$ denote the set of all permutations consistent with the true graph $\mathcal{G}^\star$, referred to as the set of true permutations. Similarly, let $\widehat{\Pi}$ denote the set of permutations consistent with the estimated graph $\mathcal{G}(\widehat{\theta})$, referred to as the set of estimated permutations. We also denote its early-stopping counterpart as $\widehat{\Pi}^{\rm early}$. We aim to show that with high probability, the estimated permutations are correct.

Let $\pi^{-1}$ denote the inverse permutation that maps each node to its position in the ordering. For any permutation $\pi$, we define $\vec{g}(\pi)_j := \{k: \pi^{-1}(k) < \pi^{-1}(j), (k,j)\in E^\circ\}$, the set of parents of node $j$ under permutation $\pi$. The residual variance after projecting $X_j$ onto the function class $\mathcal{F}^{\oplus \vec{g}(\pi)_j}$, denoted by $\nu_j(\vec{g}(\pi)_j)$, quantifies the best possible fit of node $j$ to a DAG with the permutation $\pi$. Particularly, when $\pi \in \Pi^\star$, we have $ \nu_j(\vec{g}(\pi)_j) = \sigma_j^{\star2}$. When $\pi \notin \Pi^\star$, we define the separation between these residual variances and the true variances as: 
\begin{equation}\label{eq:separation}
    \xi_p:= \min_{\pi\notin \Pi^\star} p^{-1} \sum_{j=1}^p \left( \log \nu_j(\vec{g}(\pi)_j) - \log \sigma_j^{\star2} \right).
\end{equation}
Appealing to properties of the KL-divergence, one can show that $\xi_p \ge 0$. A more detailed discussion of these results is provided in Appendix~\ref{sec:appendix:permutation}.

\begin{assumption}[non-negligible separation]\label{assump:separation}
When the permutation is incorrect, the separation between the best fitted residual variances and the true variances is non-negligible:
$
\lambda_n^2 s_n + p \sqrt{\log p /n} + s_n R_n^{-2\eta}  = o(p\xi_p).
$
\end{assumption}

Assumption \ref{assump:separation} ensures the convergence rate of $\log \widehat{\sigma}_j$ towards $\log \sigma_j$ dominates that of $\xi_p$ towards zero, if the latter occurs. Some example choices for these quantities are: $\lambda_n^2 \asymp p (\log p)^{1/2}  n^{-1/2} s_n^{-1}$ and $s_n \asymp R_n \asymp n^{1/5}$. In the extreme case where the number of nodes $p$ is fixed, the assumption holds trivially. Generally speaking, when $p$ varies with $n$, the assumption simplifies to $\sqrt{\log p / n} = o(\xi_p)$ in both the low-dimensional case where $n \asymp p(\log p)^2$ and the high-dimensional case where $n \asymp (\log p)^2$. A similar condition appears in Theorem 3 of \cite{buhlmann2014}, where it is also assumed that $\sqrt{\log p / n} = o(\xi_p)$.

\begin{theorem}\label{thm:correct-permutation}
    Under conditions in Proposition \ref{prop:identifiability} and Assumptions \ref{assump:superstructure-sparsity}--\ref{assump:separation}, it holds, with probability at least $1 - 4/p$, that:
    \begin{enumerate}[label = (\arabic*)]
        \item The estimated graph has correct permutations: 
    $\widehat{\Pi} \subseteq \Pi^\star.$

        \item When the early-stopping threshold satisfies $\tau^{\rm early} = o(p\xi_p)$, the resulting estimated graph has correct permutations: $\widehat{\Pi}^{\rm early} \subseteq \Pi^\star$.
    \end{enumerate}
\end{theorem}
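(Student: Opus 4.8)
The plan is to exploit the optimality of $\widehat{\theta}$ in the penalized program, compare it against a feasible truncation $\bar\theta$ of the true model, and reduce the claim to a separation argument: any permutation outside $\Pi^\star$ inflates the profiled log-likelihood by an amount of order $p\xi_p$, which Assumption~\ref{assump:separation} forbids. First I would profile out the variances. At the optimum each $\widehat{\sigma}_j^2$ equals the empirical residual variance $\widehat{\omega}_j := \| X_j - \sum_{k\in\widehat{g}_j}\widehat{f}_{kj}(X_k)\|_n^2$, so the objective collapses, up to a constant, to $\sum_{j=1}^p \log\widehat{\omega}_j + \lambda_n^2\widehat{s}$. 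Taking $\bar\theta$ to be the true model truncated to $R_n$ basis functions---whose permutation lies in $\Pi^\star$ and whose parent sets are the true ones, with residual variance $\bar\omega_j$---optimality yields
$$
\sum_{j=1}^p \log\widehat{\omega}_j + \lambda_n^2\widehat{s} \;\le\; \sum_{j=1}^p \log\bar\omega_j + \lambda_n^2 s^\star .
$$
I would then bound the right-hand side by $\sum_j \log\sigma_j^{\star2}$ plus error terms of order $\lambda_n^2 s^\star$, a stochastic fluctuation, and the truncation bias $s_n R_n^{-2\eta}$, the last obtained from the coefficient-decay part of Assumption~\ref{assump:sobolev} applied to $X_j - \sum_k \bar f_{kj}(X_k) = \epsilon_j + \sum_k (f^\star_{kj}-\bar f_{kj})(X_k)$.

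For the lower bound on the left-hand side I would fix an arbitrary $\widehat{\pi}\in\widehat{\Pi}$. Since $\widehat{\pi}$ is consistent with the estimated graph, the estimated parent set obeys $\widehat{g}_j\subseteq \vec{g}(\widehat{\pi})_j$, and using fewer parents can only enlarge the residual, so $\widehat{\omega}_j \ge \widehat{\omega}_j(\vec{g}(\widehat{\pi})_j)$. A uniform empirical-process bound then gives $\widehat{\omega}_j(\vec{g})\ge \nu_j(\vec{g}) - \epsilon_n$ simultaneously over all admissible parent sets $\vec{g}$, with $\epsilon_n \lesssim \sqrt{\log p / n}$; here I would cover the Sobolev ellipsoid of each $\mathcal{F}_{kj}$ and take a union bound over the at most $p\,2^{K}$ sparse parent configurations permitted by Assumption~\ref{assump:superstructure-sparsity} (with $K$ constant this count is $\lesssim p$, producing the $\sqrt{\log p / n}$ rate), while Assumption~\ref{assump:compatibility} guarantees the projections are well-defined and the empirical design is non-degenerate. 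Because the residual variances exceed $\underline{c}$ by Assumption~\ref{assump:bound-variance}, $\log$ is Lipschitz on the relevant range, so the bound transfers to $\sum_j \log\widehat{\omega}_j \ge \sum_j \log\nu_j(\vec{g}(\widehat{\pi})_j) - C p\epsilon_n$.

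Combining the two bounds on the high-probability event and invoking the definition of $\xi_p$, if $\widehat{\pi}\notin\Pi^\star$ then $\sum_j\big(\log\nu_j(\vec{g}(\widehat{\pi})_j) - \log\sigma_j^{\star2}\big) \ge p\xi_p$, which forces
$$
p\xi_p \;\lesssim\; \lambda_n^2 s_n + p\sqrt{\log p/n} + s_n R_n^{-2\eta}
$$
and contradicts Assumption~\ref{assump:separation}. Hence every $\widehat{\pi}\in\widehat{\Pi}$ lies in $\Pi^\star$, proving part~(1); assembling the handful of concentration events by a union bound yields the stated probability $1-4/p$.

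For part~(2), the early-stopping solution satisfies the same optimality inequality with an extra additive slack $\tau^{\rm early}$ on the right-hand side, so the contradiction inequality becomes $p\xi_p \lesssim \lambda_n^2 s_n + p\sqrt{\log p/n} + s_n R_n^{-2\eta} + \tau^{\rm early}$; since $\tau^{\rm early} = o(p\xi_p)$ this extra term is negligible and the argument carries over verbatim, giving $\widehat{\Pi}^{\rm early}\subseteq\Pi^\star$. The main obstacle is the uniform empirical-process control in the second paragraph: simultaneously handling the nonparametric Sobolev regression over all sparse parent sets while cleanly separating the stochastic error $\sqrt{\log p/n}$, the truncation bias $R_n^{-2\eta}$, and the approximation gap $d_{n,p}$. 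This is precisely the point at which the permutation analysis of \cite{buhlmann2014}, which controls the separation but not regularization, and the $\ell_0$-penalized analysis of \cite{vandegeer2013}, which is restricted to linear models, must be merged and extended.
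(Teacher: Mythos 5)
Your proposal is correct and follows essentially the same route as the paper's proof: profile out the variances, apply the basic inequality against the $R_n$-truncated true model to get an upper bound of order $\lambda_n^2 s_n + p\sqrt{\log p/n} + s_n R_n^{-2\eta}$, then lower-bound $\sum_j \log\widehat{\omega}_j$ via monotonicity of residual variances in the parent set plus a uniform empirical-process bound (the paper's Lemmas on $|\nu_j^n - \widehat{\nu}_j^n|$ under the bracketing-entropy event), and conclude by contradiction with the separation in Assumption~\ref{assump:separation}; the early-stopping case is handled identically by absorbing the $\tau^{\rm early} = o(p\xi_p)$ slack. The only cosmetic difference is that you enumerate sparse parent configurations in the union bound where the paper instead controls a supremum over the union function class $\mathcal{F}_j^\cup$ directly, but both yield the same $\sqrt{\log p/n}$ rate.
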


A proof is given in Appendix~\ref{sec:appendix:prof-correct-permutation}. This proof generally follows the general idea of Theorem~1 in \cite{buhlmann2014}, but relies on the KL-divergence gap between any incorrect permutation distribution and the distribution induced by our estimator with regularization. Interestingly, even a sub-optimal solution obtained via early stopping can identify the correct permutations, provided that the optimality gap threshold is appropriately controlled. This result enables substantial computational savings without
compromising estimation accuracy.

\subsection{Convergence of noise variances}
\label{sec:theory:noise-convergence}

We next show convergence of estimated noise variances $\{\widehat{\sigma}_j^2 \}_{j\in \mathrm{V}}$ to the true noise variances $\{\sigma_j^{\star2}\}_{j\in \mathrm{V}}$.

\begin{theorem}\label{thm:var-converge}
Under conditions in Proposition \ref{prop:identifiability} and Assumptions \ref{assump:superstructure-sparsity}--\ref{assump:separation}, it holds, with probability at least $1 - 5/p$,
$$
\sum_{j=1}^p \left( \sigma_j^{\star2} - \widehat{\sigma}_j^2 \right)^2 \lesssim (p + s_n R_n) \log p / n  + s_n R_n^{-2\eta} + \lambda_n^2 s_n.
$$
When $s_n R_n^{-2\eta} + s_nR_n \log p /n \lesssim p \log p /n$ and with the choice $\lambda_n^2 \asymp p\log p / (n s_n)$,
\begin{enumerate}
    \item 
    $
    \sum_{j=1}^p \left( \sigma_j^{\star2} - \widehat{\sigma}_j^2 \right)^2 \lesssim \lambda_n^2 s_n $.

    \item When the early-stopping threshold satisfies $\tau^{\rm early} \lesssim \lambda_n^2  s_n  + (p + s_nR_n ) \log p /n + s_n  R_n^{-2\eta}$ and $\tau^{\rm early} = o(p\xi_p)$, we have $\sum_{j=1}^p ( \sigma_j^{\star2} - (\widehat{\sigma}_j^{\rm early})^2 )^2 \lesssim \lambda_n^2 s_n$.
\end{enumerate}

\end{theorem}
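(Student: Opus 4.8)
The plan is to combine the optimality of $\widehat{\theta}$ with the correct-permutation conclusion of Theorem~\ref{thm:correct-permutation}, turning an aggregate ($L^1$, $\log$-scale) bound into the stated squared-error ($L^2$) bound. First I would profile out the variances in the likelihood \eqref{eq:log-likelihood}: for fixed additive functions the optimal $\sigma_j^2$ is the empirical residual variance, so that $\widehat{\sigma}_j^2=\|X_j-\sum_{k}\widehat{f}_{kj}(X_k)\|_n^2$ and the objective reduces to $\sum_{j=1}^p\log\widehat{\sigma}_j^2$ plus the $\ell_0$-penalty. Comparing the penalized objective at $\widehat{\theta}$ with its value at the truncated truth $\theta^\star_n$ (each $f^\star_{kj}$ replaced by its projection $f^\star_{kj,n}$ onto the first $R_n$ basis functions, with variances $\sigma_j^{\star2}$), which is feasible since it induces $\mathcal{G}^\star$ with $s^\star\le s_n$ edges, gives the basic inequality
\[
\sum_{j=1}^p\log\widehat{\sigma}_j^2+\lambda_n^2\,\widehat{s}\;\le\;\sum_{j=1}^p\log\big\|X_j-\textstyle\sum_k f^\star_{kj,n}(X_k)\big\|_n^2+\lambda_n^2 s^\star .
\]
Because $X_j-\sum_k f^\star_{kj}(X_k)=\epsilon_j$, each right-hand residual differs from $\epsilon_j$ only by a truncation bias, controlled by the Sobolev decay of Assumption~\ref{assump:sobolev}.

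Second, I would decompose, for each node,
\[
\widehat{\sigma}_j^2-\sigma_j^{\star2}=\big(\widehat{\sigma}_j^2-\nu^n_j(\widehat{g}_j)\big)+\big(\nu^n_j(\widehat{g}_j)-\nu_j(\widehat{g}_j)\big)+\big(\nu_j(\widehat{g}_j)-\sigma_j^{\star2}\big),
\]
with $\widehat{g}_j=\{k:\widehat{g}_{kj}=1,(k,j)\in E^\circ\}$, and bound $\sum_j(\cdot)^2$ of each piece separately. The middle, truncation term is controlled by $d_{n,p}$, yielding $\sum_j(\cdot)^2\lesssim s_nR_n^{-2\eta}$. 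The first, estimation-error term compares the empirical residual variance of the fitted model with its population counterpart; I would bound it by a uniform empirical-process argument over the at most $2^K$ candidate parent sets per node (Assumption~\ref{assump:superstructure-sparsity}) and the $R_n$-dimensional basis, using the boundedness in Assumption~\ref{assump:bound-variance}. After squaring and summing, this contributes the $p$ noise-variance functionals at rate $\log p/n$ each together with the $\sim s_nR_n$ active coefficients at rate $\log p/n$ each, i.e.\ $(p+s_nR_n)\log p/n$.

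The third, excess-variance term $\nu_j(\widehat{g}_j)-\sigma_j^{\star2}$ is where Theorem~\ref{thm:correct-permutation} is indispensable. On the event $\widehat{\Pi}\subseteq\Pi^\star$, every estimated parent of $j$ is a non-descendant of $j$ in $\mathcal{G}^\star$, so $\epsilon_j$ is independent of $X_{\widehat{g}_j}$ and no additive fit can remove the noise; hence $\nu_j(\widehat{g}_j)\ge\sigma_j^{\star2}$ for every $j$. This termwise nonnegativity is exactly what permits passing from $L^1$ to $L^2$: the basic inequality, after absorbing the truncation and stochastic fluctuations above and using the bi-Lipschitz property of $\log$ on $[\underline{c},\overline{c}]$ (Assumption~\ref{assump:bound-variance}), bounds the \emph{signed} sum $\sum_j(\nu_j(\widehat{g}_j)-\sigma_j^{\star2})\lesssim\lambda_n^2 s_n+(p+s_nR_n)\log p/n+s_nR_n^{-2\eta}$; since each summand is nonnegative and bounded by $\overline{c}$, the squared sum is bounded by a constant multiple of the same quantity. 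Collecting the three pieces and invoking bi-Lipschitzness once more yields the stated bound on $\sum_j(\sigma_j^{\star2}-\widehat{\sigma}_j^2)^2$. I expect the main obstacle to be the uniform empirical-process control of the stochastic terms at the sharp $(p+s_nR_n)\log p/n$ rate: a crude one-sided bound on $\sum_j\log\widehat{\sigma}_j^2$ would incur a $\sqrt{p\log p/n}$ fluctuation that is too large in the regime $n\gtrsim p(\log p)^2$, so one must exploit the cancellation between the empirical residuals of the fitted model and those of the truncated truth on the correctly oriented nodes.

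Finally, for the early-stopping solution the optimality gap is at most $\tau^{\rm early}$, so the basic inequality gains an additive $\tau^{\rm early}$; imposing $\tau^{\rm early}\lesssim\lambda_n^2 s_n+(p+s_nR_n)\log p/n+s_nR_n^{-2\eta}$ preserves the rate, while $\tau^{\rm early}=o(p\xi_p)$ is precisely the condition under which part~(2) of Theorem~\ref{thm:correct-permutation} gives $\widehat{\Pi}^{\rm early}\subseteq\Pi^\star$ and hence $\nu_j(\widehat{g}_j^{\rm early})\ge\sigma_j^{\star2}$. The two stated simplifications then follow by substituting $\lambda_n^2\asymp p\log p/(ns_n)$ and the assumed ordering $s_nR_n^{-2\eta}+s_nR_n\log p/n\lesssim p\log p/n$, under which every error term is dominated by $\lambda_n^2 s_n\asymp p\log p/n$.
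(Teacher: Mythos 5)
Your skeleton (profile out the variances, compare against the truncated truth, invoke Theorem~\ref{thm:correct-permutation} to get the independence structure, add $\tau^{\rm early}$ for part (2)) matches the paper, and your observation that termwise non-negativity of $\nu_j(\widehat{g}_j)-\sigma_j^{\star2}$ would let you pass from an $\ell_1$ to an $\ell_2$ bound is a valid device. But the proposal has a genuine gap at exactly the step you flag as "the main obstacle," and the route you commit to — the three-way decomposition through the population quantities $\nu_j^n(\widehat{g}_j)$ and $\nu_j(\widehat{g}_j)$ — cannot close it. To run your non-negativity argument you need the $\ell_1$ bound $\sum_j\big(\nu_j(\widehat{g}_j)-\sigma_j^{\star2}\big)\lesssim \lambda_n^2 s_n+(p+s_nR_n)\log p/n+s_nR_n^{-2\eta}$, which requires transferring the basic inequality from the empirical minima $\widehat{\sigma}_j^2=\widehat{\nu}_j^n(\widehat{g}_j)$ to their population counterparts. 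The only uniform empirical-process control available over the data-dependent parent sets (the analogue of the paper's Lemma~\ref{lem:Pn-2-P}) costs $s_j\sqrt{\log p/n}$ \emph{per node}, hence $p\sqrt{\log p/n}$ in total — which swamps the target rate $p\log p/n$ in the regime $n\asymp p(\log p)^2$ needed for part (1). Saying one must "exploit the cancellation" names the difficulty without resolving it; resolving it is the content of the proof.

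The paper's actual mechanism never passes to population residual variances. It (i) applies the second-order expansion $\log(1+x)\le x-x^2/(2(1+c)^2)$ to $\log(\widehat{\sigma}_j^2/\sigma_j^{\star2})$, which is what manufactures the quadratic term $\sum_j\big((\sigma_j^{\star2}-\widehat{\sigma}_j^2)/\widehat{\sigma}_j^2\big)^2$ on the left-hand side (not non-negativity of excess variances); (ii) uses the exact algebraic identity $\widehat{\sigma}_j^2-\sigma_j^{\star2}=\|Z_j(\beta_j^\star-\widehat{\beta}_j)+\sum_k D^\star_{kj}\|_n^2+(\|\epsilon_j\|_n^2-\sigma_j^{\star2})+\frac{2}{n}\epsilon_j^\top Z_j(\beta_j^\star-\widehat{\beta}_j)+\frac{2}{n}\epsilon_j^\top\sum_k D^\star_{kj}$, so that the $O(\sqrt{\log p/n})$-per-node fluctuation $\|\epsilon_j\|_n^2-\sigma_j^{\star2}$ cancels exactly between the fitted and oracle residuals; (iii) controls the cross term $\frac{2}{n}\epsilon_j^\top Z_j(\beta_j^\star-\widehat{\beta}_j)$ by a self-normalized bound (event $\mathcal{T}_4$, valid precisely because correct permutations make $\epsilon_j$ independent of $Z_j(\beta_j^\star-\widehat{\beta}_j)$), absorbing it into $\delta_2\|Z_j(\beta_j^\star-\widehat{\beta}_j)\|_n^2$ at cost $s_jR_n\log p/n$; and (iv) uses Young's inequality to square the residual $\|\epsilon_j\|_n^2$ fluctuations into $p\log p/n$ while absorbing a $\delta_4(\cdot)^2$ term into the left-hand side. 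Steps (i)--(iv) are absent from your proposal and are not implied by it; without them your argument yields at best a $p\sqrt{\log p/n}$ bound, which does not give $\sum_j(\sigma_j^{\star2}-\widehat{\sigma}_j^2)^2\lesssim\lambda_n^2 s_n$ under the stated scalings.
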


Theorem~\ref{thm:var-converge} is proved in Appendix~\ref{sec:appendix:prof-var-converge}. The proof generally follows the strategy of Lemma~7.1 in \cite{vandegeer2013}, but requires adaptations to accommodate the non-linearity. Some possible choices for the scaling quantities are: $n \asymp p (\log p)^2$, $s_n \asymp R_n \asymp n^{1/5}$. Since $\lambda_n^2 s_n = o(1)$, this result implies convergence of the estimated noise variances: $| \widehat{\sigma}^2_j - \sigma_j^{\star2}| = o(1)$ for all $j \in \mathrm{V}$ with high probability. This upper bound $\lambda_n^2 s_n$ mirrors the result of linear structural equation models in \cite{vandegeer2013}. Moreover, as noted in their Remark~3.1, a natural normalization of the quantity $\sum_{j=1}^p ( \sigma_j^{\star2} - \widehat{\sigma}_j^2 )^2$ is to divide it by $p$. Under this perspective, one can instead choose $n \asymp (\log p)^2$, $s_n \asymp n$, and $R_n \asymp n$, and obtain a pooled variance convergence: $p^{-1}\sum_{j=1}^p ( \sigma_j^{\star2} - \widehat{\sigma}_j^2 )^2 \lesssim \log p /n $. This version accommodates a high-dimensional setting where $n \ll p$. Finally, for variances convergence, the early-stopping threshold $\tau^{\rm early}$ may need to be much smaller than that required to ensure correct permutations in the worst case scenario. To see this, recall that we could choose $o(p\xi_p) = p \sqrt{\log p / n}$ and $\lambda^2_n s_n + (p + s_n R_n) \log p / n + s_n R_n^{-2\eta} \asymp p \log p / n $; then in order to ensure the correct permutation $\tau^{\rm early}$ is at most $ p \sqrt{\log p / n} $, whereas to ensure variance convergence it is at most $p \log p / n $.

\subsection{Graph recovery}
\label{sec:theory:recovery}
We next show consistency in terms of graph recovery. 


\begin{assumption}[moderately increasing number of basis functions]
\label{assump:bounded-Rn}
    The number of basis functions does not grow too fast: $R_n \lesssim \sqrt{n / \log p}$.
\end{assumption}

\begin{assumption}[beta-min condition] \label{assump:beta-min}
For any $(k,j) \in E^\star$, the true coefficients for the basis functions used in estimation are strong:
    $
    \min_{(k,j) \in E^\star} \sum_{r=1}^{R_n} |\beta^\star_{rkj} |^2 \ge 
    c_u \delta_{n,p} > 0$ for some $c_u >0$, where $(p +s_n R_n) \log p / n + s_n  R_n^{-2\eta }  = o(\delta_{n,p})
    $.
\end{assumption}

Assumption \ref{assump:bounded-Rn} regulates the growth rate of the number of basis functions. However, $R_n$ cannot be too small either, as this may hinder Assumptions~\ref{assump:separation} and~\ref{assump:beta-min}, due to poor approximation from an insufficient number of basis functions.
Assumption \ref{assump:beta-min} requires a lower bound on the strength of true causal effects so that they can be captured by the estimated graph. In an example setting we may choose $n \asymp p(\log p)^3$, $s_n \asymp n^{1/5}$, $R_n \asymp n^{1/5}\log p$, and $\min_{(k,j) \in E^\star} \sum_{r=1}^{R_n} |\beta^\star_{rkj} |^2 \asymp \delta_{n,p} \asymp p (\log p)^2 / n$. A similar beta-min condition appears as Condition 3.5 in \cite{vandegeer2013} for linear structural equation models, which, without basis expansions, requires each nonzero $\beta_{kj}^\star$ to be at least of order $\sqrt{p \log p / (s_n n)}$. After translating their condition into our example setting, the two beta-min requirements align in order.

\begin{theorem} \label{thm:recovery}
    Suppose conditions in Proposition \ref{prop:identifiability} and Assumptions \ref{assump:superstructure-sparsity}--\ref{assump:beta-min} are satisfied. Take $\lambda_n^2 \in [c_l \delta_{n,p} ,\ c_u \delta_{n,p}] $ where $0 < c_l < c_u$. When $n$ is sufficiently large, it holds,  with probability at least $1 - 6/p$, that:
    \begin{enumerate}[label = (\arabic*)]
        \item The estimated graph recovers the true structure: $\mathcal{G}(\widehat{\theta}) = \mathcal{G}^\star$.
        
        \item By taking an early-stopping threshold satisfying $\tau^{\rm early} = o(p\xi_{p} \wedge \delta_{n,p})$, the resulting estimated graph recovers the true structure: $\mathcal{G}(\widehat{\theta}^{\rm early}) = \mathcal{G}^\star$.
    \end{enumerate}
\end{theorem}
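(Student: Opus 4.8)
The plan is to prove graph recovery by combining the three pillars already assembled: correct permutation (Theorem~\ref{thm:correct-permutation}), variance convergence (Theorem~\ref{thm:var-converge}), and a new screening argument controlled by the beta-min condition (Assumption~\ref{assump:beta-min}). First I would condition on the high-probability event from Theorem~\ref{thm:correct-permutation}, so that $\widehat{\Pi}\subseteq\Pi^\star$; this means the estimated graph induces an ordering consistent with $\mathcal{G}^\star$, and in particular $\widehat{\mathcal{G}}$ cannot contain any \emph{reversed} edge (an edge $(j,k)$ with $(k,j)\in E^\star$) nor any edge pointing ``backward'' relative to the true topological order. Consequently, recovering $\mathcal{G}^\star$ reduces to the variable-selection problem \emph{within} a correct ordering: I must show (i) no false positives---every selected edge $(k,j)$ with $\widehat{g}_{kj}=1$ actually lies in $E^\star$---and (ii) no false negatives---every true edge is selected, $\widehat{g}_{kj}=1$ for all $(k,j)\in E^\star$.

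For the no-false-negatives direction, the key quantity is the estimated coefficient group $\widehat{\beta}_{kj}=(\widehat{\beta}_{1kj},\dots,\widehat{\beta}_{R_nkj})$. I would establish a prediction/estimation bound of the form $\sum_{(k,j)\in E^\circ}\|\widehat{\beta}_{kj}-\beta^\star_{kj}\|_2^2 \lesssim (p+s_nR_n)\log p/n + s_nR_n^{-2\eta} + \lambda_n^2 s_n$, which is the natural companion to the variance bound in Theorem~\ref{thm:var-converge} and follows from the same basic inequality / empirical-process machinery adapted to the non-linear basis expansion (this is where the techniques extended from \cite{vandegeer2013} enter, with the truncation error $s_nR_n^{-2\eta}$ tracked via Assumption~\ref{assump:sobolev}). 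Since this aggregate error is $o(\delta_{n,p})$ by the scaling in Assumption~\ref{assump:beta-min}, and each true group satisfies $\|\beta^\star_{kj}\|_2^2\ge c_u\delta_{n,p}$, the triangle inequality forces $\|\widehat{\beta}_{kj}\|_2^2>0$ for every $(k,j)\in E^\star$; otherwise the estimation error on that single group would already exceed $c_u\delta_{n,p}$, a contradiction. Hence no true edge is dropped.

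For the no-false-positives direction, I would argue by optimality: the group $\ell_0$ penalty $\lambda_n^2\sum g_{kj}$ charges a fixed cost $\lambda_n^2$ per included edge, so a spurious edge is retained only if it lowers the log-likelihood by more than $\lambda_n^2$. But under a correct permutation, any edge outside $E^\star$ contributes essentially no reduction in residual variance beyond the statistical noise floor, which is of order $(p+s_nR_n)\log p/n + s_nR_n^{-2\eta}=o(\delta_{n,p})=o(\lambda_n^2)$ by the choice $\lambda_n^2\in[c_l\delta_{n,p},c_u\delta_{n,p}]$. Thus including any false edge strictly increases the penalized objective, so the minimizer selects none---this is exactly the role of the two-sided window on $\lambda_n$, with $c_l$ ensuring the penalty dominates noise (kills false positives) and $c_u$ ensuring it stays below the signal (permits true positives). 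Combining the two directions gives $\mathcal{G}(\widehat{\theta})=\mathcal{G}^\star$ on the intersection of the good events, whose failure probability I would union-bound to at most $6/p$.

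The early-stopping part~(2) I would handle by propagating the suboptimality gap through both arguments. Under early stopping the objective value exceeds the optimum by at most $\tau^{\rm early}$; requiring $\tau^{\rm early}=o(p\xi_p\wedge\delta_{n,p})$ ensures the $p\xi_p$ slack is small enough to preserve the correct-permutation conclusion (via Theorem~\ref{thm:correct-permutation}(2)) while the $\delta_{n,p}$ slack is negligible relative to both the penalty $\lambda_n^2\asymp\delta_{n,p}$ and the signal strength, so the false-positive and false-negative comparisons above survive with an extra additive $\tau^{\rm early}=o(\delta_{n,p})$ term absorbed into the $o(\cdot)$ bounds. The main obstacle I anticipate is the no-false-positive estimate in the non-linear regime: bounding the spurious log-likelihood gain uniformly over all edges and all candidate parent sets of size at most $K$ requires a uniform empirical-process control of the projection residuals across the $O(p^K)$ candidate neighborhoods, and it is the interplay between this uniform bound, the truncation bias $s_nR_n^{-2\eta}$, and the compatibility condition (Assumption~\ref{assump:compatibility}) that makes the calibration of $\lambda_n$ delicate---this is precisely the step where the linear analysis of \cite{vandegeer2013} does not transfer directly and must be rebuilt for basis expansions.
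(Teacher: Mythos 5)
Your overall route---condition on the correct-permutation event from Theorem~\ref{thm:correct-permutation}, then rule out false positives by a penalty-versus-noise comparison and false negatives by a signal-versus-error comparison---is the same architecture as the paper's proof, which partitions the feasible set into $\{\mathcal{G}(\beta)=\mathcal{G}^\star\}$, $\{|\mathcal{G}(\beta)|\le s^\star,\ \mathcal{G}(\beta)\neq\mathcal{G}^\star\}$, and $\{|\mathcal{G}(\beta)|>s^\star\}$ and compares the optimal objective values attained on each piece. Your false-positive step and your treatment of early stopping essentially match the paper's Step~III and Part~2.

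The gap is in the no-false-negatives step. The aggregate bound you invoke, $\sum_{(k,j)}\|\widehat{\beta}_{kj}-\beta^\star_{kj}\|_2^2\lesssim (p+s_nR_n)\log p/n+s_nR_n^{-2\eta}+\lambda_n^2 s_n$, is \emph{not} $o(\delta_{n,p})$: Assumption~\ref{assump:beta-min} controls only the first two terms, and under the theorem's mandated choice $\lambda_n^2\asymp\delta_{n,p}$ the last term is of order $s_n\delta_{n,p}\gg\delta_{n,p}$ whenever $s_n\to\infty$. This term is unavoidable in your bound precisely in the regime that matters for false negatives, since the basic inequality gives $\ell_n(\widehat\theta)-\ell_n(\theta^\star)\le\lambda_n^2(s^\star-\widehat{s})$, which is positive and can be as large as $\lambda_n^2 s^\star$ exactly when edges are dropped. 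Consequently the triangle-inequality contradiction does not go through: the error budget permits all $s_n$ true groups to be off by $\asymp\delta_{n,p}$ simultaneously, i.e.\ every true edge could in principle be missed. The paper avoids this by per-edge rather than aggregate bookkeeping: on the restricted-eigenvalue event $\mathcal{T}_5$, a candidate graph missing $\gamma_1$ true edges incurs a fit loss of at least $\gamma_1\cdot\min_{(k,j)\in E^\star}\sum_{r}|\beta^\star_{rkj}|^2\gtrsim\gamma_1\,\delta_{n,p}$, while the $\ell_0$ penalty it saves is at most $\lambda_n^2\gamma_1$; the window $\lambda_n^2\in[c_l\delta_{n,p},\ c_u\delta_{n,p}]$ is calibrated so that the per-edge signal dominates the per-edge penalty saving plus the $o(\delta_{n,p})$ noise terms. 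Your argument needs to be restructured along these lines (so that the $\lambda_n^2$ contribution enters only multiplied by the number of \emph{missed} edges and is compared edge-by-edge against the beta-min signal) before the false-negative direction is sound.
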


The proof of Theorem~\ref{thm:recovery} is provided in Appendix~\ref{sec:appendix:prof-recovery}. 
This proof proceeds by showing that the optimal objective value attained under the true graph $\mathcal{G}^\star$ is strictly lower than that attained under any incorrect graphs.
The tuning parameter $\lambda_n^2$ cannot be too small so that the selection of redundant edges is prevented. The sample size should be sufficiently large in a sense that, there exists some $N$ such that for all $n > N$, $\delta_{n,p} > (p + s_n R_n) \log p /n + s_n R_n^{-2\eta}$. The existence of such an $N$ is guaranteed by Assumption~\ref{assump:beta-min}. Furthermore, the early-stopping sub-optimal solution can also achieve perfect graph recovery although the corresponding threshold may need to be even smaller than that for noise convergence.

\vspace{0.2in}
\section{Experiments}
\label{sec:experiments}

The code used to reproduce all experiments, along with the graph structures used in the analysis, is available in the GitHub repository\footnote{\href{https://github.com/Xiaozhu-Zhang1998/nonlinearCausalMIP_reproducible_code}{\texttt{https://github.com/Xiaozhu-Zhang1998/FSSS\_Reproducible\_Codes}}.}.

\subsection{Simulation setup} \label{sec:experiment:setup}
We show the effectiveness of our method over ten graph structures. These structures are publicly available networks sourced from \cite{manzour2021integer} and \cite{xu2025integer}. They differ in scale, with the number of nodes $p$ ranging from 6 to 25, and the number of true edges $s^\star$ varying between 6 and 68. For each graph structure, the node set $\mathrm{V}$ and true edge set $E^\star$ are pre-specified. One may freely choose the functional relationships $\{f^\star_{kj} \}_{(k,j) \in E}$ and noise variances $\{ \sigma^{\star2}_j \}_{j \in \mathrm{V}}$ to construct the corresponding structural equation model. For each dataset, we generate $n = 500$ independent and identically distributed samples from the model \eqref{eq:DAG-model}. 

We mainly focus on the ability of our method to correctly recover the true DAG. To measure the closeness of an estimated graph $\widehat{\mathcal{G}}$ to the true graph $\mathcal{G}^\star$, we use the structural Hamming distance (SHD)~\citep{tsamardinos2006max}. Specifically, let $\widehat{A} \in \mathbb{R}^{p \times p}$ and $A^\star \in \mathbb{R}^{p \times p}$ denote the estimated and true adjacency matrices, respectively, where the $(k, j)$-th entry equals 1 if there is an edge from node $k$ to node $j$, and 0 otherwise. The SHD is then given by $d := \sum_{k=1}^p \sum_{j=1}^p |\widehat{A}_{kj} - A^\star_{kj}|$.
We also record and compare the running time for each experiment. The MIP \eqref{eq:MIP} is solved using the \texttt{Gurobi} 12.0.3 Optimizer via its Python interface. Notably, the \texttt{Gurobi} optimizer allows termination upon reaching a pre-specified time limit, while still returning a sub-optimal solution. This ensures that computation remains tractable without requiring indefinite runtimes. 


\subsection{Simulations: comparison to existing methods}
\label{sec:experiment:comparison}
We compare our proposed method with the following state-of-the-art approaches for learning DAG structures: the polynomial-time algorithm \textit{NPVAR} for non-parametric DAG learning~\citep{gao2020polynomial}; the equal-variance algorithm \textit{EqVar} using both top-down and bottom-up approaches to estimate the node ordering for linear models~\citep{chen2019causal}; the score-based method \textit{NoTears} that learns non-linear SEMs by a continuous, and non-convex relaxation of acyclicity constraint~\citep{zheng2020learning}; the algorithm \textit{RESIT} using non-parametric regression and subsequent independence tests~\citep{peters2014causal}; the coordinate descent algorithm \textit{CCDr} using sparse regularization for linear models~\citep{aragam2015concave}; and the heuristic algorithm \textit{CAM} for causal additive models~\citep{buhlmann2014}. Among these algorithms, \textit{NPVAR} and \textit{EqVar} rely specifically on the assumption of homoscedasticity. We refer to our proposed algorithm as \textit{MIP}, and in particular, we implement two versions of \textit{MIP}, each using a different super-structure. The first uses an oracle super-structure, the true moral graph, as adopted by \cite{xu2025integer}. The second uses an estimated super-structure, obtained through the procedure described in Section~\ref{sec:speed-up:superstruc}. 

In this simulation, the true model is specified as follows. For all $(k,j) \in E^\star$, the true function is set to $f^\star_{kj}(x) = ( \sin(x) - \mathbb{E}[\sin(X_k)] + \cos(x) - \mathbb{E}[\cos(X_k)] )  / 2$. Moreover, we consider both homoscedastic and heteroscedastic variance schemes. In the homoscedastic setting, we fix $\sigma^\star_j = 0.5$ for all $j\in \mathrm{V}$. In the heteroscedastic setting, each $\sigma^{\star}_j$ is independently drawn from the interval $[0.5, 1]$ using a transformed Beta distribution $\mathsf{Beta}(a_0=1, \mu_0)   \cdot 0.5 + 0.5$. Notably, this yields a uniform distribution over $[0.5, 1]$ when $\mu_0 = 1$. As $\mu_0$ increases, the sampled variances become increasingly concentrated towards 0.5, gradually approaching the homoscedastic regime.

The same set of basis functions is used across all edges. Specifically, we use degree-two splines with two internal knots to construct basis functions.
The partial order and stable sets are estimated using the bootstrap procedure outlined in Section~\ref{sec:speed-up:partial-stable}. The \textit{CAM} algorithm is applied to each bootstrap sample under the heteroscedastic setting, while \textit{NPVAR} is used under homoscedasticity.
The tuning parameter $\lambda_n^2$ in our estimator \eqref{eq:MIP} is chosen from a grid of values $\bar{c} \cdot p(\log p)^2 /n$ via a BIC score (see Section~\ref{sec:mip}); The parameter $\tilde{\lambda}_n^2$ in the equal-variance version of our estimator \eqref{eq:MIP-equal} is chosen in the same manner. We apply both estimators, and choose the one with smaller BIC score. We set the early stopping threshold as $\tau^{\rm early} = 0.1 \lambda_n^2 / \log p$, so that $\tau^{\rm early} = o(\delta_{n,p})$ when $\lambda^2_n \asymp \delta_{n,p} \asymp p (\log p)^2 / n$. To ensure computational efficiency, each MIP is terminated if the runtime exceeds $60p$ seconds, in which case the sub-optimal solution at termination is returned.

\begin{figure}[ht]
    \centering
    \includegraphics[width=\textwidth]{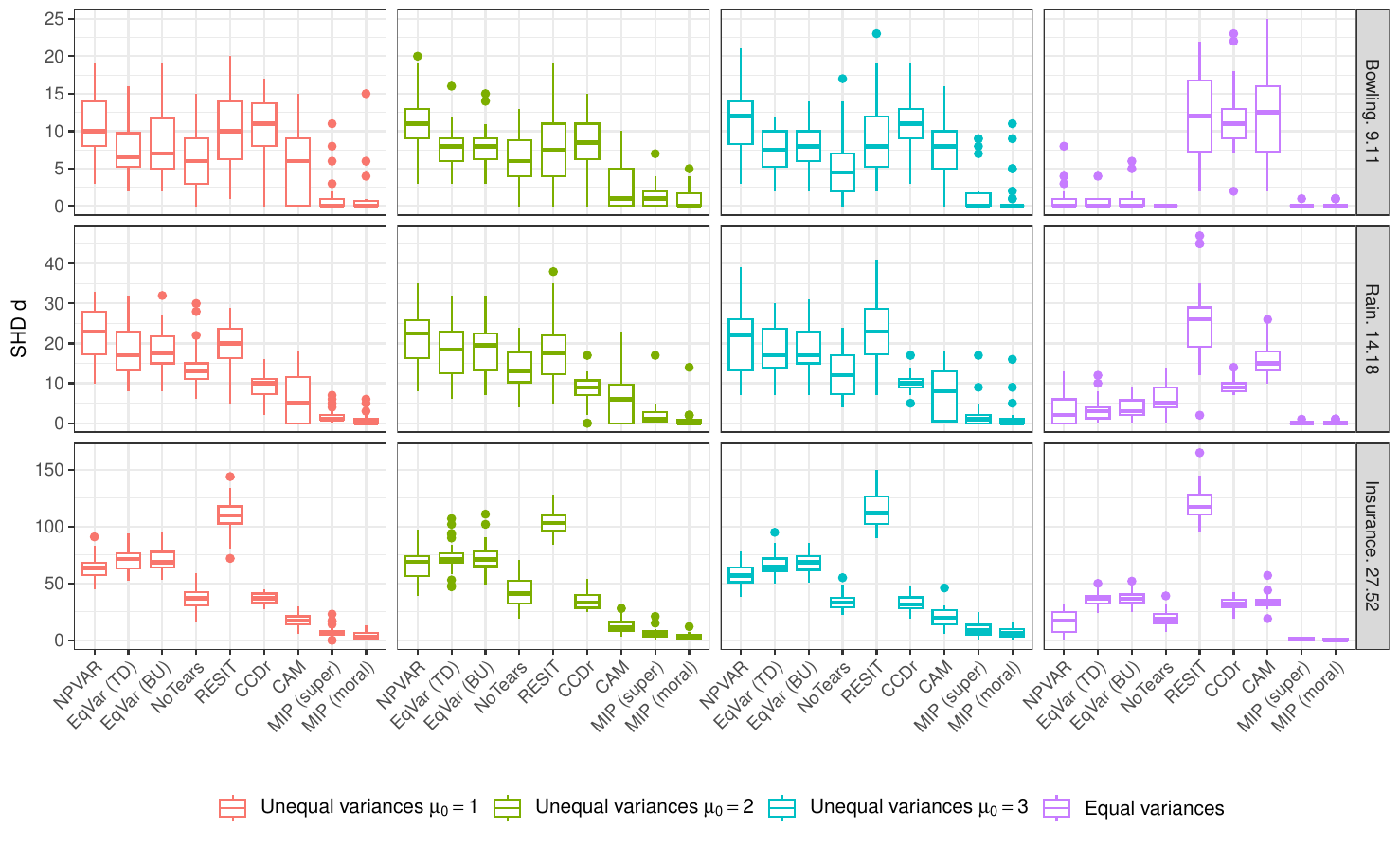}
    \caption{\small Performance comparison between our method \textit{MIP} and baseline methods. The true functions are $f^\star_{kj}(x) = ( \sin(x) - \mathbb{E}[\sin(X_k)] + \cos(x) - \mathbb{E}[\cos(X_k)] )  / 2$, and splines are used as the basis functions. Each row corresponds to one graph structure, indicated by name.$p$.$s^\star$. The EqVar (TD) and EqVar (BU) refer to the top-down and bottom-up version of \textit{EqVar}, respectively. The MIP (super) and MIP (moral) refer to the \textit{MIP} using the estimated super-structure and the true moral graph, respectively. Different colors correspond to different variance schemes. Each box summarizes results from 30 independent trials. 
    }
    \label{fig:compare_boxplot}
\end{figure}

Figure~\ref{fig:compare_boxplot} shows the performance of each method in terms of graph recovery on three representative graph structures, with the number of nodes $p$ ranging from 9 to 27 and the number of true edges $s^\star$ ranging from 11 to 52. Implementation details for the baseline methods are provided in Appendix~\ref{sec:appendix:setup-comparison}.
We observe that both versions of the \textit{MIP} algorithm consistently achieve the lowest SHD across all four variance settings. Notably, the \textit{MIP} variant using the estimated super-structure, \emph{MIP-super}, performs comparably to the version that relies on the oracle moral graph, \emph{MIP-moral}, suggesting the robustness of the super-structure estimation procedure.
Complete results for all ten graph structures, together with running times and further discussion, are presented in~\ref{sec:appendix:setup-comparison} and Tables~\ref{tab:compare-b0-1}--\ref{tab:compare-equal}. We also provide additional experiments with another choice of $f^\star_{kj}$ and radial basis functions in Appendix~\ref{sec:appendix:setup-comparison-more} and Tables~\ref{tab:compare-another-equal}--\ref{tab:compare-another-b0-3}. These yield similar observations, demonstrating the improved performance of our approach relative to the others. 

\subsection{Simulations: role of number of basis functions \texorpdfstring{$R_n$}{Rn}}

A larger number of basis functions $R_n$ improves the approximation accuracy of each function class $\mathcal{F}_{kj}$ by its truncated version $\mathcal{F}_{kj,n}$. Theoretically, a non-parametric model with $R_n \asymp n^{\zeta}$ for some appropriately chosen $\zeta > 0$ can substantially improve graph recovery and the convergence rate of variance estimators. We aim to demonstrate this through simulation.

\begin{figure}[ht]
    \centering
    \includegraphics[width=\textwidth]{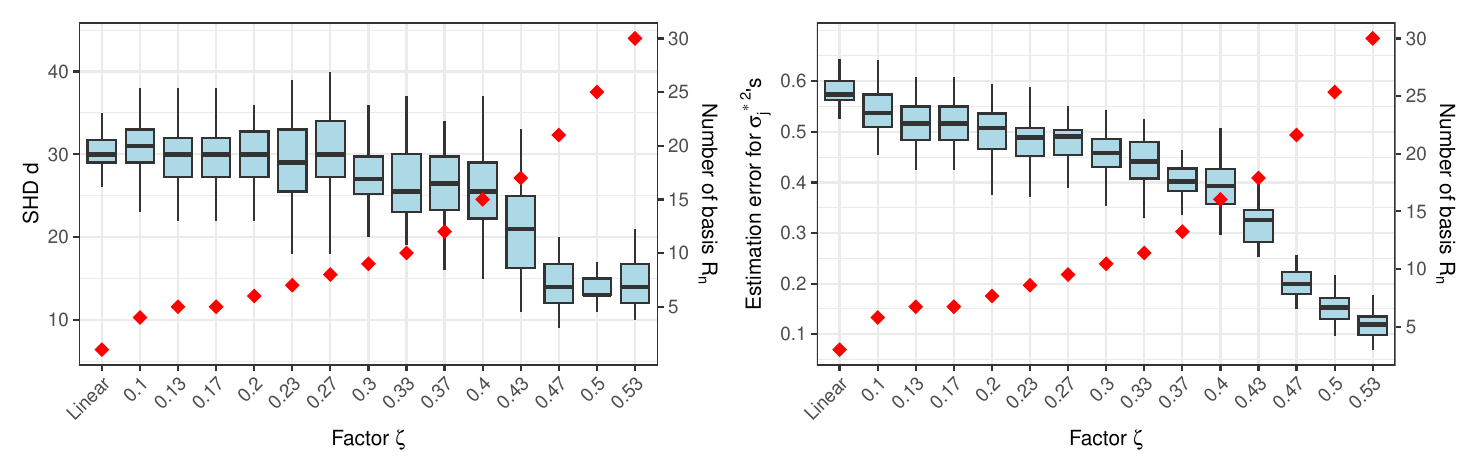}
    \caption{\small The performance of our method \textit{MIP} with $R_n = n^\zeta$ basis functions across a range of $\zeta$ values. Left panel: the blue boxes represent the structural Hamming distance $d$, measured using its left vertical axis. Right panel: the blue boxes represent the values of $p^{-1}\sum_j (\widehat{\sigma}_j^2 - \sigma_j^{\star2})^2$, measured using its left vertical axis. Each box summarizes results from 30 independent trials. For both panels, the red dots indicate the corresponding $R_n$ values, measured using their right vertical axis. The ``linear'' label refers to \textit{MIP} applied directly to the design matrix $X$ without incorporating any additional basis functions.
    }
    \label{fig:non_parametric}
\end{figure}

In this simulation, we use a graph structure with $p = 15$ and $s^\star = 25$. For all $(k,j)\in E^\star$, the true function is set as $f^\star_{kj}(x) = \sin(20/x) - \mathbb{E}[\sin(20/X_k)]$, which exhibits highly oscillatory behavior near the origin and is particularly challenging to approximate using basis functions. The noise variances $\sigma^\star_j$ alternate between 0.5 and 1 across nodes. We apply procedure \eqref{eq:MIP} and use the same set of degree-three splines across all edges. For a given $\zeta > 0$, we place $\lfloor n^{\zeta} \rfloor$ internal knots with evenly spaced percentiles, resulting in a total of $R_n = 3 + \lfloor n^{\zeta} \rfloor$ basis functions. We adopt the true moral graph as the super-structure; however, we do not use any partial order set, stable set, or early stopping. The tuning parameter $\lambda_n^2$ is fixed at 0.01, and the \textit{MIP} algorithm is terminated after 30 minutes. Figure~\ref{fig:non_parametric} compares the performance of \textit{MIP} in terms of both graph recovery and variance estimation accuracy across different values of $\zeta$. As $\zeta$ increases, the number of basis functions $R_n$ grows exponentially with the sample size, enabling the model to capture increasingly complex patterns in the true functions. Consequently, both the SHD and the variance estimation error $p^{-1}\sum_j (\widehat{\sigma}_j^2 - \sigma_j^{\star2})^2$ exhibit a decreasing trend. In contrast, when no basis functions are used, as indicated by the “linear” label, the model performs substantially worse in both aspects.

\subsection{Real data}

Causal chambers~\citep{gamella2025causal} are computer-controlled devices that allow for manipulating and measuring variables from physical systems, providing a rich testbed for causal discovery algorithms. For the light tunnel chamber, we focus on the the ground truth involving eight variables, as shown in Figure \ref{sfig:ground-truth}. Among all the causal edges, the only non-linear causal effect is $\theta_1 - \theta_2 \rightarrow \widetilde{I}_3$, with the true function $f^\star_{(\theta_1 - \theta_2) \rightarrow \widetilde{I}_3}(\theta_1 - \theta_2)= \beta_0 + \beta_1\cos^2(\theta_1 - \theta_2)$ for some $\beta_0, \beta_1 \in\mathbb{R}$. We collect $n = 3000$ observational samples of the eight variables and apply both \textit{MIP} and the baseline methods. The implementation details follow those described in Section~\ref{sec:experiment:comparison}.
Figure~\ref{sfig:chamber-comparison} presents a comparison of their performances, where \textit{MIP} achieves the most accurate estimation. In particular, as shown in Figure~\ref{sfig:chamber-mip}, \textit{MIP} successfully recovers the skeleton of the true graph, including the non-linear effect. However, it incorrectly assigns the directions between $(R, B, G)$ and $\widetilde{C}$, which may stem from the non-identifiability of linear structural equation models.

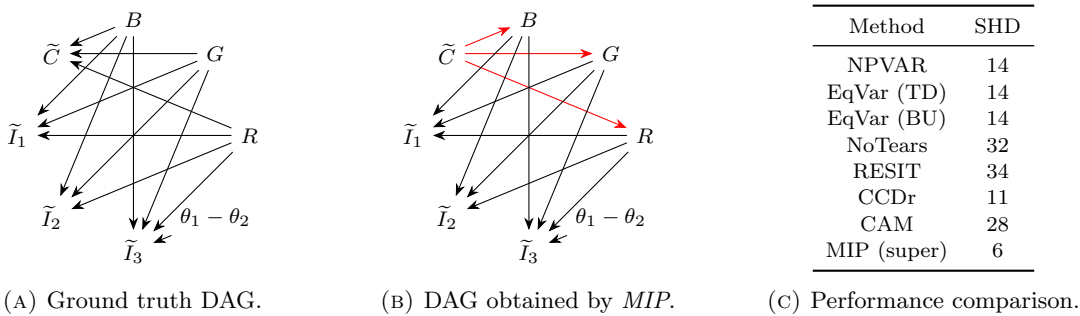
\begin{figure}[!ht]
    \centering
    \begin{subfigure}[b]{0.32\textwidth}
        \centering
        \begin{tikzpicture}[scale=0.7,
            >=Stealth, every node/.style={font=\footnotesize}, arrow/.style={->, thin}, circnode/.style={anchor=center}]
            \def\r{2.2}
            \foreach \i/\name in {
                0/R,
                1/G,
                2/B,
                3/{\widetilde{C}},
                4/{\widetilde{I}_1},
                5/{\widetilde{I}_2},
                6/{\widetilde{I}_3},
                7/{\theta_1 - \theta_2}
            } {
                \node[circnode] (N\i) at ({\r*cos(360*\i/8)}, {\r*sin(360*\i/8)}) {$\name$};
            }
            \foreach \i in {0,1,2} {
                \draw[arrow] (N\i) -- (N3);
            }
            \foreach \src in {0,1,2} {
                \foreach \dst in {4,5,6} {
                    \draw[arrow] (N\src) -- (N\dst);
                }
            }
            \draw[arrow] (N7) -- (N6);
        \end{tikzpicture}
        \caption{\small Ground truth DAG.}
        \label{sfig:ground-truth}
    \end{subfigure}
    \hfill
    \begin{subfigure}[b]{0.32\textwidth}
        \centering
        \begin{tikzpicture}[scale=0.7,
            >=Stealth, every node/.style={font=\footnotesize}, arrow/.style={->, thin}, circnode/.style={anchor=center}]
            \def\r{2.2}
            \foreach \i/\name in {
                0/R,
                1/G,
                2/B,
                3/{\widetilde{C}},
                4/{\widetilde{I}_1},
                5/{\widetilde{I}_2},
                6/{\widetilde{I}_3},
                7/{\theta_1-\theta_2}
            } {
                \node[circnode] (N\i) at ({\r*cos(360*\i/8)}, {\r*sin(360*\i/8)}) {$\name$};
            }
            \foreach \i in {0,1,2} {
                \draw[arrow, red] (N3) -- (N\i);
            }
            \foreach \src in {0,1,2} {
                \foreach \dst in {4,5,6} {
                    \draw[arrow] (N\src) -- (N\dst);
                }
            }
            \draw[arrow] (N7) -- (N6);
        \end{tikzpicture}
        \caption{\small DAG obtained by \textit{MIP}.}
        \label{sfig:chamber-mip}
    \end{subfigure}
    \hfill
    \begin{subfigure}[b]{0.32\textwidth}
        \centering
        \footnotesize
        \begin{tabular}{cc}
            \toprule
            Method & SHD  \\
            \midrule
            NPVAR       & 14  \\
            EqVar (TD)  & 14  \\
            EqVar (BU)  & 14  \\
            NoTears     & 32  \\
            RESIT       & 34  \\
            CCDr        & 11  \\
            CAM         & 28  \\
            MIP (super) & 6   \\
            \bottomrule
        \end{tabular}
        \caption{\small Performance comparison.}
        \label{sfig:chamber-comparison}
    \end{subfigure}
    \caption{\small Learning causal models of the light tunnel chamber in~\cite{gamella2025causal}. Left panel: the ground truth. Middle panel: the estimated graph by our method \textit{MIP}, where red edges indicate incorrect discoveries. Right panel: performance comparison for all methods. }
    \label{fig:causal_chamber}
\end{figure}

\vspace{0.2in}
\section{Discussion}
\label{sec:discussion}
We propose and theoretically analyze a convex MIP formulation for causal discovery in causal additive models. Unlike existing methods, our method enjoys both optimization and statistical guarantees: it can obtain a solution up to a pre-specified optimality gap, and we prove that it consistently estimates the underlying DAG, even as the number of nodes grows with the sample size. Our experiments on synthetic and real data highlight improvements over other approaches.

Our work opens several avenues for future research. First, our causal additive model in \eqref{eq:DAG-model} does not allow interactions among the variables; developing extensions of our framework to account for such interactions would be valuable. Second, our estimator relies on Gaussian errors; adapting it to non-Gaussian errors would broaden the utility of our framework. Third, in the linear causal discovery setting, \citet{Xu2024AnAO} proposed a tailored coordinate descent method to approximately solve their MIP and proved its asymptotic optimality. Applying a similar approach to causal additive models could be of practical interest. Finally, we relied on modern commercial solvers to solve our optimization problem; designing specialized branch-and-bound methods that exploit the problem structure may lead to substantial computational gains \citep{Hazimeh2020SparseRA}.

\vspace{0.2in}
\section*{Acknowledgments}
We acknowledge funding from NSF grant DMS-2413074 (Xiaozhu Zhang and Armeen Taeb).


\bibliographystyle{agsm}
\bibliography{refs}

@article{shojaie2024learning,
  title={Learning Directed Acyclic Graphs from Partial Orderings},
  author={Shojaie, Ali and Chen, Wenyu},
  journal={arXiv preprint arXiv:2403.16031},
  year={2024}
}

@article{schwarz1978estimating,
  title={Estimating the dimension of a model},
  author={Schwarz, Gideon},
  journal={The annals of statistics},
  pages={461--464},
  year={1978},
  publisher={JSTOR}
}

@article{chakraborty2022nonparametric,
  title={Nonparametric causal structure learning in high dimensions},
  author={Chakraborty, Shubhadeep and Shojaie, Ali},
  journal={Entropy},
  volume={24},
  number={3},
  pages={351},
  year={2022},
  publisher={MDPI}
}

@article{voorman2014graph,
  title={Graph estimation with joint additive models},
  author={Voorman, Arend and Shojaie, Ali and Witten, Daniela},
  journal={Biometrika},
  volume={101},
  number={1},
  pages={85--101},
  year={2014},
  publisher={Oxford University Press}
}

@article{haris2022generalized,
  title={Generalized sparse additive models},
  author={Haris, Asad and Simon, Noah and Shojaie, Ali},
  journal={Journal of machine learning research},
  volume={23},
  number={70},
  pages={1--56},
  year={2022}
}

@article{vandegeer2013,
    author = {Sara van de Geer and Peter B{\"u}hlmann},
    title = {{$\ell_{0}$-penalized maximum likelihood for sparse directed acyclic graphs}},
    volume = {41},
    journal = {The Annals of Statistics},
    number = {2},
    publisher = {Institute of Mathematical Statistics},
    pages = {536 -- 567},
    keywords = {Causal inference, faithfulness condition, Gaussian structural equation model, graphical modeling, high-dimensional inference},
    year = {2013},
}

@book{spirtes2000causation,
  title={Causation, Prediction, and Search},
  author={Spirtes, Peter and Glymour, Clark N and Scheines, Richard},
  year={2000},
  publisher={MIT press},
  pages={116-119}
}

@article{seng2022notears,
  title={Tearing Apart {NOTEARS}: Controlling the Graph Prediction via Variance Manipulation},
  author={Seng, Jonas and {Zecevi\'{c}}, Matej and Dhami, Devendra Singh and Kersting, Kristian},
  journal={arXiv preprint arXiv:2206.07195},
  year={2022}
}

@article{reisach2021varsort,
  title={Beware of the simulated {DAG}! causal discovery benchmarks may be easy to game},
  author={Reisach, Alexander and Seiler, Christof and Weichwald, Sebastian},
  journal={Advances in Neural Information Processing Systems},
  volume={34},
  pages={27772--27784},
  year={2021}
}

@inproceedings{yu2019dag,
  title={{DAG-GNN: DAG} structure learning with graph neural networks},
  author={Yu, Yue and Chen, Jie and Gao, Tian and Yu, Mo},
  booktitle={International Conference on Machine Learning},
  pages={7154--7163},
  year={2019},
  organization={PMLR}
}

@article{zheng2018dags,
  title={{DAGs} with {NO TEARS}: continuous optimization for structure learning},
  author={Zheng, Xun and Aragam, Bryon and Ravikumar, Pradeep K. and Xing, Eric P},
  journal={Advances in Neural Information Processing Systems},
  volume={31},
  pages = {9492--9503},
  year={2018}
}

@article{Shah2018TheHO,
    author = {Rajen D. Shah and Jonas Peters},
    title = {{The hardness of conditional independence testing and the generalised covariance measure}},
    volume = {48},
    journal = {The Annals of Statistics},
    number = {3},
    publisher = {Institute of Mathematical Statistics},
    pages = {1514 -- 1538},
    keywords = {Conditional independence, Hypothesis testing, kernel ridge regression, testability, wild bootstrap},
    year = {2020},
    doi = {10.1214/19-AOS1857}
}

@article{zhang2011kernel,
author = {Zhang, Kun and Peters, Jonas and Janzing, Dominik and Sch\"{o}lkopf, Bernhard},
title = {Kernel-based conditional independence test and application in causal discovery},
year = {2011},
isbn = {9780974903972},
publisher = {AUAI Press},
address = {Arlington, Virginia, USA},
journal = {Proceedings of the Twenty-Seventh Conference on Uncertainty in Artificial Intelligence},
pages = {804–813},
numpages = {10},
location = {Barcelona, Spain},
series = {UAI'11}
}

@article{Strobl2017ApproximateKC,
  title={Approximate kernel-based conditional independence tests for fast non-parametric causal discovery},
  author={Strobl, Eric V and Zhang, Kun and Visweswaran, Shyam},
  journal={Journal of Causal Inference},
  volume={7},
  number={1},
  pages={20180017},
  year={2019},
  publisher={De Gruyter}
}

@article{heinze2018invariant,
  title={Invariant causal prediction for nonlinear models},
  author={Heinze-Deml, Christina and Peters, Jonas and Meinshausen, Nicolai},
  journal={Journal of Causal Inference},
  volume={6},
  number={2},
  pages={20170016},
  year={2018},
  publisher={De Gruyter}
}

@article{buhlmann2014,
    author = {Peter B{\"u}hlmann and Jonas Peters and Jan Ernest},
    title = {{CAM: Causal additive models, high-dimensional order search and penalized regression}},
    volume = {42},
    journal = {The Annals of Statistics},
    number = {6},
    publisher = {Institute of Mathematical Statistics},
    pages = {2526 -- 2556},
    keywords = {graphical modeling, intervention calculus, Nonparametric regression, regularized estimation, Sparsity, structural equation model},
    year = {2014},
}

@article{ravikumar2009sparse,
  title={Sparse additive models},
  author={Ravikumar, Pradeep and Lafferty, John and Liu, Han and Wasserman, Larry},
  journal={Journal of the Royal Statistical Society Series B: Statistical Methodology},
  volume={71},
  number={5},
  pages={1009--1030},
  year={2009},
  publisher={Oxford University Press}
}

@article{xu2025integer,
  title={Integer programming for learning directed acyclic graphs from non-identifiable Gaussian models},
  author={Xu, Tong and Taeb, Armeen and K{\"u}{\c{c}}{\"u}kyavuz, Simge and Shojaie, Ali},
  journal={Biometrika},
  pages={asaf032},
  year={2025a},
  publisher={Oxford University Press}
}

@inproceedings{silander2006simple,
  author = {Silander, Tomi and Myllym\"{a}ki, Petri},
  title = {A simple approach for finding the globally optimal Bayesian network structure},
  year = {2006},
  isbn = {0974903922},
  publisher = {AUAI Press},
  address = {Arlington, Virginia, USA},
  booktitle = {Proceedings of the Twenty-Second Conference on Uncertainty in Artificial Intelligence},
  pages = {445–452},
  numpages = {8},
  location = {Cambridge, MA, USA},
  series = {UAI'06}
}

@article{chickering2002optimal,
  title={Optimal structure identification with greedy search},
  author={Chickering, David Maxwell},
  journal={Journal of Machine Learning Research},
  volume={3},
  number={Nov},
  pages={507--554},
  year={2002}
}

@article{hauser2012characterization,
  title={Characterization and greedy learning of interventional Markov equivalence classes of directed acyclic graphs},
  author={Hauser, Alain and B{\"u}hlmann, Peter},
  journal={Journal of Machine Learning Research},
  volume={13},
  number={1},
  pages={2409--2464},
  year={2012},
  publisher={JMLR. org}
}

@article{grotschel1985acyclic,
  title={On the acyclic subgraph polytope},
  author={Gr{\"o}tschel, Martin and J{\"u}nger, Michael and Reinelt, Gerhard},
  journal={Mathematical Programming},
  volume={33},
  number={1},
  pages={28--42},
  year={1985},
  publisher={Springer}
}

@article{park2017bayesian,
  title={Bayesian network learning via topological order},
  author={Park, Young Woong and Klabjan, Diego},
  journal={Journal of Machine Learning Research},
  volume={18},
  number={99},
  pages={1--32},
  year={2017}
}

@article{manzour2021integer,
  title={Integer programming for learning directed acyclic graphs from continuous data},
  author={Manzour, Hasan and K{\"u}{\c{c}}{\"u}kyavuz, Simge and Wu, Hao-Hsiang and Shojaie, Ali},
  journal={INFORMS Journal on Optimization},
  volume={3},
  number={1},
  pages={46--73},
  year={2021},
  publisher={INFORMS}
}

@article{kertel2025boosting,
  author  = {Maximilian Kertel and Nadja Klein},
  title   = {Boosting Causal Additive Models},
  journal = {Journal of Machine Learning Research},
  year    = {2025},
  volume  = {26},
  number  = {169},
  pages   = {1--49}
}

@article{kucukyavuz2023consistent,
  title={Consistent second-order conic integer programming for learning Bayesian networks},
  author={K{\"u}{\c{c}}{\"u}kyavuz, Simge and Shojaie, Ali and Manzour, Hasan and Wei, Linchuan and Wu, Hao-Hsiang},
  journal={Journal of Machine Learning Research},
  volume={24},
  number={322},
  pages={1--38},
  year={2023}
}

@article{solus2021consistency,
  title={Consistency guarantees for greedy permutation-based causal inference algorithms},
  author={Solus, Liam and Wang, Yuhao and Uhler, Caroline},
  journal={Biometrika},
  volume={108},
  number={4},
  pages={795--814},
  year={2021},
  publisher={Oxford University Press}
}

@article{gao2020polynomial,
  title={A polynomial-time algorithm for learning nonparametric causal graphs},
  author={Gao, Ming and Ding, Yi and Aragam, Bryon},
  journal={Advances in Neural Information Processing Systems},
  volume={33},
  pages={11599--11611},
  year={2020}
}

@article{peters2014causal,
  title={Causal discovery with continuous additive noise models},
  author={Peters, Jonas and Mooij, Joris M and Janzing, Dominik and Sch{\"o}lkopf, Bernhard},
  journal={Journal of Machine Learning Research},
  volume={15},
  number={1},
  pages={2009--2053},
  year={2014},
  publisher={JMLR. org}
}

@article{Wei2020DAGsWN,
  title={DAGs with No Fears: A closer look at continuous optimization for learning Bayesian networks},
  author={Wei, Dennis and Gao, Tian and Yu, Yue},
  journal={Advances in Neural Information Processing Systems},
  volume={33},
  pages={3895--3906},
  year={2020}
}

@inproceedings{Hoyer2008NonlinearCD,
    author = {Hoyer, Patrik and Janzing, Dominik and Mooij, Joris M and Peters, Jonas and Sch\"{o}lkopf, Bernhard},
    booktitle = {Advances in Neural Information Processing Systems},
    pages = {},
    title = {Nonlinear causal discovery with additive noise models},
    volume = {21},
    year = {2008}
}

@article{Wang2018HighdimensionalCD,
  title={High-dimensional causal discovery under non-Gaussianity},
  author={Wang, Y Samuel and Drton, Mathias},
  journal={Biometrika},
  volume={107},
  number={1},
  pages={41--59},
  year={2020},
  publisher={Oxford University Press}
}

@article{Shimizu2006ALN,
  title={A Linear Non-Gaussian Acyclic Model for Causal Discovery},
  author={Shohei Shimizu and Patrik O. Hoyer and Aapo Hyv{\"a}rinen and Antti J. Kerminen},
  journal={Journal of Machine Learning Research},
  year={2006},
  volume={7},
  pages={2003-2030}
}

@book{peters2017elements,
  title={Elements of Causal Inference: Foundations and Learning Algorithms},
  author={Peters, Jonas and Janzing, Dominik and Sch{\"o}lkopf, Bernhard},
  year={2017},
  publisher={MIT press}
}

@article{Glymour2019ReviewOC,
  title={Review of causal discovery methods based on graphical models},
  author={Glymour, Clark and Zhang, Kun and Spirtes, Peter},
  journal={Frontiers in Genetics},
  volume={10},
  pages={524},
  year={2019},
  publisher={Frontiers Media SA}
}

@inproceedings{maeda2021causal,
  title={Causal additive models with unobserved variables},
  author={Maeda, Takashi Nicholas and Shimizu, Shohei},
  booktitle={Uncertainty in Artificial Intelligence},
  pages={97--106},
  year={2021},
  organization={PMLR}
}

@article{chen2019causal,
  title={On causal discovery with an equal-variance assumption},
  author={Chen, Wenyu and Drton, Mathias and Wang, Y Samuel},
  journal={Biometrika},
  volume={106},
  number={4},
  pages={973--980},
  year={2019},
  publisher={Oxford University Press}
}

@article{hastie1986generalized,
  title={Generalized additive models},
  author={Hastie, Trevor and Tibshirani, Robert},
  journal={Statistical Science},
  volume={1},
  number={3},
  pages={297--310},
  year={1986},
  publisher={Institute of Mathematical Statistics}
}

@article{bennett1962proabbility,
 author = {George Bennett},
 journal = {Journal of the American Statistical Association},
 number = {297},
 pages = {33--45},
 title = {Probability Inequalities for the Sum of Independent Random Variables},
 volume = {57},
 year = {1962}
}

@book{van2000asymptotic,
  title={Asymptotic Statistics},
  author={Van der Vaart, Aad W},
  volume={3},
  year={2000},
  publisher={Cambridge university press}
}

@inproceedings{zheng2020learning,
  title={Learning sparse nonparametric dags},
  author={Zheng, Xun and Dan, Chen and Aragam, Bryon and Ravikumar, Pradeep and Xing, Eric},
  booktitle={International conference on artificial intelligence and statistics},
  pages={3414--3425},
  year={2020},
  organization={Pmlr}
}

@article{aragam2015concave,
  title={Concave penalized estimation of sparse Gaussian Bayesian networks},
  author={Aragam, Bryon and Zhou, Qing},
  journal={Journal of Machine Learning Research},
  volume={16},
  number={1},
  pages={2273--2328},
  year={2015},
  publisher={JMLR. org}
}

@article{chen2023unified,
  title={A Unified Early Termination Technique for Primal-dual Algorithms in Mixed Integer Conic Programming},
  author={Chen, Yuwen and Ning, Catherine and Goulart, Paul},
  journal={IEEE Control Systems Letters},
  volume={7},
  pages={2803--2808},
  year={2023},
  publisher={IEEE}
}

@article{gamella2025causal,
  title={Causal chambers as a real-world physical testbed for AI methodology},
  author={Gamella, Juan L and Peters, Jonas and B{\"u}hlmann, Peter},
  journal={Nature Machine Intelligence},
  volume={7},
  number={1},
  pages={107--118},
  year={2025},
  publisher={Nature Publishing Group UK London}
}

@article{peters2013identifiability,
	title={Identifiability of {G}aussian structural equation models with equal error variances},
	author={Peters, Jonas and B{\"u}hlmann, Peter},
	journal={Biometrika},
	volume={101},
	number={1},
	pages={219--228},
	year={2013},
	publisher={Oxford University Press}
}

@article{ghoshal2019direct,
  title={Direct estimation of difference between structural equation models in high dimensions},
  author={Ghoshal, Asish and Honorio, Jean},
  journal={arXiv preprint arXiv:1906.12024},
  year={2019}
}

@article{sondhi2019reduced,
  title={The Reduced {PC}-Algorithm: Improved Causal Structure Learning in Large Random Networks},
  author={Sondhi, Arjun and Shojaie, Ali},
  journal={Journal of Machine Learning Research},
  volume={20},
  number={164},
  pages={1--31},
  year={2019}
}

@article{Xu2024AnAO,
  title={An Asymptotically Optimal Coordinate Descent Algorithm for Learning Bayesian Networks from Gaussian Models},
  author={Xu, Tong and {K{\"u}{\c{c}}{\"u}kyavuz}, Simge and Shojaie, Ali and Taeb, Armeen},
  journal={Accepted to Journal of Machine Learning Research},
  year={2025b},
  volume={abs/2408.11977}
}

@book{lauritzen1996graphical,
  author    = {Lauritzen, Steffen L},
  title     = {Graphical Models},
  publisher = {Oxford University Press},
  year      = {1996},
  address   = {Oxford, United Kingdom},
  isbn      = {0-19-852219-3}
}

@article{Hazimeh2020SparseRA,
  title={Sparse regression at scale: branch-and-bound rooted in first-order optimization},
  author={Hussein Hazimeh and Rahul Mazumder and Ali Saab},
  journal={Mathematical Programming},
  year={2020},
  volume={196},
  pages={347 - 388}
}

@article{tsamardinos2006max,
	author = {Tsamardinos, Ioannis and Brown, Laura E and Aliferis, Constantin F},
	journal = {Machine Learning},
	number = {1},
	pages = {31--78},
	publisher = {Springer Nature BV},
	title = {The max-min hill-climbing {Bayesian} network structure learning algorithm},
	volume = {65},
	year = {2006}}

@article{uhler2013geometry,
  title={Geometry of the faithfulness assumption in causal inference},
  author={Uhler, Caroline and Raskutti, Garvesh and B{\"u}hlmann, Peter and Yu, Bin},
  journal={The Annals of Statistics},
  pages={436--463},
  year={2013},
  volume={41},
  number = {2},
  publisher={IMS}
}

@article{meier2009high,
    author = {Lukas Meier and Sara van de Geer and Peter B{\"u}hlmann},
    title = {{High-dimensional additive modeling}},
    volume = {37},
    journal = {The Annals of Statistics},
    number = {6B},
    pages = {3779 -- 3821},
    keywords = {group lasso, Model selection, Nonparametric regression, Oracle inequality, penalized likelihood, Sparsity},
    year = {2009}
}

@book{vershynin2018high,
  title={High-dimensional Probability: An Introduction with Applications in Data Science},
  author={Vershynin, Roman},
  volume={47},
  year={2018},
  publisher={Cambridge university press}
}

@book{efromovich1999nonparametric,
  title={Nonparametric Curve Estimation: Methods, Theory, and Applications},
  author={Efromovich, Sam},
  year={1999},
  publisher={Springer}
}

\clearpage
\appendix


\renewcommand{\thesection}{\Alph{section}}

\setcounter{section}{0}

\section{Discussion of the zero-mean formulation}
\label{sec:appendix:zeromean}
Model \eqref{eq:DAG-model} assumes
$$
X_j=\sum_{k\in \mathrm{pa}(j)} f^\star_{kj}(X_k)+\epsilon_j,
\qquad 
\epsilon_j\sim N(0,\sigma_j^{\star 2}),
\qquad 
\mathbb{E}[f^\star_{kj}(X_k)]=0 .
$$
We show that the zero-mean requirement and the absence of intercepts are without loss of generality.
Start from a general additive structural equation model on the same oracle DAG $\mathcal{G}^\star=(V,E^\star)$,
allowing intercepts and non-zero mean components,
\begin{equation}\label{eq:general_sem_app}
	X_j
	= \mu_j+\sum_{k\in \mathrm{pa}(j)} \tilde f_{kj}(X_k)+\tilde\epsilon_j,
	\qquad j \in \mathrm{V},
\end{equation}
where the $\tilde f_{kj}$ are arbitrary measurable functions and
$\tilde\epsilon_1,\dots,\tilde\epsilon_p$ are mutually independent Gaussians with variances
$\sigma_j^{\star 2}$ but possibly non-zero means.

For each $(k,j)\in E^\star$ let $m_{kj}:=\mathbb{E}[\tilde f_{kj}(X_k)]$, and for each $j$ let
$m_j:=\mathbb{E}[\tilde\epsilon_j]$.  Define
$$
f^\star_{kj}(x):=\tilde f_{kj}(x)-m_{kj}, 
\qquad
\epsilon_j:=\tilde\epsilon_j-m_j,
\qquad
\mu'_j:=\mu_j+\sum_{k\in \mathrm{pa}(j)} m_{kj}+m_j .
$$
Then \eqref{eq:general_sem_app} can be equivalently formulated as
$$
X_j=\mu'_j+\sum_{k\in \mathrm{pa}(j)} f^\star_{kj}(X_k)+\epsilon_j,
$$
with $\mathbb{E}[f^\star_{kj}(X_k)]=0$ and $\epsilon_j\sim N(0,\sigma_j^{\star 2})$ by construction.
Only constants were subtracted, so $\mathrm{pa}(j)$, $\mathcal{G}^\star$, and the induced distribution of $X$
are unchanged. Moreover, in this representation $\mu'_j=\mathbb{E}[X_j]$.  Center the variables by
$X_j^{c}:=X_j-\mu'_j$. By defining the translated functions
$$
g^\star_{kj}(x):=f^\star_{kj}(x+\mu'_k),
$$
we obtain
$$
X_j^{c}=\sum_{k\in \mathrm{pa}(j)} g^\star_{kj}(X_k^{c})+\epsilon_j .
$$
Each $g^\star_{kj}$ is a translation of $f^\star_{kj}$, hence it is non-linear and three-times
differentiable whenever $f^\star_{kj}$ is. Therefore, the centered model still satisfies the standing assumptions of Proposition~\ref{prop:identifiability}. Finally, by relabeling the centered variables $X_j^{c}$ again as $X_j$, and the translated functions $g^\star_{kj}$ again as $f^\star_{kj}$, we recover exactly the normalized form \eqref{eq:DAG-model}. Thus the zero-mean convention is without loss of generality.

Note that when $\mathrm{pa}(j)=\varnothing$, centering removes the intercept entirely, so the equation for $X_j$ has no intercept term. This matches the constraint \eqref{eq:MIP:b2} in our MIP that forces a zero intercept whenever node $j$ has no parents.

\vspace{0.2in}
\section{Proof of Proposition \ref{prop:log-lik-equiv}}
\label{sec:appendix:equivalence}
\begin{proof}
    Recall the definition 
\begin{equation*}
    Z:=
    \begin{bmatrix}
        X_1, \cdots, X_p,  &1, & b_1(X_1), \cdots, b_1(X_p), & \cdots, & b_{R_n}(X_1), \cdots, b_{R_n}(X_p)
    \end{bmatrix} \in \mathbb{R}^{p{R_n} + p+1}.
\end{equation*}
    Let $Z^{(i)} \in \mathbb{R}^{pR_n + p+1}$ denote the $i$-th sample of the random vector $Z$.
    Moreover, we defined the mapping $\theta = \mu (\Gamma)$: for any $\Gamma \in \mathbb{R}^{p \times (pR_n + p +1)}$, $\sum_{k=1}^p f_{kj}(\cdot) = -[\Gamma_{j,p + 1} + \sum_{k=1}^p\sum_{r=1}^{R_n} \Gamma_{j, pr + 1+k} \cdot b_r(\cdot) ] \cdot \Gamma_{jj}^{-1}$ and $\sigma_j = \Gamma_{jj}^{-1}$. Denote the $j$-th row of the matrix $\Gamma$ as $\gamma_j \in \mathbb{R}^{pR_n + p+1}$. Now replacing $\theta$ with $\mu (\Gamma)$ in $\ell_n(\theta)$, we obtain
    \begin{align*}
    \begin{split}
        \ell_n(\mu (\Gamma))
    &= \sum_{j=1}^p \log \Gamma_{jj}^{-2} + \sum_{j=1}^p \frac{ \|X_j + \Gamma_{jj}^{-1} [ \Gamma_{j,p+1} + \sum_{k=1}^p \sum_{r=1}^{R_n} \Gamma_{j,pr + 1+k} b_r(\cdot) ] \|_n^2 }{\Gamma_{jj}^{-2}}  \\
    &= \sum_{j=1}^p -2 \log \Gamma_{jj} + \sum_{j=1}^p  \left\| \Gamma_{jj} X_j + \Gamma_{j,p+1} +\sum_{k =1}^p \sum_{r = 1}^{R_n} \Gamma_{j, pr +1 + k} b_r(X_k) \right\|_n^2 \\
    &= \sum_{j=1}^p -2 \log \Gamma_{jj} + \frac{1}{n} \sum_{j=1}^p \sum_{i=1}^n (\gamma_j^\top Z^{(i)})^2 
    = \sum_{j=1}^p -2 \log \Gamma_{jj} + \tr(\Gamma^\top \Gamma \widehat{\Sigma}_n),
    \end{split}
    \end{align*}
    where we defined $\widehat{\Sigma}_n = n^{-1} \sum_{i=1}^n Z^{(i)} Z^{(i) \top }$.
\end{proof}

\vspace{0.2in}
\section{Discussion and implications of assumptions}

\subsection{Valid basis functions and Sobolev ellipsoid}
\label{sec:appendix:sobolev}

Under Assumption \ref{assump:sobolev}, for each $(k, j) \in \mathcal{E}$, the function classes $\mathcal{F}_{kj}$ can be rewritten as
\begin{equation}\label{eq:F-kj}
    \mathcal{F}_{kj} := \left\{
\begin{aligned}
f_{kj} 
:\ & f_{kj}(\cdot) = \beta_{0kj} + \sum_{r=1}^\infty \beta_{rkj} b_{rkj}(\cdot), 
\quad \sum_{r=1}^\infty |\beta_{rkj}| \cdot r^\eta \leq C, 
\quad \eta > 1,  \\
& \beta_{0kj} = - \mathbb{E}\left[ \sum_{r=1}^\infty \beta_{rkj} b_{rkj}(X_k) \right]
\end{aligned}
\right\}.
\end{equation}
Each basis function $b_{rkj}(\cdot)$ for $r \geq 1$ satisfies the following: (i) it has infinite support; (ii) it is uniformly bounded with $|b_{rkj}(\cdot)| \leq 1$; (iii) it is orthogonal with $\mathbb{E}[b_{rkj}(X_k) b_{r'kj}(X_k)] = \upsilon  \mathds{1}(r = r')$ for some $\upsilon \in (0, 1]$; (iv) it is three times differentiable.
One example construction of such basis functions, using the sine system ~\citep{efromovich1999nonparametric}, is given by
$$
b_{rkj}(\cdot) = \sin\left( \pi r \Phi_k (\cdot) \right), \quad r\in\{1, \dots, R_n\},
$$
where $\Phi_k(\cdot)$ is the cumulative density function of $X_k$. Let $U = \Phi_k(X_k) \sim \mathsf{Unif}(0,1)$, then we see that
\begin{equation*}
    \mathbb{E}\left[b^2_{rkj}(X_k) \right] 
= \mathbb{E}\left[ \sin^2(\pi r \Phi_k(X_k)) \right] 
= \mathbb{E}\left[ \sin^2(\pi r U) \right] 
= 1/2,
\end{equation*}
and for $r \neq r'$,
\begin{equation*}
    \mathbb{E}\left[b_{rkj}(X_k) \cdot b_{r'kj}(X_k) \right] 
= \mathbb{E}\left[ \sin(\pi r \Phi_k(X_k)) \cdot \sin(\pi r' \Phi_k(X_k)) \right] 
= \mathbb{E}\left[ \sin(\pi r U) \cdot \sin(\pi r' U) \right] 
= 0.
\end{equation*}

\vspace{0.1in}
The structure on coefficients, $\sum_{r=1}^\infty |\beta_{rkj}| \cdot r^\eta \leq C$ or equivalently $\sum_{r=0}^\infty |\beta_{rkj}| \cdot r^\eta \leq C$, corresponds to a Sobolev ellipsoid. This structure concentrates most of the ``mass'' in the early coefficients, resulting in a fast decay. The complexity of each function class $\mathcal{F}_{kj}$, as a result, can be effectively controlled.

\vspace{0.2in}
\begin{lemma} \label{lem:bracketing}
    Under Assumption \ref{assump:sobolev}, the bracketing entropy of $\mathcal{F}_{kj}$ in \eqref{eq:F-kj} is bounded by:
    $$
    \log N_{[]}(\epsilon, \mathcal{F}_{kj}, \|\cdot\|_\infty) 
    \lesssim 
    \left( \frac{1}{\epsilon} \right)^{1/\eta} \log (1/\epsilon).
    $$
\end{lemma}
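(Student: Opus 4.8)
The plan is to reduce the bracketing entropy in $\|\cdot\|_\infty$ to a covering number of the coefficient sequences in $\ell_1$, and then estimate that covering number by truncating the Sobolev tail and placing a grid on the retained coordinates. First, since the intercept is pinned by $\beta_{0kj} = -\mathbb{E}[\sum_r \beta_{rkj} b_{rkj}(X_k)]$, I would write each $f_{kj} \in \mathcal{F}_{kj}$ in centered form $f_{kj}(\cdot) = \sum_{r\geq 1} \beta_{rkj}\, \bar b_{rkj}(\cdot)$ with $\bar b_{rkj} := b_{rkj} - \mathbb{E}[b_{rkj}(X_k)]$, so that $\|\bar b_{rkj}\|_\infty \leq 2$ by Assumption~\ref{assump:sobolev}. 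Consequently, for any two coefficient sequences $\beta$ and $\tilde\beta$, one has $\|f_{kj} - \tilde f_{kj}\|_\infty \leq 2\sum_{r\geq 1} |\beta_{rkj} - \tilde\beta_{rkj}|$, which converts a sup-norm approximation into a weighted-$\ell_1$ statement about coefficients. A standard observation then passes from covering to bracketing: if $\tilde f$ satisfies $\|f - \tilde f\|_\infty \leq \epsilon$, the constant-offset pair $[\tilde f - \epsilon,\ \tilde f + \epsilon]$ is a bracket of width $2\epsilon$ containing $f$, so $\log N_{[]}(2\epsilon, \mathcal{F}_{kj}, \|\cdot\|_\infty) \leq \log N(\epsilon, \mathcal{F}_{kj}, \|\cdot\|_\infty)$ and it suffices to bound the sup-norm covering number.

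Next comes truncation. Given the constraint $\sum_{r\geq 1} |\beta_{rkj}|\, r^\eta \leq C$, for a cutoff $N$ I would set to zero all coordinates with $r > N$; the induced sup-norm error is at most $2\sum_{r>N} |\beta_{rkj}| \leq 2 N^{-\eta} \sum_{r>N} |\beta_{rkj}|\, r^\eta \leq 2 C N^{-\eta}$, so choosing $N \asymp \epsilon^{-1/\eta}$ makes the tail $\lesssim \epsilon$. It then remains to cover the truncated weighted-$\ell_1$ body $\{\beta \in \mathbb{R}^N : \sum_{r=1}^N r^\eta |\beta_r| \leq C\}$ in the unweighted $\ell_1$ metric to precision $\asymp \epsilon$. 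For this I would place a uniform grid of spacing $\eta_0 \asymp \epsilon / N \asymp \epsilon^{1 + 1/\eta}$ on each retained coordinate; since the $r$-th coordinate is confined to $|\beta_r| \leq C r^{-\eta}$, coordinate $r$ requires at most $1 + 2 C r^{-\eta} / \eta_0$ grid values, and the accumulated $\ell_1$ rounding error is $\sum_{r \leq N} \eta_0 / 2 \asymp N \eta_0 \lesssim \epsilon$.

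The final step is the count. Taking logarithms of the product grid gives $\log N(\epsilon, \mathcal{F}_{kj}, \|\cdot\|_\infty) \lesssim \sum_{r=1}^N \log\!\big(1 + 2 C r^{-\eta}/\eta_0\big)$. With $\eta_0 \asymp \epsilon^{1 + 1/\eta}$, each summand is bounded by $\log(1 + C \epsilon^{-1-1/\eta}) \lesssim \log(1/\epsilon)$ — the argument peaks at $r = 1$ and, down to $r = N \asymp \epsilon^{-1/\eta}$, stays of order a negative power of $\epsilon$, since at $r = N$ it equals $C N^{-\eta}/\eta_0 \asymp \epsilon^{-1/\eta}$. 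Summing the $N \asymp \epsilon^{-1/\eta}$ terms therefore yields $\log N(\epsilon, \mathcal{F}_{kj}, \|\cdot\|_\infty) \lesssim \epsilon^{-1/\eta} \log(1/\epsilon)$, and the claimed bracketing bound follows after the harmless reindexing $\epsilon \mapsto \epsilon/2$. I expect this counting step to be the main obstacle: one must choose the truncation level $N$ and the grid spacing $\eta_0$ jointly so that the total $\ell_1$ error stays $O(\epsilon)$ while the per-coordinate logarithmic grid sizes sum to exactly the stated rate. The delicate regime is $r$ near $N$, where the coordinate range $C r^{-\eta}$ becomes comparable to $\epsilon$ itself; there one must check that each such coordinate still contributes only $O(\log(1/\epsilon))$ (rather than leaving $O(1)$ superfluous grid points per coordinate, which would inflate the product), a point the uniform-spacing choice resolves and a coordinate-adaptive spacing would merely optimize. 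The additive $+1$'s from rounding up grid sizes accumulate to only $N \asymp \epsilon^{-1/\eta}$, which is dominated by $\epsilon^{-1/\eta}\log(1/\epsilon)$.
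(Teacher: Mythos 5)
Your proposal is correct and follows essentially the same route as the paper's proof: reduce sup-norm bracketing to an $\ell_1$-covering of the coefficient sequences, truncate the Sobolev tail at $N \asymp \epsilon^{-1/\eta}$ so the discarded mass contributes $O(\epsilon)$, cover the remaining finite-dimensional body with $\log$-cardinality $\lesssim \epsilon^{-1/\eta}\log(1/\epsilon)$, and convert the cover to brackets via constant offsets. The only (immaterial) difference is in the counting step, where the paper invokes the standard volumetric bound $(1+2C/\epsilon)^{t+1}$ for a $(t+1)$-dimensional $\ell_1$-ball while you build an explicit coordinate-wise product grid exploiting $|\beta_r|\le Cr^{-\eta}$; both give the same rate.
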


\begin{proof}
    To simplify notation, we drop the subscripts $k$ and $j$, and write $\mathcal{F}$ for the function class $\mathcal{F}_{kj}$. Define the ellipsoid set $\mathcal{B} = \{\beta: \sum_{r=0}^\infty |\beta_{r}| \cdot r^\eta \leq C\}$ with $\eta > 1$. The idea is to represent the complexity of $\mathcal{F}$ via a mapping to the ellipsoid set $\mathcal{B}$.
    
    First, we find the covering number of the set $\mathcal{B}$. We define a truncated version $\widetilde{\mathcal{B}}= \{ \beta \in \mathcal{B}: \beta_{r} = 0 \text{ for all $r > t$} \}$, where $t$ is the smallest integer such that $t^{-\eta} \leq \epsilon$. We note that $\widetilde{\mathcal{B}}$ is contained in a $(t+1)$-dimensional $\ell_1$-ball with radius $C$, so
    $$
    N(\epsilon, \widetilde{\mathcal{B}}, \| \cdot\|_1 )
    \leq \left(1 + \frac{2C}{\epsilon} \right)^{t+1}.
    $$
    By selecting $t = \lceil (1/\epsilon)^{1/\eta} \rceil - 1$, we have
    $$
    \log N(\epsilon, \widetilde{\mathcal{B}}, \| \cdot\|_1 ) \lesssim \left( \frac{1}{\epsilon} \right)^{1/\eta} \log (1/\epsilon).
    $$
    Now let $\{\beta^1, \dots, \beta^N\}$ be an $\epsilon$-cover for $\widetilde{\mathcal{B}}$ w.r.t. $\|\cdot\|_1$ norm. Then for any $\beta \in \mathcal{B}$, we take the $\beta^i$ such that $\sum_{j=0}^t |\beta_j - \beta_j^i| \leq \epsilon$, and then 
    $$
    \| \beta - \beta^i \|_1 = \sum_{j=0}^t |\beta_j - \beta_j^i | + \sum_{j=t+1}^\infty |\beta_j| \leq \epsilon + t^{-\eta} \sum_{j = t+1}^\infty |\beta_r| \cdot t^{\eta} \leq (C + 1) \epsilon.
    $$
    This implies that $\{\beta^1, \dots, \beta^N\}$ is a $(C+1) \epsilon$-cover of $\mathcal{B}$. Therefore,
    $$
    \log N(\epsilon, \mathcal{B}, \| \cdot\|_1 )
    = \log N(\epsilon / (C+1), \widetilde{\mathcal{B}}, \| \cdot\|_1 )
    \lesssim \left( \frac{1}{\epsilon} \right)^{1/\eta} \log (1/\epsilon).
    $$
    Next, we find the covering number for the function class $\mathcal{F}$. Note that for any $f\in \mathcal{F}$, there exists some $\beta \in \mathcal{B}$ we use to construct $f$: $f = \beta_0 + \sum_{r=1}^\infty \beta_r b_r$. Let $\{\beta^1, \dots, \beta^N\}$ be an $\epsilon$-cover of $\mathcal{B}$ w.r.t. $\|\cdot\|_1$ norm. Then there must exist $i \in \{1, \dots, N \}$ such that $\| \beta^i - \beta \|_1 \leq \epsilon$. Therefore,
    $$
    \left\|f - \beta_0^{i} - \sum_{r=1}^\infty \beta_r^i b_r  \right\|_{\infty}
    \leq \left| \beta_0 - \beta_0^{i} \right| + \left\| \sum_{r=1}^\infty (\beta_r - \beta_r^i) b_r \right\|_{\infty} 
    \leq  \sum_{r=0}^\infty \left| \beta_r - \beta_r^i \right| \leq \epsilon .
    $$
    This implies that $\{ \beta_0^i +\sum_{r=1}^\infty \beta_r^i b_r \}_{i=1}^N$ is an $\epsilon$-cover of $\mathcal{F}$ w.r.t $\|\cdot\|_{\infty}$ norm. Hence,
    $$
    \log N(\epsilon, \mathcal{F}, \|\cdot\|_{\infty})  \lesssim \left( \frac{1}{\epsilon} \right)^{1/\eta} \log (1/\epsilon).
    $$
    Finally, we arrive at the bracketing entropy of $\mathcal{F}$:
    $$
    \log N_{[]}(\epsilon, \mathcal{F}, \|\cdot\|_\infty) \leq 
    \log N(\epsilon/2, \mathcal{F}, \|\cdot\|_\infty) \lesssim 
    \left( \frac{1}{\epsilon} \right)^{1/\eta} \log (1/\epsilon).
    $$
\end{proof}

Recall the definition of truncated function class $\mathcal{F}_{kj,n} = \{f_{kj} \in \mathcal{F}_{kj}: f_{kj,n} = \beta_{0kj} + \sum_{r=1}^{R_n} \beta_{rkj} b_{rkj}\}$. For any $f\in \mathcal{F}_{kj}$ where $f_{kj} = \beta_{0kj} +\sum_{r=1}^\infty \beta_{rkj} b_{rkj}$, the residual after projecting it onto $\mathcal{F}_{kj,n}$ is given by $D_{kj}=f_{kj} - \beta_{0kj} - \sum_{r=1}^{R_n} \beta_{rkj} b_{rkj}$. The Sobolev ellipsoid structure allows for effective control of the truncation error $\|D_{kj}\|_\infty$.

\vspace{0.2in}
\begin{lemma} \label{lem:bounded-d}
Under Assumption \ref{assump:sobolev}, the truncation error decreases with $R_n$:
    $\|D_{kj} \|_\infty \lesssim R_n^{-\eta} $.
\end{lemma}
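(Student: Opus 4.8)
The plan is to recognize that the truncation residual $D_{kj}$ is precisely the tail of the basis expansion, and then to bound its supremum norm by the tail of the coefficient sequence, which the Sobolev weighting controls at the rate $R_n^{-\eta}$. First I would write the residual explicitly. Since the intercept $\beta_{0kj}$ appears identically in $f_{kj}$ and in its truncation in $\mathcal{F}_{kj,n}$, it cancels, and the residual reduces to the discarded high-index terms:
\[
D_{kj} = f_{kj} - \beta_{0kj} - \sum_{r=1}^{R_n} \beta_{rkj} b_{rkj} = \sum_{r=R_n+1}^\infty \beta_{rkj} b_{rkj}.
\]

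Next I would pass to the supremum norm. Applying the triangle inequality and invoking the uniform boundedness $\sup_r |b_{rkj}(\cdot)| \leq 1$ from Assumption~\ref{assump:sobolev}, the basis functions drop out and only the coefficient tail remains, so that $\|D_{kj}\|_\infty \leq \sum_{r=R_n+1}^\infty |\beta_{rkj}|$. The key step is then to exploit the Sobolev decay $\sum_{r=1}^\infty |\beta_{rkj}| \, r^\eta \leq C$. For every $r > R_n$ one has $r^{-\eta} \leq R_n^{-\eta}$, so multiplying and dividing each summand by $r^\eta$ and then extending the sum to all $r \geq 1$ yields
\[
\sum_{r=R_n+1}^\infty |\beta_{rkj}| = \sum_{r=R_n+1}^\infty |\beta_{rkj}| \, r^\eta \cdot r^{-\eta} \leq R_n^{-\eta} \sum_{r=1}^\infty |\beta_{rkj}| \, r^\eta \leq C R_n^{-\eta}.
\]
Combining the two bounds gives $\|D_{kj}\|_\infty \leq C R_n^{-\eta} \lesssim R_n^{-\eta}$, as claimed.

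I do not anticipate a genuine obstacle here, since the argument is a direct tail estimate rather than a delicate approximation; the only subtlety worth flagging is that the bound relies solely on the uniform boundedness of the basis functions and the polynomial decay $r^\eta$ of the coefficients, whereas the orthogonality and three-times-differentiability conditions bundled into Assumption~\ref{assump:sobolev} are not needed for this particular estimate. It is also implicit that the tail series converges absolutely in $\|\cdot\|_\infty$, which is justified a posteriori by the same computation, so the manipulation of the infinite sum is legitimate.
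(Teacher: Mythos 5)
Your proof is correct and follows essentially the same route as the paper's: identify $D_{kj}$ with the coefficient tail, use $\sup_r|b_{rkj}|\le 1$, and convert the tail sum into $R_n^{-\eta}$ via the Sobolev weight. The only cosmetic difference is that the paper treats the truncated function as recentered to have mean zero, so its $D_{kj}=\sum_{r>R_n}\beta_{rkj}(b_{rkj}-\mathbb{E}[b_{rkj}(X_k)])$ and the bound picks up a harmless factor of $2$; your remark that orthogonality and differentiability are not needed here is also accurate.
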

\begin{proof}
    We note that $D_{kj} = f_{kj} - \beta_{0kj} -\sum_{r=1}^{R_n} \beta_{rkj} b_{rkj} = \sum_{r=R_n + 1}^\infty \beta_{rkj} (b_{rkj} - \mu_{rkj} )$, where $\mu_{rkj} = \mathbb{E}[b_{rkj}(X_k)]$. Therefore,
    \begin{align*}
    \begin{split}
        \|D_{kj}\|_\infty 
        &= \left\| \sum_{r=R_n + 1}^\infty \beta_{rkj} (b_{rkj} - \mu_{rkj} ) \right\|_\infty 
        \leq 2\sum_{r = R_n + 1}^\infty |\beta_{rkj}| 
        =2R_n^{-\eta} \sum_{r = R_n + 1}^\infty  |\beta_{rkj}| 
        \cdot R_n^{\eta} 
        \leq 2 C R_n^{-\eta}.
    \end{split}
    \end{align*}
\end{proof}

Finally, we show that the function classes are uniformly bounded.
\vspace{0.2in}
\begin{lemma} \label{lem:bounded-f}
    Under Assumption \ref{assump:sobolev}, the functions in $\mathcal{F}_{kj}$ are uniformly bounded: 
    $$ 
    \sup_{(k,j) \in \mathcal{E}} \sup_{f_{kj} \in \mathcal{F}_{kj} } \|f_{kj}(\cdot)\|_\infty \leq 2C,
    $$ 
    where $C$ is the constant appeared in Assumption \ref{assump:sobolev}.
\end{lemma}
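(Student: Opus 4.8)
The plan is to reduce the statement to the coefficient-summability bound already built into Assumption~\ref{assump:sobolev}, mirroring the argument used for Lemma~\ref{lem:bounded-d}. Fix any $(k,j)\in\mathcal{E}$ and any $f_{kj}\in\mathcal{F}_{kj}$, and write $\mu_{rkj}:=\mathbb{E}[b_{rkj}(X_k)]$. The first step is to absorb the intercept using the zero-mean constraint: since $\beta_{0kj}=-\mathbb{E}[\sum_{r\ge1}\beta_{rkj}b_{rkj}(X_k)]=-\sum_{r\ge1}\beta_{rkj}\mu_{rkj}$, we obtain
$$
f_{kj}(\cdot)=\sum_{r=1}^\infty \beta_{rkj}\bigl(b_{rkj}(\cdot)-\mu_{rkj}\bigr),
$$
which removes the need to bound $\beta_{0kj}$ on its own.

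Next I would apply the triangle inequality in $\|\cdot\|_\infty$ together with the uniform bound on the basis functions. Since $\sup_r|b_{rkj}(\cdot)|\le1$, Jensen's inequality gives $|\mu_{rkj}|\le1$ as well, so that $\|b_{rkj}-\mu_{rkj}\|_\infty\le 2$ for every $r$. Consequently
$$
\|f_{kj}(\cdot)\|_\infty\le\sum_{r=1}^\infty|\beta_{rkj}|\,\|b_{rkj}-\mu_{rkj}\|_\infty\le 2\sum_{r=1}^\infty|\beta_{rkj}|.
$$

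The final step is to invoke the Sobolev decay condition. Because $\eta>1$ implies $r^\eta\ge1$ for all $r\ge1$, we have $\sum_{r\ge1}|\beta_{rkj}|\le\sum_{r\ge1}|\beta_{rkj}|\,r^\eta\le C$, and hence $\|f_{kj}(\cdot)\|_\infty\le 2C$. As this bound is uniform in both $(k,j)$ and $f_{kj}$, taking suprema yields the claim. I do not anticipate any genuine obstacle: the only points meriting a line of care are the interchange of the supremum with the infinite sum (legitimate by the absolute convergence of $\sum_r|\beta_{rkj}|$) and the elementary estimate $|\mu_{rkj}|\le1$, both of which follow immediately from Assumption~\ref{assump:sobolev}.
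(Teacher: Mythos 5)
Your proof is correct and follows essentially the same route as the paper's: both arguments reduce the claim to the triangle inequality, the uniform bound $|b_{rkj}|\le 1$, and the summability $\sum_{r\ge 1}|\beta_{rkj}|\le \sum_{r\ge 1}|\beta_{rkj}|r^\eta\le C$; the paper bounds $|\beta_{0kj}|\le C$ separately and adds it to the bound $C$ on the series, while you absorb the intercept into the centered basis functions $b_{rkj}-\mu_{rkj}$, which is only a cosmetic difference. Both yield the stated bound $2C$.
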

\begin{proof}
    For any $f_{kj} \in \mathcal{F}_{kj}$, $\forall (k,j) \in \mathcal{E}$, we have
    \begin{equation*}
        \|f_{kj}(x)\|_\infty = \left\| \beta_{0kj} + \sum_{r=1}^{\infty} \beta_{rkj} b_{rkj}(x) \right\|_\infty 
        \leq \sum_{r=0}^{\infty} |\beta_{rkj}| 
        \leq |\beta_{0kj}| + \sum_{r=1}^{\infty} |\beta_{rkj}| \cdot r^\eta \leq |\beta_{0kj}| + C,
    \end{equation*}
    where
    \begin{equation*}
        \beta_{0kj} = - \mathbb{E}\left[ \sum_{r=1}^\infty \beta_{rkj} b_{rkj}(X_k) \right] \leq \sum_{r=1}^\infty |\beta_{rkj}|\leq \sum_{r=1}^\infty |\beta_{rkj}| \cdot r^\eta \leq C.
    \end{equation*}
    The result hence follows.
\end{proof}

\vspace{0.2in}
\subsection{More on boundedness of variances}
\label{sec:appendix:variances}

Recall that in Section \ref{sec:theory}, we defined the maximal variances $\overline{\nu}_p := \max_{j\in\mathrm{V}} \mathbb{E}[X_j^2] $, and the minimal variance $\underline{\nu}_p := \min_{j\in\mathrm{V}} \min_{|\vec{g}_j| =K  } \nu_j(\vec{g}_j)$. First, we point out that for any noise variance $\sigma_j^{\star2}$, there exists $\vec{g}_j$ such that $\nu_j(\vec{g}_j) = \sigma_j^{\star2}$; see Lemma \ref{lem:permu1} for a proof. Next, we see that for any $\nu_j(\vec{g}_j)$, 
$$
\underline{\nu}_p 
= \min_{j\in\mathrm{V}} \min_{|\vec{g}_j| \in \{0, \dots, K\} } \nu_j(\vec{g}_j)
\leq \nu_j(\vec{g}_j) 
\leq \max_{j\in\mathrm{V} } \max_{ |\vec{g}_j| \in \{0, \dots, K\} }
= \overline{\nu}_p.
$$ 
The conditions in Assumption \ref{assump:bound-variance} ensure uniform boundedness of these quantities.

\vspace{0.2in}
\begin{lemma} \label{lem:var-bound}
Denote $\underline{\sigma} = \min_{j \in \mathrm{V}} \sigma_j^\star$.
    Under Assumptions \ref{assump:sobolev} and \ref{assump:bound-variance}, the variances are uniformly bounded: $\underline{c} \leq \underline{\nu}_p \leq \underline{\sigma}^2 \leq \overline{\sigma}^2 \leq \overline{\nu}_p \leq \overline{c} $ for some $\underline{c} > 0$ and $\overline{c} < \infty.$ 
\end{lemma}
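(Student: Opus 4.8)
The plan is to verify the six-term chain inequality by inequality, observing that the two genuinely substantive endpoints are the left bound $\underline{c} \le \underline{\nu}_p$, which is \emph{exactly} the hypothesis stated in Assumption~\ref{assump:bound-variance}, and the right bound $\overline{\nu}_p \le \overline{c}$, which demands a uniform finite bound on $\max_{j} \mathbb{E}[X_j^2]$. The three inner inequalities reduce to a variance decomposition for model~\eqref{eq:DAG-model} and the monotonicity of $\nu_j$ in its parent set.

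First I would record two structural facts. Arguing along a topological order and using $\mathbb{E}[f^\star_{kj}(X_k)]=0$ together with $\mathbb{E}[\epsilon_j]=0$, an induction shows $\mathbb{E}[X_j]=0$ for every $j$, so that $\mathbb{E}[X_j^2]=\mathrm{Var}(X_j)$. Next, since $\epsilon_j$ is independent of the parent variables $X_{\mathrm{pa}(j)}$ (the parents are non-descendants of $j$), the structural equation gives
\[
\mathrm{Var}(X_j) = \mathrm{Var}\Big( \sum_{k \in \mathrm{pa}(j)} f^\star_{kj}(X_k) \Big) + \sigma_j^{\star 2}.
\]
This immediately yields $\mathbb{E}[X_j^2] \ge \sigma_j^{\star2}$, and taking the maximum over $j$ delivers the middle inequality $\overline{\sigma}^2 \le \overline{\nu}_p$ (interpreting $\overline{\sigma}^2$ as the tightest bound $\max_j \sigma_j^{\star2}$), while $\underline{\sigma}^2 \le \overline{\sigma}^2$ is immediate.

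For the right endpoint I would bound the signal term. By Lemma~\ref{lem:bounded-f} each true function satisfies $\|f^\star_{kj}\|_\infty \le 2C$, and by Assumption~\ref{assump:superstructure-sparsity} together with $E^\star \subseteq E^\circ$ node $j$ has at most $K$ parents; hence $|\sum_{k \in \mathrm{pa}(j)} f^\star_{kj}(X_k)| \le 2CK$ almost surely, so its variance is at most $4C^2K^2$. Combined with $\sigma_j^{\star2} \le \overline{\sigma}^2$ from Assumption~\ref{assump:bound-variance}, the decomposition yields $\mathbb{E}[X_j^2] \le 4C^2K^2 + \overline{\sigma}^2 =: \overline{c} < \infty$ uniformly in $j$, since $C$ and $K$ are constants; taking the maximum gives $\overline{\nu}_p \le \overline{c}$.

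Finally, for $\underline{\nu}_p \le \underline{\sigma}^2$ I would invoke Lemma~\ref{lem:permu1}: for the node $j^\star$ attaining $\underline{\sigma}^2 = \sigma_{j^\star}^{\star2}$ there is a parent set whose residual variance equals $\sigma_{j^\star}^{\star2}$. Because enlarging a parent set enlarges the additive class $\mathcal{F}^{\oplus \vec{g}_j}$ and therefore cannot increase $\nu_j$, the configuration defining $\underline{\nu}_p$ is no larger, giving $\underline{\nu}_p \le \sigma_{j^\star}^{\star2} = \underline{\sigma}^2$; with Assumption~\ref{assump:bound-variance} this closes the chain at the left via $\underline{c} \le \underline{\nu}_p$. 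The only point needing care is this monotonicity step, which also reconciles the $|\vec{g}_j| = K$ convention in the definition of $\underline{\nu}_p$ with the possibly smaller true parent sets; I expect it to be the main (though mild) obstacle, as everything else reduces to the uniform boundedness supplied by Lemma~\ref{lem:bounded-f} and the constant-degree assumption.
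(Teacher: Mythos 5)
Your proof is correct and follows essentially the same route as the paper: the only substantive bound is $\overline{\nu}_p \le \overline{c}$, which the paper also obtains from $\|f^\star_{kj}\|_\infty \le 2C$ (Lemma~\ref{lem:bounded-f}) and $|\mathrm{pa}(j)|\le K$, except that it uses the crude inequality $(a+b)^2 \le 2a^2+2b^2$ to get $\overline{c}=8C^2K^2+2\overline{\sigma}^2$ where you use the signal--noise independence to get the slightly sharper $4C^2K^2+\overline{\sigma}^2$. Your explicit verification of the inner inequalities (via Lemma~\ref{lem:permu1} and monotonicity of $\nu_j$ in the parent set) matches the discussion the paper places just before the lemma, and your observation that the chain implicitly invokes Assumption~\ref{assump:superstructure-sparsity} and reads $\overline{\sigma}^2$ as $\max_j\sigma_j^{\star 2}$ is a fair point the paper leaves tacit.
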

\begin{proof}
    It suffices to show that there exists $\overline{c} < \infty$ such that $\overline{\nu}_p \leq \overline{c}$. Note that
    \begin{align*}
    \begin{split}
        \overline{\nu}_p 
        &= \max_{j\in\mathrm{V} } \mathbb{E} \left[ X_j^2 \right] 
        = \max_{j\in\mathrm{V} } \mathbb{E} \left[ 2 \left( \sum_{k\in \mathrm{pa}(j)} f^\star_{kj}(X_k) \right)^2 + 2 \epsilon_j^2 \right]
        \leq 8C^2 K^2 + 2 \overline{\sigma}^2 < \infty.
    \end{split}
    \end{align*}
    The last inequality follows from the fact that $|\mathrm{pa}(j)|\leq K$, and Lemma \ref{lem:bounded-f}.
    
\end{proof}

We also defined the gap $d_{n,p} :=  \max_{j\in\mathrm{V}} \max_{|\vec{g}_j| \in \{0, \dots, K\} }  \left| \nu_j(\vec{g}_j) - \nu^n_j(\vec{g}_j) \right|$. As $n$ is sufficiently large, the function class $\mathcal{F}_n^{\oplus \vec{g}_j }$ is expected to closely approximate its non-truncated counterpart $\mathcal{F}^{\oplus \vec{g}_j }$. Particularly, when $n = \infty$, the function class $\mathcal{F}_n^{\oplus \vec{g}_j }$ and $\mathcal{F}_n^{\oplus \vec{g}_j }$ coincide exactly for any $\vec{g}_j$, and thus $\nu_j^n(\vec{g}_j) = \nu_j(\vec{g}_j)$. This observation motivates the following result: $d_{n,p} = o(1)$. 

\vspace{0.2in}
\begin{lemma}\label{lem:small-gap}
    Under Assumptions \ref{assump:sobolev} and \ref{assump:bound-variance},
    the gap between $\nu_j(\vec{g}_j)$ and $\nu_j^n(\vec{g}_j)$ is sufficiently small for any $\vec{g}_j$: $d_{n,p} = o(1)$.
\end{lemma}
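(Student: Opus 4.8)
The plan is to exploit the one-sided relationship between the two residual variances and then control the resulting gap through the truncation-error bound of Lemma~\ref{lem:bounded-d}. Since $\mathcal{F}_n^{\oplus \vec{g}_j} \subseteq \mathcal{F}^{\oplus \vec{g}_j}$, minimizing $\mathbb{E}[\mathcal{R}_j(f;\vec{g}_j)]$ over the smaller truncated class can only increase the minimum, so $\nu_j^n(\vec{g}_j) \geq \nu_j(\vec{g}_j)$ and the absolute value defining $d_{n,p}$ collapses to the nonnegative quantity $\nu_j^n(\vec{g}_j) - \nu_j(\vec{g}_j)$. It therefore suffices to produce an upper bound on this gap that is uniform over $j \in \mathrm{V}$ and over all $\vec{g}_j$ with $|\vec{g}_j| \leq K$.

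First I would fix $j$ and $\vec{g}_j$ and let $f^\star = \{f^\star_{kj}\}_{k\in\vec{g}_j}$ be an additive minimizer achieving $\nu_j(\vec{g}_j)$; its existence follows from the $L_2$-closedness of $\mathcal{F}^{\oplus \vec{g}_j}$ guaranteed by Assumption~\ref{assump:compatibility}, since the minimizer is the $L_2$-projection of $X_j$ onto this closed convex additive subspace. Each component $f^\star_{kj}$ admits a basis expansion, and truncating it to its first $R_n$ coefficients, with the intercept recentered to preserve zero mean, yields $f^n_{kj} \in \mathcal{F}_{kj,n}$. The recentering is exactly what makes the pointwise difference $f^\star_{kj} - f^n_{kj}$ equal the residual $D_{kj} = \sum_{r=R_n+1}^\infty \beta^\star_{rkj}(b_{rkj}-\mu_{rkj})$ from Lemma~\ref{lem:bounded-d}, so that $\|f^\star_{kj}-f^n_{kj}\|_\infty = \|D_{kj}\|_\infty \lesssim R_n^{-\eta}$. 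Since $f^n := \{f^n_{kj}\} \in \mathcal{F}_n^{\oplus \vec{g}_j}$ is feasible for the truncated problem, $\nu_j^n(\vec{g}_j) \leq \mathbb{E}[\mathcal{R}_j(f^n;\vec{g}_j)]$.

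The core step is to bound $\mathbb{E}[\mathcal{R}_j(f^n;\vec{g}_j)] - \mathbb{E}[\mathcal{R}_j(f^\star;\vec{g}_j)]$. Writing $A := X_j - \sum_{k\in\vec{g}_j} f^\star_{kj}(X_k)$ and $B := \sum_{k\in\vec{g}_j}(f^\star_{kj}-f^n_{kj})(X_k) = \sum_{k\in\vec{g}_j} D_{kj}(X_k)$, one has $X_j - \sum_k f^n_{kj}(X_k) = A+B$, so the difference of expected residuals equals $2\,\mathbb{E}[AB] + \mathbb{E}[B^2]$. Because $|\vec{g}_j|\leq K$, the triangle inequality gives $\|B\|_\infty \leq \sum_{k} \|D_{kj}\|_\infty \lesssim K R_n^{-\eta}$, whence $\mathbb{E}[B^2] \lesssim K^2 R_n^{-2\eta}$. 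For the cross term I would apply Cauchy--Schwarz, $|\mathbb{E}[AB]| \leq \sqrt{\mathbb{E}[A^2]}\,\sqrt{\mathbb{E}[B^2]}$, and use $\mathbb{E}[A^2] = \nu_j(\vec{g}_j) \leq \overline{\nu}_p \leq \overline{c}$ from Lemma~\ref{lem:var-bound}. Combining the estimates yields $\nu_j^n(\vec{g}_j) - \nu_j(\vec{g}_j) \lesssim \sqrt{\overline{c}}\,K R_n^{-\eta} + K^2 R_n^{-2\eta} \lesssim R_n^{-\eta}$, with constants independent of $j$ and $\vec{g}_j$. Taking the maximum over $j\in\mathrm{V}$ and $|\vec{g}_j|\leq K$ gives $d_{n,p} \lesssim R_n^{-\eta}$, which tends to $0$ as $R_n \to \infty$ with $\eta > 1$, establishing $d_{n,p} = o(1)$.

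The only delicate point is the bookkeeping around the intercept: one must verify that truncating the basis expansion and re-centering produces precisely the $D_{kj}$ of Lemma~\ref{lem:bounded-d} rather than an extra additive constant, so that its sup-norm bound transfers verbatim to $f^\star_{kj}-f^n_{kj}$. Everything else---existence of the projection, the $(A+B)^2$ expansion, and the Cauchy--Schwarz estimate---is routine once the function-class structure and the uniform variance bound are in place.
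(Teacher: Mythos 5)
Your proposal is correct and follows essentially the same route as the paper's proof: take the minimizer attaining $\nu_j(\vec{g}_j)$, truncate it to the first $R_n$ basis functions as a feasible competitor for the truncated problem, expand the square, bound the cross term by Cauchy--Schwarz together with $\nu_j(\vec{g}_j)\le\overline{c}$, and bound the quadratic term by the truncation error. The only (harmless) difference is that you control $\mathbb{E}[B^2]$ via the sup-norm bound of Lemma~\ref{lem:bounded-d}, whereas the paper bounds the same quantity using the orthogonality of the basis functions and the Sobolev coefficient decay; both give the order $K^2R_n^{-2\eta}$, so the conclusion $d_{n,p}\lesssim R_n^{-\eta}=o(1)$ is the same.
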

\begin{proof}
    We denote 
    $$
    \bar{f} \in \argmin_{f \in \mathcal{F}^{\oplus\vec{g}_j}} \mathbb{E} [ X_j - \sum_{k\in \vec{g}_j} f_{kj}(X_k) ]^2,
    $$
    and let $\bar{f}^{(n)} \in \mathcal{F}_{n}^{\oplus \vec{g}_j}$ be the truncation of $\bar{f}$ to the first $R_n$ basis functions.
    Note that for any $\vec{g}_j$ and any $j\in \mathrm{V}$,
    \begin{align*}
    \begin{split}
     &   \nu^n_j(\vec{g}_j) - \nu_j(\vec{g}_j)
    \leq  \mathbb{E}\left[ X_j - \sum_{k\in \vec{g}_j} \bar{f}^{(n)}_{kj}(X_j) \right]^2 -  \mathbb{E}\left[ X_j - \sum_{k\in \vec{g}_j} \bar{f}_{kj}(X_j) \right]^2 \\
    ={}& \mathbb{E}\left[ X_j - \sum_{k\in \vec{g}_j} \bar{f}_{kj}(X_j) + \sum_{k\in \vec{g}_j} \bar{f}_{kj}(X_j) - \sum_{k\in \vec{g}_j} \bar{f}^{(n)}_{kj}(X_j) \right]^2 -  \mathbb{E}\left[ X_j - \sum_{k\in \vec{g}_j} \bar{f}_{kj}(X_j) \right]^2 \\
    ={}& 2 \cdot \mathbb{E}\left[ \left( X_j - \sum_{k\in \vec{g}_j} \bar{f}_{kj}(X_j) \right) \left( \sum_{k\in \vec{g}_j} \bar{f}_{kj}(X_j) - \sum_{k\in \vec{g}_j} \bar{f}^{(n)}_{kj}(X_j) \right) \right]
    + \mathbb{E}\left[ \sum_{k\in \vec{g}_j} \bar{f}_{kj}(X_j) - \sum_{k\in \vec{g}_j} \bar{f}^{(n)}_{kj}(X_j) \right]^2 \\
    \leq{}& 2 \cdot \left( \nu_j(\vec{g}_j) \right)^{1/2} \cdot \Delta_f^{1/2}
    + \Delta_f ,
    \end{split}
    \end{align*}
    where $\Delta_f = \mathbb{E}\left[ \sum_{k\in \vec{g}_j} \bar{f}_{kj}(X_j) - \sum_{k\in \vec{g}_j} \bar{f}^{(n)}_{kj}(X_j)  \right]^2$.
    By Lemma \ref{lem:var-bound}, we have $\left( \nu_j(\vec{g}_j) \right)^{1/2} \leq \bar{c}^{1/2}$. Moreover, letting $\mu_{rkj} = \mathbb{E}[b_{rkj}(X_k)] \leq 1 $, we have
    \begin{align*}
    \begin{split}
        \Delta_f 
    &\leq |\vec{g}_j| \sum_{k \in \vec{g}_j} \mathbb{E} \left[  \bar{f}_{kj}(X_j) -  \bar{f}^{(n)}_{kj}(X_j) \right]^2  \\
    &\leq K  \sum_{k \in \vec{g}_j} \sum_{r = R_n+1}^{\infty} \sum_{r'= R_n+1}^{\infty} \beta_{rkj} \beta_{r'kj} \mathbb{E}\left[ \left( b_{rkj}(X_k) - \mu_{rkj} \right)\left( b_{r'kj}(X_k) - \mu_{r'kj} \right) \right]  \\
    & \leq K^2 \upsilon \sum_{r = R_n+1}^{\infty} \beta^2_{rkj} + K^2 \left[ \sum_{r= R_n+1}^{\infty} \left| \beta_{rkj} \mu_{rkj} \right| \right]^2 \\
    &\leq K^2 \upsilon R_n^{-2\eta} \left[ \sum_{r = R_n+1}^{\infty} \beta^2_{rkj} r^{2\eta} \right] + K^2 R_n^{-2\eta} \left[ \sum_{r=R_n + 1}^{\infty} \beta_{rkj}^2 r^{2\eta} \right] \leq (1 + \upsilon) K^2 C^2 R_n^{-2\eta},
    \end{split}
    \end{align*}
    where the last inequality follows from Assumption \ref{assump:sobolev} where $\sum_{r=R_n + 1}^\infty \beta_{rkj}^2 \cdot r^{2\eta} \leq \left[ \sum_{r=1}^\infty |\beta_{rkj}| \cdot r^\eta \right]^2 \leq C^2 $. As a result, we have
    $$
    0 \leq \nu_j^n(\vec{g}_j) - \nu_j(\vec{g}_j) \le 2 \bar{c}^{1/2} (1 + \upsilon)^{1/2} K C R_n^{-\eta} + (1 + \upsilon) K^2 C^2 R_n^{-2\eta}.
    $$
    The result hence follows.

\end{proof}

\vspace{0.2in}
\subsection{More on permutations and the likelihood}
\label{sec:appendix:permutation}

Recall that for any permutation $\pi$, we define the parents of node $j$ as $\vec{g}(\pi)_j= \{ k: \pi^{-1}(k) < \pi^{-1}(j), (k,j) \in E^\circ \}$, and the corresponding log-likelihood  $\log p(X; \theta, \pi):= \sum_{j=1}^p \log p_j(X; \theta, \pi)$, where
\begin{align*}
\begin{split}
    \log p_j(X;\theta, \pi):=  -\frac{1}{2}\log (2\pi) - \log \sigma_j - \frac{ \left[X_j - \sum_{k\in \vec{g}(\pi)_j} f_{kj}(X_k) \right]^2 }{2\sigma_j^2}.
\end{split}
\end{align*}
Let $\theta^\star:=(\{f^\star_{kj}\}_{(k,j) \in E^\star}, \{\sigma^\star_j\}_{j \in \mathrm{V} })$, and consider the expected negative log-likelihood:
$$
\min_{\theta} \mathbb{E}_{\theta^\star} \left[ - \log p(X; \theta, \pi) \right] = \sum_{j=1}^p \min_{\theta} \mathbb{E}_{\theta^\star} \left[ -\log p_j(X; \theta, \pi) \right].
$$ 
In order to find $\theta^\pi := \argmin_{\theta} \mathbb{E}_{\theta^\star}[-\log p(X; \theta, \pi)] $, we analyze the minimum of each component individually. For each $j$, we observe that 
\begin{equation*}
    \argmin_{\sigma_j > 0} \mathbb{E}_{\theta^\star} \left[ -\log p_j(X;\theta, \pi) \right] = \mathbb{E}_{\theta^\star} \left[ \left( X_j - \sum_{k\in \vec{g}(\pi)_j} f_{kj}(X_k) \right)^2 \right] = \mathbb{E}_{\theta^\star} \left[ \mathcal{R}_j(f; \vec{g}(\pi)_j) \right].
\end{equation*}
Hence, 
\begin{align*}
\begin{split}
    \min_{\theta} \mathbb{E}_{\theta^\star} \left[ -\log p_j(X; \theta, \pi) \right]=  C  + \frac{1}{2} \min_{f\in \mathcal{F}^{\oplus \vec{g}(\pi)_j}} \log \mathbb{E}_{\theta^\star}\left[ \mathcal{R}_j(f; \vec{g}(\pi)_j) \right] = C + \frac{1}{2} \log \nu_j(\vec{g}(\pi)_j),
\end{split}
\end{align*}
where $C  = \log (2\pi) / 2 + 1/2$.

\vspace{0.1in}
\begin{lemma} \label{lem:permu1}
    For any permutation $\pi$, we have $\sum_{j=1}^p \log  \nu_j(\vec{g}(\pi)_j) \geq \sum_{j=1}^p \log \sigma^{\star2}_j$. Particularly, when $\pi \in \Pi^\star$, $\nu_j(\vec{g}(\pi)_j) = \sigma^{\star2}_j$ for any $j\in \mathrm{V}$.
\end{lemma}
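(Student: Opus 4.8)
The plan is to route both claims through the non-negativity of the Kullback--Leibler divergence, building on the per-node reduction $\min_\theta \mathbb{E}_{\theta^\star}[-\log p_j(X;\theta,\pi)] = C + \tfrac12 \log \nu_j(\vec{g}(\pi)_j)$ already established above.

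First I would dispatch the equality claim for $\pi \in \Pi^\star$ by a direct projection argument. Fix such a $\pi$ and a node $j$. Because $\pi$ is consistent with $\mathcal{G}^\star$ and $E^\star \subseteq E^\circ$, every true parent $k \in \mathrm{pa}(j)$ satisfies $\pi^{-1}(k) < \pi^{-1}(j)$ and $(k,j) \in E^\circ$, so $\mathrm{pa}(j) \subseteq \vec{g}(\pi)_j$; moreover every $k \in \vec{g}(\pi)_j$ is a non-descendant of $j$, hence $\epsilon_j$ is independent of $(X_k)_{k \in \vec{g}(\pi)_j}$. Writing $g_j := \sum_{k\in\mathrm{pa}(j)} f^\star_{kj} \in \mathcal{F}^{\oplus\vec{g}(\pi)_j}$ (extending by the zero function on the extra coordinates), for any additive $f = \sum_{k\in\vec{g}(\pi)_j} f_{kj} \in \mathcal{F}^{\oplus\vec{g}(\pi)_j}$ the independence and mean-zero property of $\epsilon_j$ give $\mathbb{E}[\mathcal{R}_j(f;\vec{g}(\pi)_j)] = \mathbb{E}[(g_j - f)^2] + \sigma_j^{\star2} \geq \sigma_j^{\star2}$, with equality at $f = g_j$. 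Minimizing over $f$ yields $\nu_j(\vec{g}(\pi)_j) = \sigma_j^{\star2}$, which is the ``particularly'' assertion.

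For the general inequality I would exploit that any permutation $\pi$ induces a valid DAG factorization: since $\vec{g}(\pi)_j \subseteq \{k : \pi^{-1}(k) < \pi^{-1}(j)\}$, the product $Q_{\theta,\pi}(x) := \prod_{j=1}^p p_j(x;\theta,\pi)$ is a genuine joint density on $\mathbb{R}^p$. Consequently $\mathbb{E}_{\theta^\star}[-\log Q_{\theta,\pi}(X)] = H(P) + \mathrm{KL}(P \,\|\, Q_{\theta,\pi}) \geq H(P)$, where $H(P) := \mathbb{E}_{\theta^\star}[-\log p^\star(X)]$ is the (fixed) differential entropy of the true distribution. Taking the minimum over $\theta$ and invoking the per-node reduction turns the left-hand side into $\sum_{j=1}^p (C + \tfrac12\log\nu_j(\vec{g}(\pi)_j))$; on the other side, since $P$ itself factorizes along any $\pi^\star \in \Pi^\star$ with Gaussian conditionals of variance $\sigma_j^{\star2}$, a direct computation (equivalently, part one applied to $\pi^\star$) gives $H(P) = \sum_{j=1}^p (C + \tfrac12\log\sigma_j^{\star2})$. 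Cancelling the common constant $pC$ and multiplying by $2$ then yields $\sum_{j=1}^p \log\nu_j(\vec{g}(\pi)_j) \geq \sum_{j=1}^p \log\sigma_j^{\star2}$.

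The main obstacle I anticipate is the bookkeeping that makes the KL step legitimate, rather than the inequality itself: one must verify that $Q_{\theta,\pi}$ integrates to one as a probability density (which is exactly where the topological-ordering structure of $\vec{g}(\pi)_j$ enters) and that the minimizing conditional variances $\nu_j(\vec{g}(\pi)_j)$ genuinely correspond to the cross-entropy $\mathbb{E}_{\theta^\star}[-\log Q_{\theta,\pi}]$ rather than to a mismatched normalization. Once the factorization is confirmed to be a valid DAG model, non-negativity of KL is immediate and the remaining identification of $H(P)$ is routine.
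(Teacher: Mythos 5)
Your proposal is correct and follows essentially the same route as the paper: a direct projection/orthogonality argument (independence and zero mean of $\epsilon_j$) for the equality when $\pi\in\Pi^\star$, and non-negativity of the KL divergence between $P$ and the best-fitting factorization along $\pi$ for the general inequality, using the per-node reduction to $C+\tfrac12\log\nu_j(\vec{g}(\pi)_j)$. Your added care in verifying that $Q_{\theta,\pi}$ is a genuine density and in identifying $H(P)$ via the factorization along a true ordering only makes explicit what the paper leaves implicit.
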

\begin{proof}
    First, we show that when $\pi\in \Pi^\star$, $\{f^\star_{kj} \}_{k\in \mathrm{V}}  \in \argmin_{f} \mathbb{E}_{\theta^\star} \left[ \mathcal{R}_{j} (f; \vec{g}(\pi)_j) \right]$ and $\nu_j(\vec{g}(\pi)_j) = \sigma^{\star2}_j$. To see this, for any $\{f^0_{kj}\}_{k\in \mathrm{V}}$, we have
    \begin{align*}
    \mathbb{E}_{\theta^\star}\left[ \mathcal{R}_j(f^0; \vec{g}(\pi)_j) \right] 
    = \mathbb{E}_{\theta^\star}\left[ \epsilon_j^2 + \sum_{k\in \vec{g}(\pi)_j} \left( f_{kj}^\star(X_k) - f^0_{kj}(X_k) \right)^2 \right]  \geq \sigma^{\star2}_j,
    \end{align*}
    and the equality holds when $f^\star_{kj} = f^0_{kj}$. The desired result thus follows. In addition, the result implies that, when $\pi \in \Pi^\star$: (i) $\theta^\pi = \theta^\star$; (ii) $\mathbb{E}_{\theta^\star} \left[ - \log p_j(X; \theta^\star, \pi) \right] = C + \log (\sigma_j^\star)$.
    
    Next, we show that $\sum_{j=1}^p \log \nu_j(\vec{g}(\pi)_j) \geq \sum_{j=1}^p \log \sigma_j^{\star^2}$ for any permutation $\pi$. Note that
    $$
    \frac{1}{2}\sum_{j=1}^p \log  \nu_j(\vec{g}(\pi)_j) - \frac{1}{2}\sum_{j=1}^p \log (\sigma_j^{\star2}) 
    = \mathbb{E}_{\theta^\star} \left[-\log p(X; \theta^\pi, \pi) \right] - \mathbb{E}_{\theta^\star} \left[ - \log p(X; \theta^\star, \pi)  \right] \geq 0.
    $$
    The last inequality holds since the KL-divergence is non-negative. The result hence follows.
\end{proof}

Lemma \ref{lem:permu1} shows that when $\pi \in \Pi^\star$, the residual variance $\nu_j(\vec{g}(\pi)_j)$ arrives at its best: the irreducible variance. Moreover, the separation quantity $\xi_p$ defined in \eqref{eq:separation} is non-negative. We now investigate into a more stringent argument such that $\xi_p > 0$.

\vspace{0.2in}
\begin{lemma}
    Under Assumption \ref{assump:compatibility}, the separation is strictly positive: $\xi_p > 0$.
\end{lemma}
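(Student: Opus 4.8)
The plan is to reinterpret the per-permutation gap $p^{-1}\sum_{j}\big(\log\nu_j(\vec{g}(\pi)_j)-\log\sigma_j^{\star2}\big)$ as a Kullback--Leibler divergence and then invoke identifiability to make it strictly positive. Building on the computation preceding Lemma~\ref{lem:permu1}, for each permutation $\pi$ let $\theta^\pi:=\argmin_{\theta}\mathbb{E}_{\theta^\star}[-\log p(X;\theta,\pi)]$ denote the best-fitting parameter, whose $j$-th additive component is the $L_2$-projection of $X_j$ onto $\mathcal{F}^{\oplus\vec{g}(\pi)_j}$ and whose $j$-th noise variance is $\nu_j(\vec{g}(\pi)_j)$. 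First I would record, for $\pi^\star\in\Pi^\star$, the identity
\begin{equation*}
\tfrac12\sum_{j=1}^p\log\nu_j(\vec{g}(\pi)_j)-\tfrac12\sum_{j=1}^p\log\sigma_j^{\star2}=\mathbb{E}_{\theta^\star}\big[-\log p(X;\theta^\pi,\pi)\big]-\mathbb{E}_{\theta^\star}\big[-\log p(X;\theta^\star,\pi^\star)\big],
\end{equation*}
whose right-hand side, by Lemma~\ref{lem:permu1}, is precisely the cross-entropy minus the entropy of the true law $P$, i.e.\ the divergence $D(P\,\|\,Q_\pi)$ between $P$ and the Gaussian additive SEM $Q_\pi$ induced by $\theta^\pi$ along the ordering $\pi$. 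Consequently $p\,\xi_p=2\min_{\pi\notin\Pi^\star}D(P\,\|\,Q_\pi)$, and since there are only finitely many permutations it suffices to prove $D(P\,\|\,Q_\pi)>0$, equivalently $Q_\pi\neq P$, for each fixed $\pi\notin\Pi^\star$.

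Then I would argue that $Q_\pi$ is a genuine instance of model~\eqref{eq:DAG-model} on a DAG $\mathcal{G}^\pi$ strictly different from $\mathcal{G}^\star$. Assumption~\ref{assump:compatibility} guarantees that $\mathcal{F}^{\oplus\vec{g}(\pi)_j}$ is $L_2$-closed, so the projection defining $\theta^\pi$ is attained and $Q_\pi$ is well defined as a product of Gaussian conditionals along $\pi$; its nonzero additive components lie in the three-times-differentiable classes $\mathcal{F}_{kj}$, and each such component, being mean-zero yet not identically zero, is necessarily non-constant. The key structural observation is that $\pi\notin\Pi^\star$ forces $\mathcal{G}^\pi\neq\mathcal{G}^\star$: there must exist a true edge $(k,j)\in E^\star$ with $\pi^{-1}(k)>\pi^{-1}(j)$, so $k\notin\vec{g}(\pi)_j$ and hence the edge $k\to j$ is absent from $\mathcal{G}^\pi$ though present in $\mathcal{G}^\star$.

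With these pieces in place, I would apply Proposition~\ref{prop:identifiability}: since $\mathcal{G}^\pi\neq\mathcal{G}^\star$ and the nonzero functions of $Q_\pi$ are non-constant and three times differentiable, $Q_\pi$ must differ from $P$, so $D(P\,\|\,Q_\pi)>0$; taking the minimum over the finite, strictly positive collection $\{D(P\,\|\,Q_\pi):\pi\notin\Pi^\star\}$ yields $\xi_p>0$. The hard part will be the bookkeeping that $Q_\pi$ is a legitimate model-\eqref{eq:DAG-model} distribution to which identifiability applies, rather than the (generally non-Gaussian) law of the actual fitted residuals: one must be careful that the relevant object is the \emph{Gaussian} SEM determined by the projected functions together with the projected-residual variances $\nu_j(\vec{g}(\pi)_j)$, and that attainment of the projection---hence existence of $\theta^\pi$---is exactly what Assumption~\ref{assump:compatibility} supplies. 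A secondary technical point is confirming that the nonzero projected functions stay non-constant, which follows from the mean-zero normalization built into $\mathcal{F}_{kj}$.
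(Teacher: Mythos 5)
Your proposal is correct, but it takes a genuinely different route from the paper. The paper's proof is two lines: it enlarges each restricted parent set $\vec{g}(\pi)_j$ (which is intersected with the super-structure $E^\circ$) to the full preceding set $\vec{g}(\pi^\circ)_j:=\{k:\pi^{-1}(k)<\pi^{-1}(j)\}$, uses the monotonicity $\nu_j(\vec{g}(\pi)_j)\geq\nu_j(\vec{g}(\pi^\circ)_j)$ to reduce to the unrestricted case, and then cites Lemma~3 of \cite{buhlmann2014} for the strict inequality. You instead give a self-contained argument: you identify $\tfrac{p}{2}\xi_p$ as $\min_{\pi\notin\Pi^\star} D(P\,\|\,Q_\pi)$ using the cross-entropy computation already recorded before Lemma~\ref{lem:permu1}, observe that each $Q_\pi$ is a Gaussian additive SEM on a DAG $\mathcal{G}^\pi$ that necessarily omits a reversed true edge (so $\mathcal{G}^\pi\neq\mathcal{G}^\star$), and invoke Proposition~\ref{prop:identifiability} to get $Q_\pi\neq P$, hence strictly positive divergence, with finiteness of the set of permutations closing the argument. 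In effect you re-derive the content of the cited external lemma from the paper's own identifiability result; this buys self-containedness and also lets you work directly with the $E^\circ$-restricted parent sets, making the paper's enlargement step unnecessary. The only points to watch are the ones you already flag: attainment of the projection (which the paper also attributes to Assumption~\ref{assump:compatibility}) and the regularity bookkeeping needed so that $Q_\pi$ qualifies as an instance of model~\eqref{eq:DAG-model} with non-constant, three-times differentiable components --- the mean-zero normalization of $\mathcal{F}_{kj}$ under $P$ rather than under $Q_\pi$ and the differentiability of infinite basis expansions are glossed over, but the paper's citation-based proof implicitly relies on the same regularity, so this is not a gap relative to the paper's standard of rigor.
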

\begin{proof}
    For any $\pi$, define $\vec{g}(\pi^\circ)_j  := \{k: \pi^{-1}(k) < \pi^{-1}(j) \}$. We have $\vec{g}(\pi)_j \subseteq \vec{g}(\pi^\circ)_j$. Then when $\pi \notin \Pi^\star$,
    $$
    \xi_p= \min_{\pi\notin \Pi^\star} p^{-1} \sum_{j=1}^p \left( \log \nu_j(\vec{g}(\pi)_j) - \log \sigma_j^{\star2} \right) 
    \geq \min_{\pi\notin \Pi^\star} p^{-1} \sum_{j=1}^p \left( \log \nu_j(\vec{g}(\pi^\circ)_j) - \log \sigma_j^{\star2} \right) > 0.
    $$
    The last strict inequality follows from Lemma 3 in \cite{buhlmann2014}.
\end{proof}

\vspace{0.2in}
\section{Preliminarily results}
\subsection{Events that hold with high probability}
In this section, $\epsilon_j \in \mathbb{R}^n$, $X_j \in\mathbb{R}^n$ and $D^\star_{kj} \in \mathbb{R}^n$ represent vectors containing $n$ independent samples, and $X^{(i)} \in \mathbb{R}^p$ represents the $i$-th sample containing $p$ variables.

\begin{lemma}
    Suppose $\epsilon_j \sim \mathcal{N}(0, \sigma_j^2)$, $j = 1, \dots, p$ are independently distributed. Define the event
    \begin{equation*}
        \mathcal{T}_1:= \left\{ \left| \left\| \epsilon_j \right\|_n^2 - \sigma_j^{\star2} \right| \lesssim  \sqrt{ \log p / n} , \quad \forall j \in [p] \right\}.
    \end{equation*}
    Then $\mathbb{P}(\mathcal{T}_{1}) \geq 1 - 2 / p$.
\end{lemma}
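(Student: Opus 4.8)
The plan is to recognize each quantity $\|\epsilon_j\|_n^2 = n^{-1}\sum_{i=1}^n (\epsilon_j^{(i)})^2$ as a scaled chi-squared random variable and apply a sharp chi-squared tail bound coordinatewise, then close with a union bound over $j \in [p]$. Concretely, since the $n$ samples $\epsilon_j^{(1)}, \dots, \epsilon_j^{(n)}$ are i.i.d.\ $\mathcal{N}(0, \sigma_j^{\star 2})$, the normalized sum $W_j := n\|\epsilon_j\|_n^2/\sigma_j^{\star 2} = \sum_{i=1}^n (\epsilon_j^{(i)}/\sigma_j^\star)^2$ follows a $\chi^2_n$ distribution. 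Hence $\|\epsilon_j\|_n^2 - \sigma_j^{\star 2} = (\sigma_j^{\star 2}/n)(W_j - n)$, and it suffices to control the centered chi-squared deviation $W_j - n$.

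First I would invoke the Laurent--Massart tail bounds: for any $x > 0$,
$$\mathbb{P}\big(W_j - n \geq 2\sqrt{nx} + 2x\big) \leq e^{-x}, \qquad \mathbb{P}\big(W_j - n \leq -2\sqrt{nx}\big) \leq e^{-x}.$$
Taking $x = 2\log p$ yields, for each fixed $j$, that $|W_j - n| \leq 2\sqrt{2 n \log p} + 4\log p$ fails with probability at most $2p^{-2}$. An equivalent route is Bernstein's inequality, using that $(\epsilon_j^{(i)})^2 - \sigma_j^{\star 2}$ is centered and sub-exponential with sub-exponential norm of order $\sigma_j^{\star 2}$; I prefer the chi-squared formulation since it furnishes explicit constants.

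Next I would translate this into the stated rate. On the complementary event, $|\|\epsilon_j\|_n^2 - \sigma_j^{\star 2}| = (\sigma_j^{\star 2}/n)|W_j - n| \leq \sigma_j^{\star 2}\big(2\sqrt{2}\,\sqrt{\log p/n} + 4\log p/n\big)$. Under the sample-size regime $n \gtrsim (\log p)^2$ (here $n \gtrsim \log p$ already suffices) the correction $\log p/n$ is dominated by $\sqrt{\log p/n}$, so the right-hand side is $\lesssim \sigma_j^{\star 2}\sqrt{\log p/n}$. Invoking Assumption~\ref{assump:bound-variance} to bound $\sigma_j^{\star 2} \leq \overline{\sigma}^2$ uniformly then gives the coordinatewise estimate $|\|\epsilon_j\|_n^2 - \sigma_j^{\star 2}| \lesssim \sqrt{\log p/n}$ with a constant independent of $j$ and $p$.

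Finally I would take a union bound: the event that some $j \in [p]$ violates this estimate has probability at most $p \cdot 2p^{-2} = 2/p$, so $\mathcal{T}_1$ holds with probability at least $1 - 2/p$. There is no substantive obstacle, as this is a routine sub-exponential concentration argument; the only point requiring care is the bookkeeping that fixes the rate. The union-bound factor $p$ dictates the choice $x \asymp \log p$, which is precisely what produces the $\sqrt{\log p/n}$ deviation, and one must verify that the linear-in-$x$ term $2x$ in the upper tail (equivalently the sub-exponential correction in Bernstein) is negligible relative to $\sqrt{nx}$ under the assumed growth of $n$ against $p$.
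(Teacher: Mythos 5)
Your proposal is correct and follows essentially the same route as the paper: the paper also applies a Bernstein-type chi-squared tail bound to $\|\epsilon_j\|_n^2/\sigma_j^{\star2}$ with deviation $2\sqrt{(t+\log p)/n}+2(t+\log p)/n$, sets $t=\log p$ to get per-coordinate failure probability $2/p^2$, and closes with a union bound over $j\in[p]$. Your explicit remarks that $\log p/n$ is dominated by $\sqrt{\log p/n}$ and that $\sigma_j^{\star2}\leq\overline{\sigma}^2$ is needed to absorb the variance into the implicit constant are the same (implicit) bookkeeping steps the paper relies on.
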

\begin{proof}
    By the Bernstein's inequality~\citep{bennett1962proabbility} we have $\mathbb{P}(\mathcal{T}_{1,j}) \geq 1 - 2e^{-t} / p$, where
    $$
    \mathcal{T}_{1,j} := \left\{ \frac{\left| \left\| \epsilon_j \right\|_n^2 - \sigma_j^{\star2} \right|}{\sigma_j^{\star2}} \leq 2 \sqrt{(t + \log p) / n} + 2 (t+\log p) / n \right\}
    $$
    By letting $t = \log p$ and using the fact $\mathcal{T}_1 = \cap_{j=1}^p \mathcal{T}_{1,j}$, the result follows.
\end{proof}

\vspace{0.2in}
Let $P$ be the true distributions induced by model \eqref{eq:DAG-model}, and $P_n$ the empirical distribution. We denote $dx:= (dx_1, \dots, d x_p)$. 

\vspace{0.2in}
\begin{lemma}
    Define $\mu_{kk'j}(f)(x_1, \dots, x_p) := f_{kj}(x_k) f_{k'j}(x_{k'})$. Let the combined function class for node $j$ be
    $\mathcal{F}^\cup_j:=\bigcup_{k=1}^p \bigcup_{k'=1}^p \{\mu_{kk'j} (f_{kj}, f_{k'j}) : f_{kj} \in \mathcal{F}_{kj}, f_{k'j} \in \mathcal{F}_{k'j} \} $. Define the function
    $
    g_j(x^{(1)}, \dots, x^{(n)}) := \sup_{f\in\mathcal{F}_j^\cup}  \left| \int f(x) (P_n - P)(dx) \right|
    $, and the event
    \begin{equation*}
    \mathcal{T}_2:= \left\{ 
    g_j(X^{(1)}, \dots, X^{(n)}) \lesssim  \sqrt{\frac{\log p}{n}}, \quad \forall j\in[p]
    \right\}.
    \end{equation*}
    Then $\mathbb{P}(\mathcal{T}_2) \geq 1- 1/p$.
\end{lemma}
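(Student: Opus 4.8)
The plan is to view $g_j$ as the supremum of an empirical process indexed by the product class $\mathcal{F}_j^\cup$, bound its expectation through a bracketing-entropy maximal inequality, upgrade this to a high-probability bound via a bounded-differences concentration argument, and finally take a union bound over the $p$ nodes.

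First I would control the bracketing entropy of $\mathcal{F}_j^\cup$ by reducing to the single-coordinate classes. By Lemma~\ref{lem:bounded-f} every $f_{kj}\in\mathcal{F}_{kj}$ satisfies $\|f_{kj}\|_\infty\le 2C$, so each product $\mu_{kk'j}(f_{kj},f_{k'j})$ is bounded by $4C^2$, which serves as a constant envelope for $\mathcal{F}_j^\cup$. Using the elementary inequality $|f_1 f_2 - h_1 h_2|\le 2C(|f_1-h_1|+|f_2-h_2|)$ for functions bounded by $2C$, an $\epsilon/(4C)$-bracket for each factor produces an $\epsilon$-bracket for the product, so Lemma~\ref{lem:bracketing} yields $\log N_{[]}(\epsilon,\{f_{kj}f_{k'j}\},\|\cdot\|_\infty)\lesssim(1/\epsilon)^{1/\eta}\log(1/\epsilon)$ for each pair $(k,k')$. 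Since $\mathcal{F}_j^\cup$ is a union of at most $p^2$ such product classes, its bracketing number is at most $p^2$ times the largest individual one, giving
\begin{equation*}
\log N_{[]}(\epsilon,\mathcal{F}_j^\cup,\|\cdot\|_\infty)\lesssim \log p + \left(\tfrac{1}{\epsilon}\right)^{1/\eta}\log(1/\epsilon).
\end{equation*}

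Next I would bound $\mathbb{E}[g_j]$ via a bracketing maximal inequality (e.g.\ van der Vaart--Wellner). Because $\|\cdot\|_\infty$-brackets dominate $L_2(P)$-brackets, the entropy integral $J_{[]}(4C^2)=\int_0^{4C^2}\sqrt{1+\log N_{[]}(\epsilon,\mathcal{F}_j^\cup,L_2(P))}\,d\epsilon$ splits, through $\sqrt{a+b}\le\sqrt a+\sqrt b$, into a term of order $\sqrt{\log p}$ (the $\log p$ piece integrated over an interval of constant length) plus $\int_0^{4C^2}(1/\epsilon)^{1/(2\eta)}\sqrt{\log(1/\epsilon)}\,d\epsilon$. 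The latter integral converges to a finite constant independent of $n$ and $p$ precisely because $\eta>1$ forces $1/(2\eta)<1$. Hence $J_{[]}(4C^2)\lesssim\sqrt{\log p}$, and the maximal inequality delivers $\mathbb{E}[g_j]\lesssim\sqrt{\log p/n}$.

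Finally I would upgrade this expectation bound to a tail bound. Viewing $g_j$ as a function of the $n$ i.i.d.\ samples, replacing a single sample shifts $\int f(x)(P_n-P)(dx)$ by at most $2\cdot 4C^2/n$ uniformly over $f\in\mathcal{F}_j^\cup$, so $g_j$ has bounded differences $8C^2/n$. McDiarmid's inequality then gives $\mathbb{P}(g_j-\mathbb{E}[g_j]\ge t)\le\exp(-nt^2/(32C^4))$; taking $t\asymp\sqrt{\log p/n}$ makes the right-hand side at most $1/p^2$, so that $g_j\lesssim\sqrt{\log p/n}$ with probability at least $1-1/p^2$ for each fixed $j$, and a union bound over $j\in[p]$ yields $\mathbb{P}(\mathcal{T}_2)\ge 1-1/p$. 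The main obstacle is the second step: verifying that the bracketing integral converges---which hinges entirely on the Sobolev decay exponent $\eta>1$ from Assumption~\ref{assump:sobolev}---and confirming that the union over $p^2$ coordinate pairs contributes only the mild additive $\log p$ term, so that the product/union structure does not degrade the target $\sqrt{\log p/n}$ rate.
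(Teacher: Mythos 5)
Your proposal is correct and follows essentially the same route as the paper's proof: bound the bracketing entropy of $\mathcal{F}_j^\cup$ by combining the union over the $p^2$ coordinate pairs (contributing the additive $\log p$) with the product-bracket reduction to the single classes of Lemma~\ref{lem:bracketing}, deduce $\mathbb{E}[g_j]\lesssim\sqrt{\log p/n}$ from the bracketing maximal inequality (the paper cites Corollary 19.35 of \cite{van2000asymptotic}), and then apply McDiarmid's bounded-differences inequality with increments $8C^2/n$ followed by a union bound over $j\in[p]$. Your write-up in fact makes explicit two steps the paper leaves implicit, namely the Lipschitz argument justifying that an $\epsilon/(4C)$-bracket of each factor yields an $\epsilon$-bracket of the product, and the verification that $\eta>1$ makes the entropy integral converge.
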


\begin{proof}
    Note that each function class $\mathcal{F}_{kj}$ has finite bracketing entropy by Lemma \ref{lem:bracketing}. Moreover, each $f_{kj} \in \mathcal{F}_{kj}$ is uniformly bounded with $|f_{kj}| \leq 2C$ by Lemma \ref{lem:bounded-f}.
    Now for the function class $\mathcal{F}^\cup_j$, its bracketing entropy is given by
    \begin{equation*}
    \log N_{[]} (\delta,\mathcal{F}^\cup_j, \|\cdot\|_\infty ) \leq \log (p^2 \cdot \max_{k,j} N^2_{[]}\left( \delta/(4C), \mathcal{F}_{kj} , \|\cdot\|_\infty) \right)  
    \lesssim \log p + (1 / \delta)^{1/\eta} \log (1/\delta).
    \end{equation*}
    When $\delta \asymp 1$, the corresponding bracketing integral is given by
    \begin{equation*}
        J_{[]} (\delta, \mathcal{F}^\cup_j, \|\cdot\|_\infty) 
    = \int_{0}^\delta  \sqrt{\log N_{[]} (u,\mathcal{F}^\cup_j, \|\cdot\|_\infty )} du 
    \lesssim \sqrt{\log p} .
    \end{equation*}
    Hence, from Corollary 19.35 of \cite{van2000asymptotic}, with $F(x) = \sup_{f\in \mathcal{F}^\cup_j} |f(x)| \asymp 1$,
    \begin{equation*}
        \mathbb{E} \left[ g_j(X^{(1)}, \dots, X^{(n)}) \right] 
    \lesssim \frac{  J_{[]} (\|F\|_\infty, \mathcal{F}_j^\cup, \|\cdot\|_\infty ) }{\sqrt{n}} 
    \lesssim \sqrt{\frac{\log p}{n}}.
    \end{equation*}
    Note that the function $g_j(x^{(1)}, \dots, x^{(n)})$ satisfies the bounded difference property with bound $8C^2 /n$ for all $n$ samples.
    To see this, note that for any $i = 1, \dots, n$,
    \begin{align*}
    \begin{split}
        &\left|g_j(x^{(1)}, \dots, x^{(i)} , \dots, x^{(n)} ) - g_j (x^{(1)}, \dots, x^{(i)'}, \dots, x^{(n)}) \right| \\
    \leq{}& \sup_{f\in \mathcal{F}_j^{\cup}} \left| \frac{1}{n} f(x^{(i)}) - \frac{1}{n}\mathbb{E}[f(X^{(i)})] -\frac{1}{n} f(x^{(i)'}) + \frac{1}{n} \mathbb{E}[f(X^{(i)})] \right|  
    \leq \frac{1}{n} \sup_{f\in \mathcal{F}_j^{\cup}} \left| f(x_i) - f(x^{(i)'}) \right|
    \leq \frac{8C^2}{n}.
    \end{split}
    \end{align*}
    Then by the McDiarmid's inequality, we have $\mathbb{P}(\mathcal{T}_{2,j}) \geq 1 - e^{-t} / p$, where
    \begin{equation*}
        \mathcal{T}_{2,j} := \left\{ g_j(X^{(1)}, \dots, X^{(n)}) \leq  \mathbb{E}\left[ g_j(X^{(1)}, \dots, X^{(n)}) \right] + C^2\sqrt{32(t + \log p) / n} \right\}.
    \end{equation*}
    By letting $t = \log p$ and using the fact that $\mathcal{T}_2 = \cap_{j=1}^p \mathcal{T}_{2,j}$, the result follows.
\end{proof}

\vspace{0.2in}
\begin{lemma}
    Let $D^\star_{kj} = \sum_{r=R_n+1}^{\infty} \beta^\star_{rkj}b_{rkj}(X_k)$. Define the event 
    \begin{equation*}
        \mathcal{T}_3 := \left\{ n^{-1} \sum_{k \in \mathrm{pa}(j)} \left|  \epsilon_j^\top D^\star_{kj}  \right|  \lesssim s_j R_n^{-2\eta} + s_j \log p /n, \quad \forall j \in [p]  \right\}.
    \end{equation*}
    Then $\mathbb{P}(\mathcal{T}_3) \geq 1 -1/p$.
\end{lemma}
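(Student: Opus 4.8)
The plan is to condition on the design data and exploit the conditional Gaussianity of each inner product $\epsilon_j^\top D^\star_{kj}$. First I would observe that for any $k \in \mathrm{pa}(j)$, the variable $X_k$ is a deterministic function of the noise variables indexed by $k$ and its ancestors, and since $\mathcal{G}^\star$ is acyclic, $\epsilon_j$ is not among these. By the mutual independence of $\{\epsilon_\ell\}_{\ell\in\mathrm{V}}$ in model \eqref{eq:DAG-model}, the entire collection $\{X_k^{(i)}\}_{k\in\mathrm{pa}(j),\,i\in[n]}$ is independent of the vector $\epsilon_j\in\mathbb{R}^n$. Conditioning on the design, each $\epsilon_j^{(i)}$ remains i.i.d.\ $\mathcal{N}(0,\sigma_j^{\star2})$, so $\epsilon_j^\top D^\star_{kj} = \sum_{i=1}^n \epsilon_j^{(i)} D^\star_{kj}(X_k^{(i)})$ is a centered Gaussian with conditional variance $\sigma_j^{\star2}\,\|D^\star_{kj}\|_2^2$.

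The second ingredient is a deterministic control of this conditional variance. By Assumption \ref{assump:sobolev} (specifically $\|b_{rkj}\|_\infty\le 1$ and $\sum_{r}|\beta^\star_{rkj}|\,r^\eta\le C$), we have $\|D^\star_{kj}\|_\infty \le \sum_{r>R_n}|\beta^\star_{rkj}| \le R_n^{-\eta}\sum_{r>R_n}|\beta^\star_{rkj}|\,r^\eta \lesssim R_n^{-\eta}$, exactly as in Lemma \ref{lem:bounded-d}. Hence $\|D^\star_{kj}\|_n^2 \le \|D^\star_{kj}\|_\infty^2 \lesssim R_n^{-2\eta}$ almost surely, so the conditional variance is at most $\overline{\sigma}^2\cdot n\|D^\star_{kj}\|_n^2 \lesssim n R_n^{-2\eta}$ using Assumption \ref{assump:bound-variance}. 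Because this variance bound is uniform in the design, the Gaussian tail inequality $\mathbb{P}(|\epsilon_j^\top D^\star_{kj}| > t) \le 2\exp\!\big(-t^2/(2\overline{\sigma}^2 C^2 n R_n^{-2\eta})\big)$ holds unconditionally after integrating out the design. Taking $t = n u$ with $u \asymp R_n^{-\eta}\sqrt{\log p / n}$ and a large enough constant makes the exponent of order $\log p$, so a union bound over the at most $p^2$ relevant pairs $(k,j)$ keeps the total failure probability below $1/p$.

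On the resulting high-probability event, $n^{-1}|\epsilon_j^\top D^\star_{kj}| \lesssim R_n^{-\eta}\sqrt{\log p / n}$ for every $j$ and every $k\in\mathrm{pa}(j)$. Summing over the at most $s_j$ parents (recall $\mathrm{pa}(j)\subseteq\{k:(k,j)\in E^\circ\}$) and applying the elementary inequality $R_n^{-\eta}\sqrt{\log p/n} = \sqrt{R_n^{-2\eta}\cdot(\log p/n)} \le \tfrac12\big(R_n^{-2\eta}+\log p/n\big)$ yields $n^{-1}\sum_{k\in\mathrm{pa}(j)}|\epsilon_j^\top D^\star_{kj}| \lesssim s_j R_n^{-2\eta}+s_j\log p/n$, which is precisely the claimed bound. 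The main subtlety to handle carefully is the independence argument in the first step: it is what turns each inner product into a genuinely Gaussian quantity with a \emph{deterministically} bounded variance, so that no separate concentration argument for the random empirical norm $\|D^\star_{kj}\|_n$ is needed. The remainder is a standard Gaussian tail estimate plus union bound, with the AM--GM step reconciling the intermediate rate $\sqrt{R_n^{-2\eta}\log p/n}$ with the additive form of the target.
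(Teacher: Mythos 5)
Your proof is correct and follows essentially the same route as the paper's: condition on the design using the independence of $\epsilon_j$ from the parental variables, invoke the truncation bound $\|D^\star_{kj}\|_\infty \lesssim R_n^{-\eta}$ from Lemma~\ref{lem:bounded-d}, apply Gaussian concentration, and reconcile the rate via the AM--GM inequality $R_n^{-\eta}\sqrt{\log p/n} \le \tfrac12(R_n^{-2\eta} + \log p/n)$. The only cosmetic difference is that you use a direct Gaussian tail bound for each fixed inner product with a union bound over pairs, whereas the paper invokes Lemma~7.4 of \cite{vandegeer2013}; both yield the stated probability after tuning constants.
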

\begin{proof}
    First consider the set 
    $$
    \mathcal{T}_{3,j}:= \left\{
    n^{-1} \sum_{k \in \mathrm{pa}(j)} \left|  \epsilon_j^\top D^\star_{kj}  \right|  \lesssim s_j R_n^{-2\eta} + s_j (t +\log p) /n 
    \right\}.
    $$
    Since $\epsilon_j$ and $D^\star_{kj}$ are independent for all $k\in\mathrm{pa}(j)$, we focus on the conditional event $\mathcal{T}_{3,j} \mid D^\star_{kj}$.  When $D^\star_{kj}$ is fixed, by Lemma 7.4 of \cite{vandegeer2013}, with probability at least $1 - e^{-t}/p$,
    \begin{equation*}
         (n\sigma_j^\star)^{-1} \sup_{ \| D^\star_{kj} \|_n \leq 1 } |\epsilon^\top_j D^\star_{kj} |\leq  \sqrt{2/n} + \sqrt{2(t + \log p) / n}.
    \end{equation*}
    Consequently, for some $\delta_1 \in (0,1)$.
    \begin{align*}
    \begin{split}
    2 (n\sigma_j^\star)^{-1} \left|  \epsilon_j^\top D^\star_{kj} \right|
    &= 2 (n\sigma_j^\star)^{-1} \left| \epsilon_j^\top  D^\star_{kj}  \right| \cdot \frac{ \| D^\star_{kj} \|_n }{ \| D^\star_{kj} \|_n } 
    \leq 2 \left( \sqrt{2 /n } + \sqrt{2(t+\log p) / n} \right) \cdot \| D^\star_{kj} \|_n \\
    &\leq \delta_1 \| D_{kj} \|_n^2 + 4 / (n\delta_1) + 4 (t + \log p) / (n\delta_1) \lesssim R_n^{-2\eta} + (t + \log p) /n.
    \end{split}
    \end{align*}
    The last inequality follows from Lemma \ref{lem:bounded-d}.
    This implies that $\mathbb{P}(\mathcal{T}_{3,j} \mid D^\star_{kj} ) \geq 1  -e^{-t} / p $, and consequently $\mathbb{P}(\mathcal{T}_{3,j}) \geq 1 - e^{-t} / p$. By letting $t = \log p$ and using the fact that $\mathcal{T}_3 = \cap_{j=1}^p \mathcal{T}_{3j}$, the result follows.
\end{proof}

\vspace{0.2in}
We define the following notations used in Lemma \ref{lem:event:T4} and \ref{lem:event:T5}. Let
$$
Z_j:=\left[1_n, b_{1 1j}(X_1), \dots, b_{1 pj}(X_p), \dots, b_{R_n 1j} (X_1), \dots, b_{R_n pj}(X_p)  \right] \in \mathbb{R}^{n \times (p R_n + 1)},
$$
and
$$
\beta_j = \left[ {\beta}_{0 *j}, \beta_{11j}, \dots, \beta_{1pj}, \dots, \beta_{R_n1j}, \dots, \beta_{R_n pj} \right]^\top \in \mathbb{R}^{pR_n + 1}, \quad
\beta_{0*j} = \sum_{k=1}^p \beta_{0kj}.
$$
Moreover, denote all permutations corresponding to $\mathcal{G}({\beta})$ as $\Pi_{{\beta}}$, where node $k$ is a parent of node $j$ if $\sum_{r=1}^{R_n}|{\beta}_{rkj}|^2 \neq 0$, and ${\beta}_j = 0$ if node $j$ has no parents. 

\vspace{0.2in}
\begin{lemma}\label{lem:event:T4}
    We define the event 
    \begin{equation*}
        \mathcal{T}_4:= \left\{ 2n^{-1} \left| \epsilon_j^\top Z_j (\beta^\star_j - \widetilde{\beta}_j) \right| 
        \leq \delta_2 \|Z_j (\beta^\star_j - \widetilde{\beta}_j)\|_n^2 + \Delta_{n,p}, \quad \text{$\forall \widetilde{\beta}_j$ with ${\Pi}_{\widetilde{\beta}} \subseteq \Pi^\star$},  \quad \forall j \in [p] \right\},
    \end{equation*}
    where $\delta_2 \in (0,1)$ and $\Delta_{n,p} \lesssim s_j R_n \log p / n$.
    Then $\mathbb{P}(\mathcal{T}_4) \geq 1 - 1 / p$.
\end{lemma}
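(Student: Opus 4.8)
The plan is to reduce the uniform bound (over all admissible $\widetilde\beta_j$) to a single chi-squared tail estimate, by exploiting two facts: the residual $Z_j(\beta^\star_j-\widetilde\beta_j)$ always lies in a low-dimensional subspace, and the columns spanning that subspace are independent of $\epsilon_j$.

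First I would pin down the relevant subspace. Fix a node $j$ and any $\widetilde\beta_j$ with $\Pi_{\widetilde\beta}\subseteq\Pi^\star$ and $\Pi_{\widetilde\beta}\neq\varnothing$ (the latter holds for any feasible, DAG-inducing coefficient). If $k$ is a parent of $j$ under $\widetilde\beta$, then for every $\pi\in\Pi_{\widetilde\beta}\subseteq\Pi^\star$ we have $\pi^{-1}(k)<\pi^{-1}(j)$; were $k$ a descendant of $j$ in $\mathcal{G}^\star$, every true ordering would instead force $\pi^{-1}(j)<\pi^{-1}(k)$, a contradiction. Hence all parents of $j$ under $\widetilde\beta$ lie in $\mathcal{N}_j:=\{k:(k,j)\in E^\circ,\ k\text{ is a non-descendant of }j\text{ in }\mathcal{G}^\star\}$, and by Assumption~\ref{assump:superstructure-sparsity} we have $|\mathcal{N}_j|\leq s_j\leq K$. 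Since $\beta^\star_j$ is supported on $\mathrm{pa}(j)\subseteq\mathcal{N}_j$ as well, the residual $w:=Z_j(\beta^\star_j-\widetilde\beta_j)$ always lies in the column space $V_j$ of the sub-design of $Z_j$ indexed by $\mathcal{N}_j$ together with the intercept. Thus $\dim V_j\leq s_j R_n+1$, uniformly over admissible $\widetilde\beta_j$.

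The crucial structural step is the independence claim. Because every $k\in\mathcal{N}_j$ is a non-descendant of $j$ in $\mathcal{G}^\star$ and the noise variables are mutually independent, $\epsilon_j$ is independent of $\{X_k:k\in\mathcal{N}_j\}$, hence of the random subspace $V_j$ and its orthogonal projector $P_{V_j}$. Conditioning on $\{X_k:k\in\mathcal{N}_j\}$ therefore leaves $\epsilon_j\sim\mathcal{N}(0,\sigma_j^{\star2}I_n)$, so $\|P_{V_j}\epsilon_j\|_2^2/\sigma_j^{\star2}\sim\chi^2_{d_j}$ with $d_j=\dim V_j$. I can now decouple the supremum from $w$: for every $w\in V_j$, writing $\epsilon_j^\top w=(P_{V_j}\epsilon_j)^\top w$ and applying Cauchy--Schwarz followed by Young's inequality $2ab\le\delta_2 a^2+b^2/\delta_2$ with $a=\|w\|_n$ and $b=\|P_{V_j}\epsilon_j\|_2/\sqrt n$ yields, simultaneously in $w$,
\begin{equation*}
\frac{2}{n}\bigl|\epsilon_j^\top w\bigr|\ \le\ \delta_2\|w\|_n^2+\frac{1}{\delta_2}\cdot\frac{\|P_{V_j}\epsilon_j\|_2^2}{n}.
\end{equation*}
This identifies $\Delta_{n,p}=\delta_2^{-1}n^{-1}\|P_{V_j}\epsilon_j\|_2^2$, and the remaining task is purely a chi-squared tail bound.

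To finish, I would apply the Laurent--Massart deviation inequality to $\chi^2_{d_j}$ with deviation parameter $x=t+\log p$: with probability at least $1-e^{-(t+\log p)}$, conditionally on $\{X_k:k\in\mathcal{N}_j\}$,
\begin{equation*}
\|P_{V_j}\epsilon_j\|_2^2\ \lesssim\ \sigma_j^{\star2}\bigl(d_j+t+\log p\bigr)\ \lesssim\ \overline\sigma^2\bigl(s_jR_n+\log p\bigr),
\end{equation*}
using $d_j\le s_jR_n+1$ and Assumption~\ref{assump:bound-variance}. Since the right-hand side is deterministic, the bound holds unconditionally with the same probability; taking $t=\log p$ gives per-node failure probability at most $e^{-t}/p=p^{-2}$. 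Dividing by $\delta_2 n$ and using $s_jR_n+\log p\le 2\,s_jR_n\log p$ gives $\Delta_{n,p}\lesssim s_jR_n\log p/n$, and a union bound over the $p$ nodes yields $\mathbb{P}(\mathcal{T}_4)\ge 1-1/p$. The main obstacle is the first pair of steps: one must argue carefully that the ordering constraint $\Pi_{\widetilde\beta}\subseteq\Pi^\star$ confines the active design columns to non-descendants of $j$, since this independence is exactly what converts the uniform empirical-process supremum into an exact conditional chi-squared and caps the effective dimension at $s_jR_n$. The remaining pieces (Young's inequality and the chi-squared tail) are routine.
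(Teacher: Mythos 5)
Your proof is correct and follows essentially the same route as the paper's: both exploit the fact that $\Pi_{\widetilde\beta}\subseteq\Pi^\star$ confines the active columns to non-descendants of $j$ (hence independence of $\epsilon_j$ from the design), reduce the supremum to the norm of a projection onto a subspace of dimension $O(s_jR_n)$, apply a $\chi^2$ tail bound, and finish with Young's inequality and a union bound — the only difference being that you spell out the Laurent--Massart step that the paper outsources to Lemma~7.4 of \citet{vandegeer2013}, and your identification of the subspace via $\mathcal{N}_j$ is in fact slightly sharper ($s_jR_n+1$ versus $2s_jR_n+1$). The one detail you gloss over is the degenerate case $s_j=0$, where the final simplification $s_jR_n+\log p\lesssim s_jR_n\log p$ fails; as the paper notes, there the residual $Z_j(\beta^\star_j-\widetilde\beta_j)$ is identically zero and the event holds trivially.
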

\begin{proof}
    First we consider the set
\begin{equation*}
    \mathcal{T}_{4,j} := \left\{ 2n^{-1} \left| \epsilon_j^\top Z_j(\beta_j^\star - \widetilde{\beta}_j) \right| 
    \le \delta_2 \|Z_j(\beta_j^\star - \widetilde{\beta}_j)\|_n^2 + \Delta(t)_{n,p}, \quad {\Pi}_{\widetilde{\beta}} \subseteq \Pi^\star \right\},
\end{equation*}
where $\Delta(t)_{n,p} \lesssim s_j ( R_n +t + \log p)/n$.
When the permutations ${\Pi}_{\widetilde{\beta}}$ are correct, $\epsilon_j$ and $Z_j(\beta^\star_j - \widetilde{\beta}_j)$ are independent. Therefore, we focus on the conditional event $\mathcal{T}_{4,j} \mid Z_j(\beta_j^\star - \widetilde{\beta}_j)$. When $Z_j(\beta_j^\star - \widetilde{\beta}_j)$ is fixed, by Lemma 7.4 of \cite{vandegeer2013}, with probability at least $1 - e^{-t}/p$,
\begin{align*}
\begin{split}
    (n\sigma_j^\star)^{-1} \sup_{\|Z(\beta^\star_j - \widetilde{\beta}_j)\|_n \leq 1}  \left| \epsilon_j^\top Z_j (\beta_j^\star - \widetilde{\beta}_j) \right|
    & \leq \sqrt{2( 2 R_n s_j + 1) / n} + \sqrt{2 (t + \log p) / n} \\
    & \leq \sqrt{2( 2R_n s_j + 1) / n} + \sqrt{2s_j (t + \log p) / n}.
\end{split}
\end{align*}
The last inequality can be justified by considering two cases: (1) when $s_j \geq 1$, the inequality is trivially satisfied; (2) when $s_j = 0$, we have $\beta_j^\star = \widetilde{\beta}_j = 0$ due to $\{k:(k,j) \in E^\circ\} = \varnothing$, and thus the inequality again holds trivially.
Consequently, for some $\delta_2 \in (0,1)$,
\begin{align*}
\begin{split}
&    2 (n\sigma_j^\star)^{-1} \left|\epsilon_j^\top Z_j (\beta^\star_j - \widetilde{\beta}_j) \right| = 2(n\sigma_j^\star)^{-1} \left| \epsilon_j^\top Z_j (\beta^\star_j - \widetilde{\beta}_j) \right| \cdot \frac{ \|Z_j (\beta^\star_j - \widetilde{\beta}_j)\|_n }{ \|Z_j (\beta^\star_j - \widetilde{\beta}_j)\|_n }  \\
\leq{}&  2 \left( \sqrt{2 (2R_n s_j + 1) / n} + \sqrt{2 s_j(t + \log p) / n} \right)  \|Z_j (\beta^\star_j - \widetilde{\beta}_j)\|_n  \\
\leq{}& \delta_2 \|Z_j (\beta^\star_j - \widetilde{\beta}_j)\|_n^2 + (8R_n s_j + 4) / (n \delta_2) + 4 s_j (t + \log p) / (n \delta_2) .
\end{split}
\end{align*}
This implies that $\mathbb{P}(\mathcal{T}_{4,j} \mid Z_j(\beta^\star_j - \widetilde{\beta}_j)) \geq 1 - e^{-t} / p$, and consequently $\mathbb{P}(\mathcal{T}_{4,j}) \geq 1 - e^{-t} / p$. By letting $t = \log p$ and using the fact that $\mathcal{T}_4 = \cap_{j=1}^p \mathcal{T}_{4,p}$, the result follows.

\end{proof}

\vspace{0.2in}
\begin{lemma}\label{lem:event:T5}
    
    Define the event 
    \begin{equation*}
    \mathcal{T}_5 := \left\{ \|Z_j (\beta^\star_j - \beta_j) \|^2_n \gtrsim  \|\beta_j^\star -  \beta_j\|_2^2 , \quad
     \text{$\forall \beta_j$ feasible in \eqref{eq:MIP}}, \quad \forall j\in[p]
    \right\}.
    \end{equation*}
    Then $\mathbb{P}(\mathcal{T}_5) \geq 1 - 1/p$.
\end{lemma}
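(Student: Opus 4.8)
The plan is to read $\mathcal{T}_5$ as a \emph{restricted eigenvalue} (compatibility) statement for the empirical Gram matrix of $Z_j$, and to prove it by combining a population lower bound with a uniform concentration argument. First I would note that, writing $\delta := \beta^\star_j - \beta_j$, the vector $Z_j\delta$ equals $\sum_k h_{kj}(X_k)$ with $h_{kj} := f^\star_{kj} - f_{kj}$, where each $h_{kj}$ lies in the (difference of the) truncated class $\mathcal{F}_{kj,n}$ and is mean-zero by the intercept convention; indeed $\delta_{0*} = \sum_k(\beta^\star_{0kj}-\beta_{0kj})$ is exactly the term that centers $\sum_{k,r}\delta_{rkj}b_{rkj}(X_k)$. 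Because any $\beta_j$ feasible in \eqref{eq:MIP} is supported on the candidate-parent set $\{k:(k,j)\in E^\circ\}$, whose size is at most $K$ by Assumption \ref{assump:superstructure-sparsity}, the difference $\delta$ ranges over a subspace of dimension at most $s_j R_n + 1 \le K R_n + 1$. Thus it suffices to bound below the smallest eigenvalue of the support-restricted Gram matrix $n^{-1}Z_j^\top Z_j$, uniformly over $j\in[p]$.

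For the population lower bound I would invoke Assumption \ref{assump:compatibility}: since $Z_j\delta$ has zero mean, $\|Z_j\delta\|_{L_2}^2 = \mathrm{Var}(\sum_k h_{kj}(X_k)) \ge \phi^2 \sum_k \|h_{kj}(X_k)\|_{L_2}^2$, taking $\gamma_k = \|h_{kj}(X_k)\|_{L_2}$ after normalizing each component function to unit $L_2$ norm. The orthogonality and boundedness of the basis in Assumption \ref{assump:sobolev} then convert $\sum_k\|h_{kj}\|_{L_2}^2$ into a constant multiple of $\|\delta\|_2^2$, with the Bessel-type inequality $\sum_r(\mathbb{E} b_{rkj}(X_k))^2 \le \upsilon$ controlling the mean-centering correction; this yields $\|Z_j\delta\|_{L_2}^2 \ge c_0 \|\delta\|_2^2$ for some $c_0>0$. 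For the empirical part I would show that on a high-probability event the empirical quadratic form is uniformly close to its population counterpart. Since every entry of $Z_j$ is bounded by $1$ and each $h_{kj}$ is uniformly bounded (Lemma \ref{lem:bounded-f}), I can control $\sup_{\delta\ne 0}\bigl|\,\|Z_j\delta\|_n^2 - \|Z_j\delta\|_{L_2}^2\,\bigr|/\|\delta\|_2^2$ either by a matrix-concentration bound (e.g.\ matrix Bernstein) on the restricted $(KR_n+1)$-dimensional Gram matrix, or by the bracketing-entropy/McDiarmid route already used for $\mathcal{T}_2$ (Lemma \ref{lem:bracketing}); a union bound over the $p$ nodes, setting the per-node failure probability at order $1/p^2$, produces the factor $1-1/p$. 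Combining the two pieces, on this event $\|Z_j\delta\|_n^2 \ge (c_0 - o(1))\|\delta\|_2^2 \gtrsim \|\delta\|_2^2$ for all feasible $\beta_j$ and all $j$.

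The main obstacle is the \emph{growing dimension}: the restricted Gram matrix has size $\asymp K R_n$, which increases with $n$, so the operator-norm deviation of $n^{-1}Z_j^\top Z_j$ from its mean scales like $\sqrt{K R_n \log(\cdot)/n}$, and I must ensure this is $o(c_0)$ uniformly over the $p$ nodes while retaining the $1-1/p$ probability. This is precisely where Assumption \ref{assump:bounded-Rn} ($R_n \lesssim \sqrt{n/\log p}$) enters, forcing $K R_n \log p / n \to 0$. A secondary technical nuisance is the intercept coordinate $\beta_{0*j}$, which is pinned down by the remaining coefficients through the mean-zero constraint; I would handle it by observing that $Z_j\delta$ is exactly mean-centered, so the intercept contributes no independent direction at the population level, and that the Bessel bound keeps the coefficient-to-$L_2$ comparison non-degenerate.
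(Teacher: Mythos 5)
Your proposal is correct and follows essentially the same route as the paper's proof: a population lower bound $\|Z_j\delta\|_{L_2}^2\gtrsim\|\delta\|_2^2$ obtained from Assumption \ref{assump:compatibility} together with the orthogonality of the basis in Assumption \ref{assump:sobolev} (this is exactly the paper's bound $\mu_{Z,\beta}\ge\phi^2\upsilon/4$), combined with concentration of the empirical quadratic form over the support of size $O(KR_n)$ using boundedness of the entries of $Z_j$, Assumption \ref{assump:bounded-Rn} to tame the growing dimension, and a union bound over the $p$ nodes. The only difference is cosmetic: you phrase the concentration step as an operator-norm bound on the restricted Gram matrix, whereas the paper applies a sub-exponential Bernstein inequality per direction; if anything your treatment of uniformity over feasible $\beta_j$ is the more explicit of the two.
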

\begin{proof}
    Denote the $i$-th row of $Z_j$ by $Z_j^{(i)}\in\mathbb{R}^{pR_n+1}$. Note that all entries of $Z_j$ are bounded in $[-1,1]$, and thus they are sub-Gaussian with the same parameter. Without loss of generality, we assume that $\|\beta_j^\star- \beta_j\|_2 = 1$. Since $\|\beta^\star_j - \beta_j\|_0\leq 2K R_n+1$, we have $(Z_j^{(i)\top} (\beta^\star_j - \beta_j) )^2$ being sub-exponential and 
    $
    | Z_j^{(i)\top } (\beta^\star_j  -\beta_j) | 
    \leq \|\beta^\star_j - \beta_j\|_1 \leq \sqrt{2K R_n+1}.
    $
    By the sub-exponential Bernstein inequality~\citep{vershynin2018high}, with probability at least $1 - e^{-t}/p$, we have
    \begin{equation*} 
        \frac{1}{n} \sum_{i=1}^n (\beta^\star_j - \beta_j)^\top Z_j^{(i)} Z_j^{(i)\top} (\beta^\star_j - \beta_j) \gtrsim \mu_{Z,\beta} - \bar{\kappa}^2   \left( \sqrt{ (t +\log p ) / n} + (t + \log p ) / n \right),
    \end{equation*}
    where $\bar{\kappa} = \max_i \|\beta^\top Z_j^{(i)} Z_j^{(i)\top} \beta \|_{\psi_1} \lesssim R_n^{1/2}$, and
    $$
    \mu_{Z,\beta} =\mathbb{E}\left[ (\beta^\star_{0*j} - \beta_{0*j}) + \sum_{k=1}^p \sum_{r=1}^{R_n} (\beta_{rjk}^\star - \beta_{rjk}) \cdot b_{rjk}(X_k^{(1)})  \right]^2.
    $$
    This concentration result implies that $\mathbb{P}(\mathcal{T}_{5,j}) \geq 1 - e^{-t}/p$, where
    \begin{equation*}
    \mathcal{T}_{5,j} := \left\{
        \|Z_j (\beta_j^\star - \beta_j)\|^2_n 
    \gtrsim \mu_{Z,\beta}  - \bar{\kappa}^2 \sqrt{ (t +\log p ) / n} - (t + \log p ) / n, \quad \text{$\forall \beta_j$ feasible with $\|\beta_j^\star - \beta_j\|_2 = 1$}\right\}.
    \end{equation*}
    By Assumption \ref{assump:bounded-Rn}, we have $\bar{\kappa}^2 \lesssim \sqrt{n / \log p}$. If $\mu_{Z,\beta} \geq c$ for some $c >0$, then by letting $t = \log p$ and using the fact that $\mathcal{T}_{5} = \cap_{j=1}^p \mathcal{T}_{5,j}$, the result follows.

    \vspace{0.1in}
    Finally, we show that $\mu_{Z,\beta} \geq c$ for some $c >0$.
    Note that the function $\bar{f}_{kj}(x_k) =  \frac{1}{2}\sum_{r=1}^{R_n} (\beta^\star_{rkj} - \beta_{rkj}) (b_{rkj} (x_k) -\mu_{rkj} )\in \mathcal{F}_{kj} $, where $\mu_{rkj} = \mathbb{E}[b_{rkj}(X_k) ]$. Let $\bar{\beta}_{rkj} = \frac{1}{2}(\beta^\star_{rkj} - \beta_{rkj})$, and we have $\sum_{r=0}^{R_n} |\bar{\beta}_{rkj}|^2 = 1/4 $. Then 
    \begin{align*}
    \begin{split}
           \frac{ \mu_{Z,\beta} }{4}
    &=\mathbb{E}\left[ \sum_{k=1}^p \bar{f}_{kj}(X_k) \right]^2 
    \geq \phi^2 \sum_{k=1}^p \mathbb{E}\left[ \bar{f}_{kj}(X_k) \right]^2 
    = \phi^2 \sum_{k=1}^p \mathbb{E}\left[ \bar{\beta}_{0kj}  + \sum_{r=1}^{R_n} \bar{\beta}_{rkj}  b_{rkj}(X_k) \right]^2 \\
    &= \phi^2 \left( \bar{\beta}_{0kj}^2 + \upsilon \sum_{r=1}^{R_n} \bar{\beta}_{rkj}^2 \right)
    \geq \frac{ \phi^2 \upsilon}{4} > 0.
    \end{split}
    \end{align*}
    The first inequality follows from Assumption \ref{assump:compatibility}. The last equality follows from $\mathbb{E}[b_{rkj}(X_k) b_{r'kj}(X_k)] = \upsilon \mathds{1}(r = r')$ with $\upsilon\in(0, 1]$, as well as $\bar{\beta}_{0kj} = - \sum_{r=1}^{R_n} \bar{\beta}_{rkj} \cdot \mathbb{E}[b_{rkj}(X_k)]$.
    
\end{proof}

\vspace{0.2in}
\subsection{Lemmas for the main results}
\label{sec:appendix:prelim-lemma}

For any $\vec{g}_j$, we introduce the empirical counterpart of the residual variance $\nu_j^n(\vec{g}_j)$, which is denote by $\widehat{\nu}_j^n(\vec{g}_j):= \min_{f\in \mathcal{F}_n^{\oplus \vec{g}_j}} \int \mathcal{R}_j(f; \vec{g}_j) P_n(dx)$.

\vspace{0.2in}
\begin{lemma} \label{lem:Pn-2-P}
    Under event $\mathcal{T}_1 \cap \mathcal{T}_2$, we have 
    $$ 
    \left| \nu_j^n(\vec{g}_j) - \widehat{\nu}_j^n(\vec{g}_j) \right| \lesssim s_j \sqrt{\log p /n}.
    $$
\end{lemma}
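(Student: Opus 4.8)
The plan is to reduce the claim to a single uniform empirical-process bound over the truncated additive class, and then dispatch the resulting pieces using $\mathcal{T}_1$ and $\mathcal{T}_2$. Let $f^\dagger$ and $\widehat{f}$ denote the minimizers of $\mathbb{E}[\mathcal{R}_j(f;\vec{g}_j)]$ and of $\int \mathcal{R}_j(f;\vec{g}_j)\,P_n(dx)$ over $\mathcal{F}_n^{\oplus \vec{g}_j}$, realizing $\nu_j^n(\vec{g}_j)$ and $\widehat{\nu}_j^n(\vec{g}_j)$ respectively. Substituting $f^\dagger$ into the empirical objective and $\widehat{f}$ into the population objective yields the two one-sided comparisons $\widehat{\nu}_j^n(\vec{g}_j) - \nu_j^n(\vec{g}_j) \le \int \mathcal{R}_j(f^\dagger;\vec{g}_j)(P_n-P)(dx)$ and $\nu_j^n(\vec{g}_j) - \widehat{\nu}_j^n(\vec{g}_j) \le \int \mathcal{R}_j(\widehat{f};\vec{g}_j)(P-P_n)(dx)$, so that
$$
\big|\nu_j^n(\vec{g}_j) - \widehat{\nu}_j^n(\vec{g}_j)\big| \le \sup_{f\in \mathcal{F}_n^{\oplus \vec{g}_j}} \Big|\int \mathcal{R}_j(f;\vec{g}_j)(P_n-P)(dx)\Big|.
$$
It then suffices to bound this supremum by $s_j\sqrt{\log p/n}$.

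To expose the structure, I would substitute the structural equation $X_j = \sum_{k\in \mathrm{pa}(j)} f^\star_{kj}(X_k) + \epsilon_j$ and write the residual as $w_f + \epsilon_j$, where $w_f := \sum_{k\in \mathrm{pa}(j)} f^\star_{kj}(X_k) - \sum_{k\in \vec{g}_j} f_{kj}(X_k)$ is an additive function supported on $S := \mathrm{pa}(j)\cup \vec{g}_j$, with $|S|\le s_j$ since both index sets lie in the $E^\circ$-neighborhood of $j$ (whose size is $s_j \le K$). Expanding gives $\mathcal{R}_j(f;\vec{g}_j) = w_f^2 + 2 w_f \epsilon_j + \epsilon_j^2$. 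The term $\int \epsilon_j^2 (P_n-P)(dx) = \|\epsilon_j\|_n^2 - \sigma_j^{\star 2}$ is controlled directly by $\mathcal{T}_1$, giving $\lesssim \sqrt{\log p/n}$. The term $w_f^2 = \sum_{k,k'\in S} \phi_k(X_k)\phi_{k'}(X_{k'})$ is a sum of at most $s_j^2$ products, where each $\phi_k$ is a difference of functions in $\mathcal{F}_{kj}$, hence (after rescaling by a constant factor, using the uniform bound of Lemma~\ref{lem:bounded-f}) each product lies in $\mathcal{F}^\cup_j$; applying $\mathcal{T}_2$ uniformly bounds $|\int w_f^2 (P_n-P)(dx)|$ by $\lesssim s_j^2 \sqrt{\log p/n}$, which is $\lesssim s_j \sqrt{\log p/n}$ because $s_j\le K$. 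When $s_j = 0$ only the $\epsilon_j^2$ term remains and the statement reduces to $\mathcal{T}_1$.

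The crux is the cross term $2\int w_f \epsilon_j (P_n-P)(dx)$, taken uniformly over $f$. This is a product of the bounded additive function $w_f$ with the \emph{unbounded} noise $\epsilon_j$ (equivalently, with the response $X_j$), and it is therefore not literally an instance of $\mathcal{T}_1$ or $\mathcal{T}_2$, both of which concern bounded integrands. I expect this to be the main obstacle. The device I would use is exactly the one underlying the events $\mathcal{T}_3$ and $\mathcal{T}_4$: for fixed $w_f$ the product $w_f\epsilon_j$ is sub-exponential (bounded times sub-Gaussian), and a tail bound of the form in Lemma~7.4 of \cite{vandegeer2013} controls $n^{-1}|\sum_i w_f(X^{(i)})\epsilon_j^{(i)} - \mathbb{E}[w_f\epsilon_j]|$ by $\lesssim \|w_f\|_n \sqrt{(\log p)/n}$. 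Making this uniform over the additive class requires a chaining step driven by the finite bracketing entropy of $\mathcal{F}_{kj}$ from Lemma~\ref{lem:bracketing} (the entropy integral converges because $\eta>1$), together with a union bound over $k\in S$ and over $j\in[p]$ supplying the $\log p$ factor; since $\|w_f\|_n$ is bounded and $|S|\le s_j$, this contributes $\lesssim s_j \sqrt{\log p/n}$.

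Combining the three contributions bounds the supremum, and hence $|\nu_j^n(\vec{g}_j)-\widehat{\nu}_j^n(\vec{g}_j)|$, by $\lesssim s_j\sqrt{\log p/n}$. I would emphasize that the delicate point throughout is precisely the linear cross term, where the unbounded response enters: the quadratic pieces are routine products of bounded functions, but to keep everything within a high-probability event one must either enlarge the event $\mathcal{T}_2$ to absorb such bounded-times-noise products or verify the accompanying sub-exponential concentration separately. That uniform control over the unbounded factor is where the real work lies.
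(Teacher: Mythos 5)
Your reduction to the uniform bound $\bigl|\nu_j^n(\vec{g}_j)-\widehat{\nu}_j^n(\vec{g}_j)\bigr|\le\sup_{f}\bigl|\int\mathcal{R}_j(f;\vec{g}_j)(P_n-P)(dx)\bigr|$ via the two one-sided comparisons is exactly the paper's final step, and your dispatch of the $\epsilon_j^2$ piece through $\mathcal{T}_1$ and of the $w_f^2$ piece through $\mathcal{T}_2$ (with the $s_j^2\lesssim s_j$ simplification from $s_j\le K$) also matches the paper. Where you genuinely diverge is the cross term $2\int w_f\,\epsilon_j\,(P_n-P)(dx)$. The paper never isolates it: it applies the pointwise inequality $(a+b)^2\le 2a^2+2b^2$ with $a=w_f$, $b=\epsilon_j$ directly inside the $(P-P_n)$ integral, which collapses everything onto the two quadratic pieces and keeps the whole argument conditioned on $\mathcal{T}_1\cap\mathcal{T}_2$ alone. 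You instead keep the exact expansion and propose a separate uniform sub-exponential concentration plus bracketing/chaining argument for the bounded-times-Gaussian product. Your instinct that this term is not literally covered by $\mathcal{T}_1$ or $\mathcal{T}_2$ is sound --- and indeed a pointwise inequality does not automatically transfer to an integral against the signed measure $P-P_n$, so the paper's shortcut is where its proof is loosest --- but the cost of your route is that, as written, it does not establish the lemma \emph{under the stated event} $\mathcal{T}_1\cap\mathcal{T}_2$: it requires either enlarging $\mathcal{T}_2$ to include bounded-function-times-noise products or introducing a new high-probability event, which you acknowledge but do not carry out. So each approach buys something: the paper's buys a short proof that stays within the declared events at the price of a questionable interchange of a pointwise inequality with a signed-measure integral; yours buys a cleaner accounting of the cross term at the price of extra empirical-process machinery and a modified event, which you would need to make explicit (and propagate into the probability budget of the downstream theorems) for the argument to be complete.
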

\begin{proof}
    Note that we can rewrite
    $$
    \nu_j^n(\vec{g}_j) = \min_{f \in \mathcal{F}_n^{\oplus \vec{g}_j}} \int \mathcal{R}_j(f; \vec{g}_j) P(dx), \quad
    \widehat{\nu}_j^n(\vec{g}_j) = \min_{f \in \mathcal{F}_n^{\oplus \vec{g}_j}} \int \mathcal{R}_j(f; \vec{g}_j) P_n(dx).
    $$
    For any $f\in \mathcal{F}^{\oplus \vec{g}_j}$, 
    \begin{align*}
\begin{split}
     &\int \mathcal{R}_j(f; \vec{g}_j) (P - P_n)(dx )  
    = \int  \left[ x_j - \sum_{k\in \vec{g}_j } f_{kj}(x_k)  \right]^2 (P-P_n)(dx )  \\
    \leq{}& 2\int \left[ \sum_{k\in \mathrm{pa}(j)} f^\star_{kj}(x_k) - \sum_{k \in \vec{g}_j} f_{kj}(x_k) \right]^2 (P - P_n) (dx) + 2 \int \epsilon_j^2 (P - P_n)(dx)  \\
    \leq{}& 8 \cdot (|\vec{g}_j| \vee |\mathrm{pa}(j)| )^2 \cdot
    \max_{k,k'} \sup_{f_{kj} \in \mathcal{F}_{kj},\ f_{k'j} \in\mathcal{F}_{k'j}} \left| \int f_{kj}(x_k) f_{k'j}(x_{k'}) (P-P_n)(dx ) \right| + 2\int \epsilon_j^2 (P - P_n)(dx).
\end{split}
\end{align*}
Under the event $\mathcal{T}_1 \cap \mathcal{T}_2$, 
since $(|\vec{g}_j| \vee |\mathrm{pa}(j)| )^2 \leq K s_j \lesssim s_j$, this implies that
$$
\int \mathcal{R}_j(f;\vec{g}_j) (P - P_n)(dx )  \lesssim s_j \sqrt{\log p / n}.
$$
Now let 
$$
\widetilde{f}^{j} \in \argmin_{f \in \mathcal{F}_n^{\oplus \vec{g}_j} } \int \mathcal{R}_j(f; \vec{g}_j) P(dx), \quad
\bar{f}^{j} \in \argmin_{f \in \mathcal{F}_n^{\oplus \vec{g}_j} } \int \mathcal{R}_j(f; \vec{g}_j) P_n(dx),
$$
and then
$$
\nu^n_j(\vec{g}_j) = \int \mathcal{R}_j( \widetilde{f}^j; \vec{g}_j) P(dx) , \quad
\widehat{\nu}^n_j(\vec{g}_j) = \int \mathcal{R}_j( \bar{f}^j; \vec{g}_j) P_n(dx).
$$
We conclude that
$$
\int \mathcal{R}_j(\bar{f}^j; \vec{g}_j) P_n (dx)
\leq \int \mathcal{R}_j(\widetilde{f}^j; \vec{g}_j ) P_n(dx) 
\lesssim \int \mathcal{R}_j(\widetilde{f}^j; \vec{g}_j ) P(dx) +  s_j \sqrt{\log p / n},
$$
and
$$
\int \mathcal{R}_j(\widetilde{f}^j; \vec{g}_j) P (dx)
\leq \int \mathcal{R}_j(\bar{f}^j; \vec{g}_j ) P(dx) 
\lesssim \int \mathcal{R}_j(\bar{f}^j; \vec{g}_j ) P_n(dx) +  s_j \sqrt{\log p / n} .
$$
In summary,
$$
\left| \nu_j^n(\vec{g}_j) - \widehat{\nu}_j^n(\vec{g}_j) \right|  \lesssim  s_j \sqrt{\log p / n}.
$$
\end{proof}

\vspace{0.2in}
\begin{lemma}\label{lem:estimated-var-bound}
    Suppose conditions of Lemma \ref{lem:var-bound}, \ref{lem:small-gap} and \ref{lem:Pn-2-P} are satisfied. Then $\underline{c}  \leq {\nu}_j^n(\vec{g}_j) \leq \overline{c} + o(1)$ and $\underline{c} - o(1)\leq \widehat{\nu}_j^n(\vec{g}_j) \leq \overline{c} + o(1)$ for any $\vec{g}_j$. 
    
\end{lemma}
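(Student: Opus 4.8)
The plan is to derive all four inequalities by chaining the three cited lemmas with elementary triangle inequalities, together with the monotonicity of a minimum taken over nested function classes. First I would treat the population truncated variance $\nu_j^n(\vec{g}_j)$. For its lower bound, the key observation is the nesting $\mathcal{F}_n^{\oplus \vec{g}_j} \subseteq \mathcal{F}^{\oplus \vec{g}_j}$: minimizing the same population risk over the smaller class can only increase the minimum, so $\nu_j^n(\vec{g}_j) \geq \nu_j(\vec{g}_j) \geq \underline{\nu}_p \geq \underline{c}$, where the middle inequality is the definition of $\underline{\nu}_p$ and the last is Lemma~\ref{lem:var-bound}. For its upper bound I would write $\nu_j^n(\vec{g}_j) \leq \nu_j(\vec{g}_j) + |\nu_j^n(\vec{g}_j) - \nu_j(\vec{g}_j)| \leq \overline{\nu}_p + d_{n,p} \leq \overline{c} + o(1)$, invoking Lemma~\ref{lem:var-bound} for $\overline{\nu}_p \leq \overline{c}$ and Lemma~\ref{lem:small-gap} for $d_{n,p} = o(1)$ (equivalently, since the zero function lies in $\mathcal{F}_n^{\oplus \vec{g}_j}$ one even gets $\nu_j^n(\vec{g}_j) \leq \mathbb{E}[X_j^2] \leq \overline{c}$ directly).

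Next I would transfer these population bounds to the empirical quantity $\widehat{\nu}_j^n(\vec{g}_j)$. The bridge here is Lemma~\ref{lem:Pn-2-P}, which gives, on the event $\mathcal{T}_1 \cap \mathcal{T}_2$ that is assumed in the hypotheses, $|\nu_j^n(\vec{g}_j) - \widehat{\nu}_j^n(\vec{g}_j)| \lesssim s_j \sqrt{\log p / n}$. Since Assumption~\ref{assump:superstructure-sparsity} bounds $s_j \leq K$ by a constant and the sample-size regime forces $\sqrt{\log p/n} \to 0$, this deviation is $o(1)$. Adding and subtracting $\nu_j^n(\vec{g}_j)$ and then substituting the bounds $\underline{c} \leq \nu_j^n(\vec{g}_j) \leq \overline{c} + o(1)$ established in the first step yields $\underline{c} - o(1) \leq \widehat{\nu}_j^n(\vec{g}_j) \leq \overline{c} + o(1)$, which is exactly the stated conclusion.

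The argument is essentially bookkeeping, so I do not expect a serious obstacle; the only point that will require genuine care is the \emph{uniformity} of the $o(1)$ remainders over all $j \in \mathrm{V}$ and all admissible parent sets $\vec{g}_j$. This is precisely why Lemma~\ref{lem:small-gap} is phrased through the maximal gap $d_{n,p}$ and Lemma~\ref{lem:Pn-2-P} carries an explicit $s_j$-dependence that is controlled uniformly by the constant $K$. I would therefore be explicit that every displayed inequality holds uniformly in $j$ and $\vec{g}_j$, so that the constants $\underline{c}$, $\overline{c}$ and the vanishing remainder terms are independent of the particular node and parent set, which is what the statement ``for any $\vec{g}_j$'' demands.
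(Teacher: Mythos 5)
Your proposal is correct and follows essentially the same route as the paper's proof: monotonicity of the minimum over the nested classes $\mathcal{F}_n^{\oplus \vec{g}_j}\subseteq\mathcal{F}^{\oplus \vec{g}_j}$ for the lower bound, Lemma~\ref{lem:small-gap} (via $d_{n,p}=o(1)$) for the upper bound, and Lemma~\ref{lem:Pn-2-P} to transfer both bounds to $\widehat{\nu}_j^n(\vec{g}_j)$, with the $o(1)$ terms uniform in $j$ and $\vec{g}_j$. The only cosmetic difference is that the paper makes explicit the step $\nu_j(\vec{g}_j)\geq\nu_j(\vec{g}(\pi)_j)\geq\underline{\nu}_p$ needed because $\underline{\nu}_p$ is defined as a minimum over parent sets of size exactly $K$, whereas you attribute $\nu_j(\vec{g}_j)\geq\underline{\nu}_p$ directly to the definition; this is an immediate consequence of the same monotonicity you already invoke.
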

\begin{proof}
    First, by Lemma \ref{lem:var-bound}, we have 
    $
    \nu^n_j(\vec{g}_j) \geq \nu_j(\vec{g}_j) 
    \geq \nu_j(\vec{g}(\pi)_j)
    \geq \underline{\nu}_p 
    \geq \underline{c} ,
    $
    where $\vec{g}_j \subseteq \vec{g}(\pi)_j$. By Lemma \ref{lem:Pn-2-P}, we further conclude that $\widehat{\nu}_j^n(\vec{g}_j) \geq \nu^n_j(\vec{g}_j) -o(1) \geq \underline{c} - o(1)$. 
    
    Next, recall that $d_{n,p} :=  \max_{j\in\mathrm{V}} \max_{|\vec{g}_j| \in \{0, \dots, K\} }  \left| \nu_j(\vec{g}_j) - \nu^n_j(\vec{g}_j) \right|$. Then by Lemma \ref{lem:var-bound} and \ref{lem:small-gap}, we have
    $
    \nu^n_j(\vec{g}_j) \leq \nu_j(\vec{g}_j) + d_{n,p}  \leq \overline{\nu}_p + d_{n,p} \leq \overline{c} + o(1).
    $
    By Lemma \ref{lem:Pn-2-P}, we further conclude that $\widehat{\nu}_j^n (\vec{g}_j) \leq \nu_j^n(\vec{g}_j) + o(1) \leq \overline{c} + o(1)$.

\end{proof}

\vspace{0.2in}
\begin{lemma}\label{lem:lower-bound-for-contra}
    Suppose the conditions of Lemma \ref{lem:Pn-2-P} and Lemma \ref{lem:estimated-var-bound} are satisfied. For any $\pi \notin \Pi^\star$ and any $\vec{g}_j \subseteq \vec{g}(\pi)_j$, we have
    $$
    p^{-1} \sum_{j=1}^p \left[ \log \widehat{\nu}_j^n(\vec{g}_j) - \log \sigma^{\star2}_j  \right] \gtrsim \xi_p - s_n / p \sqrt{\log p /n}.
    $$
\end{lemma}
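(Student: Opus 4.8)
The plan is to split the summand $\log\widehat{\nu}_j^n(\vec{g}_j)-\log\sigma_j^{\star2}$ so that the separation $\xi_p$ is produced by a population quantity while everything stochastic is collected into a vanishing remainder. First I would establish, for each $j$, the pointwise chain
\[
\widehat{\nu}_j^n(\vec{g}_j)\ \ge\ \nu_j^n(\vec{g}_j)-C s_j\sqrt{\log p/n}\ \ge\ \nu_j(\vec{g}_j)-C s_j\sqrt{\log p/n}\ \ge\ \nu_j(\vec{g}(\pi)_j)-C s_j\sqrt{\log p/n}.
\]
The three steps use, respectively, Lemma~\ref{lem:Pn-2-P} for the empirical-versus-population gap $|\nu_j^n(\vec{g}_j)-\widehat{\nu}_j^n(\vec{g}_j)|\lesssim s_j\sqrt{\log p/n}$; the exact inequality $\nu_j^n(\vec{g}_j)\ge\nu_j(\vec{g}_j)$ obtained in the proof of Lemma~\ref{lem:small-gap} (truncating the basis can only inflate the residual variance); and the monotonicity $\nu_j(\vec{g}_j)\ge\nu_j(\vec{g}(\pi)_j)$, which holds because $\vec{g}_j\subseteq\vec{g}(\pi)_j$ forces the projection to live in the smaller additive class $\mathcal{F}^{\oplus\vec{g}_j}\subseteq\mathcal{F}^{\oplus\vec{g}(\pi)_j}$.

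Next I would pass to logarithms. By Lemma~\ref{lem:estimated-var-bound} (and Lemma~\ref{lem:var-bound}) the quantity $\nu_j(\vec{g}(\pi)_j)$ is bounded below by $\underline{c}>0$ uniformly in $j$ and in the choice of $\vec{g}_j$, while $s_j\le K$ makes the perturbation $C s_j\sqrt{\log p/n}=o(1)$; hence for $n$ large the argument of the logarithm stays above $\underline{c}/2$, where $\log$ is Lipschitz with constant $\lesssim 1$. This transfers the additive perturbation to the log scale, giving $\log\widehat{\nu}_j^n(\vec{g}_j)\ge\log\nu_j(\vec{g}(\pi)_j)-C' s_j\sqrt{\log p/n}$ for each $j$.

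Summing over $j$, dividing by $p$, and subtracting $p^{-1}\sum_j\log\sigma_j^{\star2}$ then yields
\[
p^{-1}\sum_{j=1}^p\left[\log\widehat{\nu}_j^n(\vec{g}_j)-\log\sigma_j^{\star2}\right]\ \ge\ p^{-1}\sum_{j=1}^p\left[\log\nu_j(\vec{g}(\pi)_j)-\log\sigma_j^{\star2}\right]-C'\,p^{-1}\sqrt{\log p/n}\sum_{j=1}^p s_j.
\]
Since $\pi\notin\Pi^\star$, the first term on the right is at least $\xi_p$ by the very definition \eqref{eq:separation}, which is the minimum of this expression over all incorrect permutations. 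It then remains to absorb the stochastic remainder into $(s_n/p)\sqrt{\log p/n}$.

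The hard part will be precisely this last bookkeeping: showing that $\sum_j s_j$ aggregates to the global edge budget $s_n$ rather than to $p$. Here I would exploit that the relevant parent sets are sparse—each $|\vec{g}_j|\le s_j\le K$ and $|\mathrm{pa}(j)|\le s_j$, with the total cardinality of the selected parents controlled by $s_n$—so that the node-wise coefficients in Lemma~\ref{lem:Pn-2-P} sum to order $s_n$ up to the constant $K$. The delicate point is that this aggregation must genuinely be driven by the edge budget and not merely by the number of nodes, so one must be careful to pair each per-node error with an actual edge rather than bounding it crudely by a constant; getting this accounting tight, while simultaneously keeping the logarithmic arguments uniformly bounded away from $0$ and $\overline{c}$ via Lemma~\ref{lem:estimated-var-bound}, is the main obstacle. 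The remaining manipulations—the three comparison inequalities and the Lipschitz transfer—are routine given the cited lemmas.
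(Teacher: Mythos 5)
Your proposal is correct and follows essentially the same route as the paper's proof: the chain $\widehat{\nu}_j^n(\vec{g}_j)\gtrsim\nu_j^n(\vec{g}_j)-s_j\sqrt{\log p/n}\ge\nu_j(\vec{g}_j)\ge\nu_j(\vec{g}(\pi)_j)$ via Lemma~\ref{lem:Pn-2-P} and monotonicity, the transfer to logarithms using the uniform lower bound $\underline{c}$ from Lemma~\ref{lem:estimated-var-bound} (the paper uses $\log(a-x)\ge\log a - x/a$, which is your Lipschitz step), and the definition of $\xi_p$. The aggregation issue you flag as the ``hard part'' is handled no more carefully in the paper, which simply writes the remainder as $s_n\sqrt{\log p/n}$; in any case the distinction is harmless downstream, since Assumption~\ref{assump:separation} already forces $p\sqrt{\log p/n}=o(p\xi_p)$.
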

\begin{proof}
    First, note that $\nu_j^n(\vec{g}_j) \geq \underline{c}$ by Lemma \ref{lem:estimated-var-bound}.
    Then
    \begin{align*}
    \begin{split}
       & \sum_{j=1}^p \left[ \log \widehat{\nu}_j^n(\vec{g}_j) - \log \sigma^{\star2}_j  \right]
    \gtrsim \sum_{j=1}^p \left[  \log {\nu}_j^n(\vec{g}_j)- \log \sigma^{\star2}_j \right] - s_n \sqrt{\log p /n} \\
    \geq&{}  \sum_{j=1}^p \left[  \log {\nu}_j(\vec{g}(\pi)_j)- \log \sigma^{\star2}_j \right] - s_n \sqrt{\log p /n} 
    \geq p \xi_p - s_n \sqrt{\log p /n}.
    \end{split}
    \end{align*}
    The first inequality follows from Lemma \ref{lem:Pn-2-P}, and $\log(a - x) \geq \log (a) - x/a$ for any $a > 0$ and $x < a$. The second inequality follows from the fact that $\nu^n_j(\vec{g}_j) \ge \nu_j(\vec{g}_j) \ge \nu_j(\vec{g}(\pi)_j)$ for any $\vec{g}_j$. The last inequality follows from the definition of $\xi_p$.
\end{proof}

\vspace{0.2in}
\section{Proof of Theorem \ref{thm:correct-permutation}}
\label{sec:appendix:prof-correct-permutation}
\subsection{Proof of Theorem \ref{thm:correct-permutation} Part 1}
\begin{proof}
First, note that for any feasible $f_{kj}$ and $\vec{g}_{j}$, 
the optimization
$$
\min_{\sigma_j \geq 0} \quad
\sum_{j=1}^p \left[ \log \sigma_j^2 + \frac{ \left\| X_j - \sum_{k\in \vec{g}_j} {f}_{kj}(X_k) \right\|_n^2   }{ \sigma_j^2 } \right] + \lambda_n^2 \sum_{j=1}^p |\vec{g}_j|
$$
is solved at 
$
\sigma_j^2 = \left\| X_j - \sum_{k\in \vec{g}_j } f_{kj}(X_k) \right\|_n^2.
$
As a result, we can re-write the objective function as
\begin{align*}
\begin{split}
    \min_{\ \{f_{kj}\}, \ \{\vec{g}_j\} }\quad & \sum_{j=1}^p  \log \left\| X_j - \sum_{k \in \vec{g}_j } {f}_{kj}(X_k)  \right\|_n^2 + p + \lambda_n^2 \sum_{j=1}^p |\vec{g}_j|
\end{split}
\end{align*}
with the constraint that each $f_{kj} \in \mathcal{F}_{kj,n}$ and the selected edges form a DAG. Given the solution $\widehat{f}_{kj}$ and $\widehat{g}_j$, the estimated variance is thus $\widehat{\sigma}_j^2 = \|X_j - \sum_{k\in\widehat{g}_j} \widehat{f}_{kj}(X_k)\|_n^2$. Recall that $D^\star_{kj} = \sum_{r=R_n+1}^{\infty} \beta^\star_{rkj} b_{rkj}$. Appealing to the basic inequality, 
$$
\sum_{j=1}^p \log \widehat{\sigma}_j^2 + p + \lambda_n^2 \widehat{s} \leq \sum_{j=1}^p \log \sigma_j^{\star2} + \sum_{j=1}^p \frac{\|\epsilon_j + \sum_{k\in \mathrm{pa}(j)} D^\star_{kj}  \|_n^2}{\sigma_j^{\star2}} + \lambda_n^2 s^\star
$$
or equivalently
\begin{align} \label{eq:basic-inequality} 
\begin{split}
    \sum_{j=1}^p \log  \frac{\widehat{\sigma}_j^2}{ \sigma_j^{\star2} } \leq \sum_{j=1}^p  \frac{\|\epsilon_j\|_n^2 - \sigma_j^{\star2} }{ \sigma_j^{\star2} } + \sum_{j=1}^p \frac{ \| \sum_{k\in\mathrm{pa}(j)} D^\star_{kj} \|_n^2 }{\sigma_j^{\star2}} + \frac{2}{n}\sum_{j=1}^p \frac{ |\epsilon_j^\top \sum_{k\in\mathrm{pa}(j)} D^\star_{kj} | }{\sigma_j^{\star2}}  + \lambda_n^2 (s^\star - \widehat{s}).
\end{split}
\end{align}
Under event $\mathcal{T}_1 \cap \mathcal{T}_3$, as well as Lemma \ref{lem:bounded-d}, we have
\begin{equation}\label{eq:log-diff-ubd}
   p^{-1} \sum_{j=1}^p \left[ \log \widehat{\sigma}_j^2 - \log \sigma_j^{\star2}  \right] \lesssim p^{-1} \left( \lambda_n^2  s_n  + p\sqrt{\log p /n} + s_n  R_n^{-2\eta} \right)  .
\end{equation}
Under Assumption \ref{assump:separation}, the upper bound is $\xi_p o(1)$. This implies that $\widehat{\Pi} \subseteq \Pi^\star$. Otherwise, there must exist $\widehat{\pi} \in \widehat{\Pi}$ such that $\widehat{\pi} \notin \Pi^\star$. Appealing to Lemma \ref{lem:lower-bound-for-contra}, where $\widehat{\sigma}_j^2 = \widehat{\nu}_j^n (\widehat{g}_j)$, we have
\begin{equation*}
   p^{-1} \sum_{j=1}^p \left[ \log \widehat{\sigma}_j^2 - \log \sigma_j^{\star2}  \right] \gtrsim \xi_p - s_n / p \sqrt{\log p /n} .
\end{equation*}
This lower bound, however, is $\xi_p(1 + o(1))$, contradicting with the upper bound $\xi_p  o(1)$ when $\xi_p > 0$. 

Finally, note that this proof relies on the event $\mathcal{T}_1 \cap \mathcal{T}_2 \cap \mathcal{T}_3 $, which holds with probability at least $1 - 4/p$.

\end{proof}

\vspace{0.2in}
\subsection{Proof of Theorem \ref{thm:correct-permutation} Part 2}
\begin{proof}
The basic inequality for the early stopping solution $\widehat{\theta}^{\rm early}$ can be written as: 
\begin{equation*}
    \ell_n(\widehat{\theta}^{\rm early}) + \lambda_n^2 \widehat{s}^{\rm early} \leq \ell_n(\widehat{\theta}) + \lambda_n^2 \widehat{s} + \tau^{\rm early} \leq \ell_n\left( \{ \beta^{\star}_{rkj} \}_{(k,j)\in E^\star,\ r\in[R_n]},\ \{\sigma_j^\star\}_{j\in \mathrm{V}}  \right) + \lambda^2 s^\star + \tau^{\rm early},
\end{equation*}
We use the same proof technique as in Part 1.
When the early stopping threshold satisfies $\tau^{\rm early} = o(p\xi_p)$, the upper bound of $p^{-1} \sum_{j=1}^p [\log (\widehat{\sigma}_j^{\rm early})^2 - \log \sigma_j^{\star2}]$, similar as in \eqref{eq:log-diff-ubd}, remains to be $\xi_p o(1)$. Moreover, the estimated variance from the early-stopping solution satisfies $(\widehat{\sigma}_j^{\rm early})^2 \geq \widehat{\nu}^n_j(\widehat{g}_j^{\rm early}) $, so when $\widehat{\pi}^{\rm early} \notin \Pi^\star$, the lower bound of $p^{-1} \sum_{j=1}^p [\log (\widehat{\sigma}_j^{\rm early})^2 - \log \sigma_j^{\star2}]$ remains to be $o(\xi_p)$. With these two key facts in place, the early-stopping permutations are guaranteed to be correct.
    
\end{proof}

\vspace{0.2in}
\section{Proof of Theorem \ref{thm:var-converge}}
\label{sec:appendix:prof-var-converge}

\vspace{0.2in}
\subsection{Proof of Theorem \ref{thm:var-converge} Part 1}
\begin{proof}
We first note that, by Theorem \ref{thm:correct-permutation}, the estimated permutations are correct: $\widehat{\Pi} \subseteq \Pi^\star$. 
Applying the inequality 
$\log (1 +x) \leq x - x^2 / (2(1 + c)^2)$, $-1 < x \leq c$, we obtain
\begin{align} \label{eq:interm0}
\begin{split}
    \log \left( \frac{\widehat{\sigma}_j^2}{\sigma_j^{\star2}} \right)
&= -\log \left( \frac{\sigma_j^{\star2}}{\widehat{\sigma}_j^2} \right) 
\geq \frac{ \widehat{\sigma}_j^2 - \sigma_j^{\star2} }{\widehat{\sigma}_j^2} + \frac{ ( \min_{j\in \mathrm{V}} \widehat{\sigma}_j^2 )^2 }{ 2\overline{\sigma}^4 } \left( \frac{ \sigma_j^{\star2} - \widehat{\sigma}_j^2 }{\widehat{\sigma}_j^2} \right)^2  .
\end{split}
\end{align}
Moreover, by \eqref{eq:basic-inequality}, we have
\begin{equation} \label{eq:interm1}
    K_0\sum_{j=1}^p \left( \frac{ \sigma_j^{\star2} - \widehat{\sigma}_j^2 }{\widehat{\sigma}_j^2} \right)^2 
    \leq \sum_{j=1}^p \left( \frac{\|\epsilon_j\|_n^2}{ \sigma_j^{\star2} } - 1 \right) - \sum_{j=1}^p \frac{\widehat{\sigma}_j^2 - \sigma^{\star2}_j}{ \widehat{\sigma}_j^2 } + \sum_{j=1}^p(\mathcal{D}_{1j} + \mathcal{D}_{2j}) + \lambda_n^2(s^\star - \widehat{s}),
\end{equation}
where 
$$
\mathcal{D}_{1j} = \frac{ \| \sum_{k\in\mathrm{pa}(j)} D^\star_{kj} \|_n^2 }{\sigma_j^{\star2}}, \quad
\mathcal{D}_{2j} = \frac{2}{n} \frac{ \sum_{k\in\mathrm{pa}(j)} |\epsilon_j^\top  D^\star_{kj} | }{\sigma_j^{\star2}} ,
$$
and $K_0 > 0$ is a constant such that $0<K_0 \leq ( \min_{j\in \mathrm{V}}  \widehat{\sigma}_j^2 )^2 /(2\overline{\sigma}^4)$. Such a constant $K_0$ exists since by Lemma \ref{lem:estimated-var-bound}, where $\widehat{\sigma}_j^2 = \widehat{\nu}_j^n(\widehat{g}_j)$, we have $\widehat{\sigma}_j^2 \geq \underline{c} - o(1)$ for all $j\in \mathrm{V}$.
In the following, we use the parameterization with $f_{kj}(\cdot) = \beta_{0kj} + \sum_{r=1}^{R_n} \beta_{rkj} b_{rkj}(\cdot)$. Recall the definition 
$$
Z_j=\left[1, b_{11j}(X_1), \dots, b_{1pj}(X_p), \dots, b_{R_n 1j}(X_1), \dots, b_{R_n pj}(X_p)  \right] \in \mathbb{R}^{p R_n + 1}, \quad \text{(as a random vector)}
$$
and
$$
\beta_j = \left[\beta_{0*j}, \beta_{11j}, \dots, \beta_{1pj}, \dots, \beta_{R_n1j}, \dots, \beta_{R_n pj} \right]^\top \in \mathbb{R}^{pR_n + 1}, \quad \beta_{0\star j} = \sum_{k=1}^p \beta_{0kj}.
$$
Then we have $\sum_{k=1}^p f_{kj}(X_k) = \beta_j^\top Z_j $. For each $j\in \mathrm{V}$,
\begin{align*}
\begin{split}
&\widehat{\sigma}_j^2 - \sigma^{\star2}_j 
= \left\| X_j -Z_j \widehat{\beta}_j \right\|_n^2 - \sigma_j^{\star2} 
=\left\| Z_j (\beta^\star_j - \widehat{\beta}_j) + \epsilon_j + \sum_{k\in\mathrm{pa}(j)}D^\star_{kj} \right\|_n^2 - \sigma^{\star2}_j \\
={}& \left\| Z_j (\beta_j^\star - \widehat{\beta}_j) + \sum_{k\in \mathrm{pa}(j)} D^\star_{kj} \right\|_n^2 + \left( \|\epsilon_j\|_n^2 - \sigma^{\star2}_j \right)  +\frac{2}{n} \epsilon_j^\top (Z_j(\beta^\star_j - \widehat{\beta}_j)) +  2 \frac{\epsilon_j^\top}{n} \sum_{k\in \mathrm{pa}(j)} D^\star_{kj}
\end{split}
\end{align*}
Under the event $\mathcal{T}_4$, we obtain that for some $\delta_3 > 0$ with $0 < \delta_2 + \delta_3 < 1 $,
\begin{align} \label{eq:interm2} 
\begin{split}
    &\left( \|\epsilon_j\|_n^2 - \sigma^{\star2}_j  \right) - \left( \widehat{\sigma}_j^2  - \sigma^{\star2}_j \right)   \\
    \le {}& -\left\| Z_j (\beta_j^\star - \widehat{\beta}_j) + \sum_{k\in \mathrm{pa}(j)} D^\star_{kj} \right\|_n^2 + \delta_2 \left\|  Z_j(\beta^\star_j - \widehat{\beta}_j)\right\|_n^2  + \Delta_{n,p} + \sigma^{\star2}_j \mathcal{D}_{2j}  \\
    \le{}& -(1-\delta_2 - \delta_3) \left\| Z_j (\beta_j^\star - \widehat{\beta}_j) \right\|_n^2 - (1 - 1/\delta_3) \left\| \sum_{k\in \mathrm{pa}(j)} D^\star_{kj} \right\|_n^2  + \Delta_{n,p} + \sigma_j^{\star2} \mathcal{D}_{2j}   \\
    \le{}&  (1/\delta_3 - 1) \sigma_j^{\star2} \mathcal{D}_{1j}  + \sigma_j^{\star2}\mathcal{D}_{2j} + \Delta_{n,p}.
\end{split}
\end{align}
Furthermore, for some $0 <\delta_4 < K_0 $,
\begin{align} \label{eq:interm3}
\begin{split}
  \frac{ \|\epsilon_j\|_n^2 - \sigma^{\star2}_j }{\sigma^{\star2}_j} - \frac{ \|\epsilon_j\|_n^2 - \sigma^{\star2}_j }{ \widehat{\sigma}_j^2}
= \frac{ \|\epsilon_j\|_n^2 - \sigma_j^{\star2} }{ \sigma^{\star2}_j } \cdot \frac{ \widehat{\sigma}_j^2 - \sigma_j^{\star2} }{ \widehat{\sigma}_j^2 } 
\leq \frac{1}{\delta_4} \left( \frac{ \|\epsilon_j\|_n^2 - \sigma_j^{\star2} }{ \sigma_j^{\star2} } \right)^2 +\delta_4 \left( \frac{ \widehat{\sigma}_j^2 - \sigma_j^{\star2} }{ \widehat{\sigma}_j^2 } \right)^2. 
\end{split}
\end{align}
Now we plug \eqref{eq:interm2} and \eqref{eq:interm3} back into \eqref{eq:interm1}. Under event $\mathcal{T}_1 \cap \mathcal{T}_2 \cap \mathcal{T}_3 \cap \mathcal{T}_4$ and Lemma \ref{lem:bounded-d}, we obtain
\begin{align*}
\begin{split}
    &(K_0 - \delta_4) \sum_{j=1}^p \left( \frac{ \sigma_j^{\star2} - \widehat{\sigma}_j^2 }{\widehat{\sigma}_j^2} \right)^2 \\
    \leq{}&  \sum_{j=1}^p \left[ \frac{ \|\epsilon_j\|_n^2 - \sigma^{\star2}_j }{\sigma^{\star2}_j} -  \frac{ \|\epsilon_j\|_n^2 - \sigma^{\star2}_j }{ \widehat{\sigma}_j^2} \right]+ \sum_{j=1}^p \left[ \frac{ \|\epsilon_j\|_n^2 - \sigma^{\star2}_j }{ \widehat{\sigma}_j^2} - \frac{ \widehat{\sigma}_j^2 - \sigma^{\star2}_j }{ \widehat{\sigma}_j^2} \right] - \delta_4 \sum_{j=1}^p \left( \frac{ \sigma_j^{\star2} - \widehat{\sigma}_j^2 }{\widehat{\sigma}_j^2} \right)^2 \\ 
    &+ \sum_{j=1}^p(\mathcal{D}_{1j} + \mathcal{D}_{2j}) + \lambda_n^2(s^\star - \widehat{s}) \\
    \lesssim{}& (p + s_n R_n) \log p / n  + s_n R_n^{-2\eta} + \lambda_n^2 s_n.
\end{split}
\end{align*}
The result hence follows. Note that this proof relies on the event $\mathcal{T}_1 \cap \mathcal{T}_2 \cap \mathcal{T}_3 \cap \mathcal{T}_4 $, which holds with probability at least $1-5/p$.
\end{proof}

\vspace{0.2in}
\subsection{Proof of Theorem \ref{thm:var-converge} Part 2}
\begin{proof}
    Note that the estimated permutations are correct: $\widehat{\Pi}^{\rm early} \subseteq \Pi^\star$ by Theorem \ref{thm:correct-permutation}.
    Then using the same proof technique as in Part 1, this is a direct result from the basic inequality 
    \begin{equation*}
    \ell_n(\widehat{\theta}^{\rm early}) + \lambda_n^2 \widehat{s}^{\rm early} \leq \ell_n(\widehat{\theta}) + \lambda_n^2 \widehat{s} + \tau^{\rm early} \leq \ell_n\left( \{ \beta^{\star}_{rkj} \}_{(k,j)\in E^\star,\ r\in[R_n]},\ \{\sigma_j^\star\} \right) + \lambda^2 s^\star + \tau^{\rm early}.
\end{equation*}
\end{proof}

\vspace{0.2in}
\section{Proof of Theorems \ref{thm:recovery}}
\label{sec:appendix:prof-recovery}

\subsection{Proof of Theorem \ref{thm:recovery} Part 1}
\begin{proof}
    In this proof, we keep using the parameterization with $Z_j \in \mathbb{R}^{pR_n + 1}$ and $\beta_j\in\mathbb{R}^{pR_n+1}$. Let $\mathcal{G}(\beta)$ be the graph induced by $\beta$, where an edge between node $k$ and node $j$ is present if $\sum_{r=1}^{R_n} |\beta_{rkj}|^2 \neq 0$, and absent otherwise. 
    We aim to show that the solution to the following optimization problem, which is equivalent to the original one, recovers the true graph $\mathcal{G}^\star$:
    \begin{equation}\label{eq:obj-beta} 
        \min_{ \beta \in \mathfrak{B} } l(\beta),  \quad l(\beta):=
    \sum_{j=1}^p  \log \frac{\left\| X_j - Z_j \beta_j \right\|_n^2}{ \sigma_j^{\star2} }  - \sum_{j=1}^p \frac{ \|\epsilon_j\|_n^2 - \sigma^{\star2}_j }{\sigma^{\star2}_j}  + \lambda_n^2 (\mathds{1}\{  \|\beta_{kj}\|_2 \neq 0\} - s^\star),
    \end{equation}
    where $\mathfrak{B}$ corresponds to the feasible set in \eqref{eq:MIP}.
    The proof proceeds in three steps. In each step, we solve an optimization problem that is more constrained than the original one.
\begin{itemize}
    \vspace{0.05in}
    \item In step I, we restrict the feasible region to $\mathcal{R}_1 :=\{\beta \in \mathfrak{B}: \mathcal{G}(\beta) = \mathcal{G}^\star\}$, and derive an upper bound for the minimized objective function over $\beta \in \mathcal{R}_1$. We denote this upper bound by $l_1$.

    \vspace{0.05in}
    \item In step II, we restrict the feasible region to 
    $$\mathcal{R}_2 := \{\beta \in \mathfrak{B}: |\mathcal{G}(\beta)| \leq s^\star,\ \mathcal{G}(\beta) \neq \mathcal{G^\star},\ \text{$\mathcal{G}(\beta)$ has correct permutations}\},$$
    and derive a lower bound for the minimized objective function over $\beta \in \mathcal{R}_2$. We denote this lower bound by $l_2$, and verify that $l_1 < l_2$ when $n$ is sufficiently large.

    \vspace{0.05in}
    \item In step III, we restrict the feasible region to 
    $$\mathcal{R}_3 := \{\beta \in \mathfrak{B}: |\mathcal{G}(\beta)| > s^\star, \text{$\mathcal{G}(\beta)$ has correct permutations}\},$$
    and derive a lower bound for the minimized objective function over $\beta \in \mathcal{R}_3$. We denote this lower bound by $l_3$, and verify that $l_1 < l_3$ when $n$ is sufficiently large.
\end{itemize}

\vspace{0.05in}
The three steps above guarantee that the solution to the original problem falls inside $\mathcal{R}_1$ when $n$ is sufficiently large. Note that we restrict the search only to the space of correct permutations, since, by Theorem~\ref{thm:correct-permutation}, any feasible value associated with an incorrect permutation cannot be optimal. This proof relies on the event $\mathcal{T}_1 \cap \mathcal{T}_2 \cap \mathcal{T}_3 \cap \mathcal{T}_4 \cap \mathcal{T}_5$, which holds with probability at least $1 - 6/p$.

\vspace{0.2in}
\noindent\textbf{\textit{Step I.}} In this step, we use $\bar{\beta}$ to denote the solution to optimization problem \eqref{eq:obj-beta} over the additional constraint set $\mathcal{R}_1$. The estimated variance is hence $\bar{\sigma}_j^2:= \| X_j - Z_j \bar{\beta}_j \|_n^2$. Appealing to the basic inequality similar as \eqref{eq:basic-inequality}, we can arrive at an upper bound for the objective function:
\begin{align} \label{eq:basic-inequality2} 
\begin{split}
    l(\bar{\beta})&=\sum_{j=1}^p \log  \frac{\bar{\sigma}_j^2}{ \sigma_j^{\star2} }  - \sum_{j=1}^p  \frac{\|\epsilon_j\|_n^2 - \sigma_j^{\star2} }{ \sigma_j^{\star2} } 
   \leq \sum_{j=1}^p \frac{ \| \sum_{k\in\mathrm{pa}(j)} D^\star_{kj} \|_n^2 }{\sigma_j^{\star2}} + \frac{2}{n}\sum_{j=1}^p \frac{ |\epsilon_j^\top \sum_{k\in\mathrm{pa}(j)} D^\star_{kj} | }{\sigma_j^{\star2}}  \\
   &\lesssim s_n R_n^{-2\eta} + s_n \log p /n.
\end{split}
\end{align}
We call the upper bound $l_1: = s_n R_n^{-2\eta} + s_n \log p /n$. By Assumption \ref{assump:beta-min}, we have $l_1 =o(\delta_{n,p} ) $.

\vspace{0.2in}
\noindent\textbf{\textit{Step II.}} In this step, we overload the notation $\bar{\beta}$ to represent the solution to optimization problem \eqref{eq:obj-beta} over the additional constraint set $\mathcal{R}_2$. The estimated variance is $\bar{\sigma}_j^2:= \| X_j - Z_j\bar{\beta}_j \|_n^2$, and the estimated sparsity is $\bar{s}:=|\mathcal{G}(\bar{\beta})|$. Then by \eqref{eq:interm0}, \eqref{eq:interm2} and \eqref{eq:interm3},
\begin{align*}
\begin{split}
    l(\bar{\beta}) 
    &= \sum_{j=1}^p  \log \frac{\bar{\sigma}_j^2}{\sigma^{\star2}_j } - \sum_{j=1}^p \frac{ \|\epsilon_j\|_n^2 - \sigma_j^{\star2} }{ \sigma_j^{\star2}} +  \lambda_n^2 (\bar{s} - s^\star)  \\
    &\geq \sum_{j=1}^p \left[ \frac{ \bar{\sigma}_j^2 - \sigma_j^{\star2} }{ \bar{\sigma}_j^2 } - \frac{ \|\epsilon_j\|_n^2 - \sigma_j^{\star2} }{\bar{\sigma}_j^2} \right] + \sum_{j=1}^p \left[ \frac{ \|\epsilon_j\|_n^2 - \sigma_j^{\star2} }{\bar{\sigma}_j^2} - \frac{ \|\epsilon_j\|_n^2 - \sigma_j^{\star2} }{ \sigma^{\star2}_j} \right] + K_0 \sum_{j=1}^p \left( \frac{ \sigma_j^{\star2} - \bar{\sigma}_j^2 }{ \bar{\sigma}_j^2 } \right)^2 + \lambda_n^2 (\bar{s} - s^\star) \\
    &\gtrsim \sum_{j=1}^p \left\| Z_j (\beta_j^\star - \widehat{\beta}_j) \right\|_n^2 - (p + s_n R_n) \log p / n  - s_n R_n^{-2\eta} + \lambda_n^2 (\bar{s} - s^\star).
\end{split}
\end{align*}
Now we attempt to connect this lower bound with the number of missed edges $\gamma_1:= |\mathcal{G}^\star \setminus \mathcal{G}(\bar{\beta}) |$. We also define $\gamma_2:=| \mathcal{G}(\bar{\beta}) \setminus \mathcal{G}^\star |$. For any $\bar{\beta} \in \mathcal{R}_2$, we consider two cases. 

\underline{\textit{Case 1:}} If $|\mathcal{G}(\bar{\beta})| = s^\star$, then $\gamma_1 = \gamma_2 \geq 1$. Under the event $\mathcal{T}_5$, we have
\begin{equation*}
\sum_{j=1}^p  \left\| Z_j (\beta_j^\star - \bar{\beta}_j) \right\|_n^2 
\gtrsim \sum_{j=1}^p \left\|\beta_j^\star - \bar{\beta}_j \right\|_2^2 
\ge \gamma_1  \min_{(k,j) \in E^\star} \sum_{r=1}^{R_n} |\beta^\star_{rkj}|^2 
\geq \min_{(k,j) \in E^\star} \sum_{r=1}^{R_n} |\beta^\star_{rkj}|^2.
\end{equation*}
Then 
\begin{equation*}
    l(\bar{\beta}) \gtrsim \min_{(k,j) \in E^\star} \sum_{r=1}^{R_n} |\beta^\star_{rkj}|^2   - (p + s_n R_n) \log p / n  - s_n R_n^{-2\eta}.
\end{equation*}
We call this lower bound $l_2^{(1)}$. By Assumption \ref{assump:beta-min}, we have $l_2^{(1)} \gtrsim \delta_{n,p} > o(\delta_{n,p}) = l_1$ when $n$ is sufficiently large.

\underline{\textit{Case 2:}} If $|\mathcal{G}(\bar{\beta})| < s^\star$, then $\gamma_2 \geq 0$ and $\gamma_1 = s^\star - \bar{s} + \gamma_2$. Under the event $\mathcal{T}_5$, we have
    \begin{align*}
    \begin{split}
        \sum_{j=1}^p  \left\| Z_j (\beta_j^\star - \bar{\beta}_j) \right\|_n^2 
    &\gtrsim \sum_{j=1}^p \left\|\beta_j^\star - \bar{\beta}_j \right\|_2^2 
    \ge \gamma_1 \min_{(k,j) \in E^\star} \sum_{r=1}^{R_n} |\beta^\star_{rkj}|^2  
    \geq (s^\star - \bar{s})  \min_{(k,j) \in E^\star} \sum_{r=1}^{R_n} |\beta^\star_{rkj}|^2.
    \end{split}
    \end{align*}
    Then 
    \begin{equation*}
        l(\bar{\beta}) 
        \gtrsim (s^\star - \bar{s}) \min_{(k,j) \in E^\star} \sum_{r=1}^{R_n} |\beta^\star_{rkj}|^2  + \lambda_n^2 (\bar{s} - s^\star) - (p + s_n R_n) \log p / n  - s_n R_n^{-2\eta}.
    \end{equation*}
    We call this lower bound $l_2^{(2)}$. By Assumption \ref{assump:beta-min} and the choice of $\lambda_n^2$, we see when $n$ is sufficiently large,
    \begin{align*}
    \begin{split}
        l_2^{(2)} \geq (s^\star - \bar{s}) \left[  \min_{(k,j) \in E^\star} \sum_{r=1}^{R_n} |\beta^\star_{rkj}|^2  - \lambda_n^2 - (p + s_n R_n) \log p / n  - s_n R_n^{-2\eta} \right] \gtrsim \delta_{n,p} > o(\delta_{n,p}) = l_1.
    \end{split}
    \end{align*}
    Both cases lead to the result that $l_2 =\min\{l_2^{(1)}, l_2^{(2)}\} > l_1$ when $n$ is sufficiently large.

\vspace{0.2in}
\noindent\textbf{\textit{Step III.}} In this step, we overload the notation $\bar{\beta}$ to represent the solution to optimization problem \eqref{eq:obj-beta} over the additional constraint set $\mathcal{R}_3$. The estimated sparsity is also overloaded as $\bar{s}:=|\mathcal{G}(\bar{\beta})|$. Similarly as in Step II, we obtain a lower bound for $l(\bar{\beta})$:
\begin{align*}
    l(\bar{\beta}) &\gtrsim \sum_{j=1}^p  \left\| Z_j (\beta_j^\star - \bar{\beta}_j) \right\|_n^2  +  \lambda_n^2 (\bar{s} - s^\star) - (p + s_n R_n) \log p / n  - s_n R_n^{-2\eta} \\
    &\geq \lambda_n^2 - (p + s_n R_n) \log p / n  - s_n R_n^{-2\eta}
\end{align*}
We call this lower bound $l_3$. By the choice of $\lambda_n^2\gtrsim \delta_{n,p}$, we verify that $l_3 \gtrsim \delta_{n,p} > o(\delta_{n,p}) = l_1$ when $n$ is sufficiently large.
\end{proof}

\vspace{0.2in}
\subsection{Proof of Theorem \ref{thm:recovery} Part 2}
\begin{proof}
Note that the estimated permutations are correct: $\widehat{\Pi}^{\rm early} \subseteq \Pi^\star$ by Theorem \ref{thm:correct-permutation}.
Then the basic inequality for the early stopping solution $\widehat{\theta}^{\rm early}$ can be written as: 
\begin{equation*}
    l(\widehat{\beta}^{\rm early}) \leq l(\widehat{\beta}) + \tau^{\rm early} \leq l\left( \{ \beta_1^\star, \dots, \beta_p^\star  \} \right) + \tau^{\rm early},
\end{equation*}
We use the same proof technique as in Part 1. When the early stopping threshold satisfies $\tau^{\rm early} = o(p\xi_p \wedge \delta_{n,p})  $, the upper bound $l_1$ remains to be $o(\delta_{n,p})$. Moreover, the lower bounds $l_2$ and $l_3$ remain at least on the order of $\delta_{n,p}$. The result hence follows.

\end{proof}

\vspace{0.2in}
\section{More on experiments}
\label{sec:appendix:teaser}

\subsection{Simulation Setup for Figure \ref{fig:intro}}
\label{sec:appendix:setup-intro}
The true model is given in \eqref{ex:teaser}, and each reported result is averaged over 100 independent trials. The \textit{CAM-IncEdge} procedure is implemented using the R package \texttt{CAM} with default settings, except that we set \texttt{pruning = TRUE}. The \textit{NPVAR} procedure is implemented using the publicly available package at \href{https://github.com/MingGao97/NPVAR}{\texttt{https://github.com/MingGao97/NPVAR}} with default parameters. For \textit{MIP (linear)}, we use whichever of procedures \eqref{eq:MIP} or \eqref{eq:MIP-equal} yields the smaller BIC score, with basis functions $\{b_r(X_k)\}_{k=1}^{R_n}$ taken simply as $X_k$ itself. Our proposed nonlinear MIP approach is implemented in the same manner: selecting between \eqref{eq:MIP} and \eqref{eq:MIP-equal} based on the BIC score; however, in order to handle non-linearity, we use degree-three splines with five internal knots for all edges. We set $\lambda_n=0.01$ for both procedures. No super-structure, partial order sets, stable sets, or early stopping is used in any of the MIP-based methods.

\subsection{Simulation setup for Figure \ref{fig:l0l1}}
\label{sec:appendix:setup-l0l1}
We use the graph structure \textit{Insurance Small} with $p = 15$ nodes and $s^\star = 25$ true edges in this experiment. For all $(k,j) \in E^\star$, the true function is set as $f^\star_{kj}(x) = ( \sin(x) - \mathbb{E}[\sin(X_k)] + \cos(x) - \mathbb{E}[\cos(X_k)] )  / 2$. The noise variances $\sigma^\star_j$ alternate between 0.5 and 1 across nodes. For each trial, we generate $n = 500$ independent and identically distributed samples.
We apply procedure \eqref{eq:MIP} in this simulation. To construct basis functions, the same set of degree-two splines with two internal knots is used across all edges. We adopt the true moral graph as the super-structure.
In this experiment, we do not use any partial order sets, stable sets, or early stopping. Each run of the MIP is terminated after 15 minutes.

\subsection{Simulation setup for Figure \ref{fig:suff_stat}}
\label{sec:appendix:setup-suff_stat}
The ``default'' formulation in Figure \ref{fig:suff_stat} refers to the following MIP:
\begin{subequations} \label{eq:MIP-default}
\begin{align} 
    \min_{ \substack{\nu_j \in \mathbb{R},\ \vartheta_{0*j}\in\mathbb{R}, \ \forall j\in\mathrm{V} \\ \vartheta_{rkj} \in \mathbb{R},\ \forall r\in [R_n],\ \forall (k,j) \in \mathcal{E} \\ 
    g_{kj} \in \{0,1\},\ \forall {(k,j)\in \mathcal{E}} \\
    \psi \in [1,p]^p } } \quad & \sum_{j=1}^p  -2\log \nu_j + \sum_{j=1}^p \left\| \nu_j X_j - \vartheta_{0*j} - \sum_{k=1}^p \sum_{r=1}^{R_n}  \vartheta_{rkj} b_r(X_k) \right\|_n^2  + \lambda_n^2 \cdot \sum_{(k,j) \in \mathcal{E}} g_{kj} \label{eq:MIP-default:a}  \\
    \mathrm{s.t.} \quad & -M g_{kj} \leq \vartheta_{rkj} \leq M g_{kj}, \quad  \forall (k,j) \in \mathcal{E}, \ r\in \{1, \dots, R_n\}, \label{eq:MIP-default:b} \\ 
    & -M \sum_{k=1}^p g_{kj} \leq \vartheta_{0*j} \leq M \sum_{k=1}^p g_{kj}, \quad \forall j\in \mathrm{V}, \\
    & 0 < \nu_j \leq M, \quad \forall j \in \mathrm{V}, \label{eq:MIP-default:c} \\  
    & 1 -  p + p g_{kj} \leq \psi _j - \psi _k, \quad \forall (k,j) \in \mathcal{E}.  \label{eq:MIP-default:d}
\end{align}
\end{subequations}
Suppose $\sigma_j$ denotes the standard deviation of the noise term $\epsilon_j$. Then, the decision variable $\nu_j$ corresponds to $1/\sigma_j$, $\vartheta_{rkj}$ represents the scaled coefficient $\beta_{rkj} / \sigma_j$, and $\vartheta_{0*j}$ represents the scaled intercepts $\sum_{k=1}^p \beta_{0kj} / \sigma_j$.

Regardless of the specific formulation, the implementation of the MIP procedure using \texttt{Gurobi} can be broadly decomposed into three computationally intensive components.
First, the objective function must be specified by operating on the decision variables. When this step requires iterating over the entire dataset, it can become a computational bottleneck.
Second, an unconstrained version of the optimization problem is solved to obtain a suitable value for the big-$M$ constant. 
Third, the core optimization process is executed to solve the full MIP problem. The computational cost at this stage is influenced by both the complexity of evaluating the objective function and the structural properties of the optimization problem, such as convexity.

\vspace{0.1in}
We now present the simulation setup, which is very similar to that of Figure \ref{fig:l0l1}, but we provide a full description here for completeness. The true model is specified using the graph structure \textit{Insurance Small} with $p = 15$ nodes and $s^\star = 25$ true edges. For all $(k,j) \in E^\star$, the true function is set as $f^\star_{kj}(x) = ( \sin(x) - \mathbb{E}[\sin(X_k)] + \cos(x) - \mathbb{E}[\cos(X_k)] )  / 2$. The noise variances $\sigma^\star_j$ alternate between 0.5 and 1 across nodes. We apply procedure \eqref{eq:MIP} in this simulation. For graph estimation, the same set of degree-two splines with two internal knots is used across all edges. We adopt the true moral graph as the super-structure. The tuning parameter is set as $\lambda_n^2 = 0.5$. In this experiment, we do not use any partial order sets, stable sets, or early stopping. Each run of the MIP is terminated after 50 minutes.

\subsection{Simulation setup for Figure \ref{fig:bootstrap_thres}}
\label{sec:appendix:setup-thres}
The simulation setup is very similar to that of Figure \ref{fig:suff_stat}, but we provide a full description here for completeness.
We use the graph structure \textit{Insurance Small} with $p = 15$ nodes and $s^\star = 25$ true edges in this experiment. For all $(k,j) \in E^\star$, the true function is set as $f^\star_{kj}(x) = ( \sin(x) - \mathbb{E}[\sin(X_k)] + \cos(x) - \mathbb{E}[\cos(X_k)] )  / 2$. The noise variances $\sigma^\star_j$ alternate between 0.5 and 1 across nodes. For each trial, we generate $n = 500$ independent and identically distributed samples.
We apply procedure \eqref{eq:MIP} in this simulation. 
To construct basis functions, the same set of degree-two splines with two internal knots is used across all edges. We adopt the true moral graph as the super-structure.
The partial order and stable sets are estimated via the bootstrap procedure described in Section~\ref{sec:speed-up:partial-stable}, where the \textit{CAM} algorithm~\citep{buhlmann2014} is applied to each bootstrap sample. We set $\lambda_n^2 = 0.01$ and fix the early stopping threshold at $\tau^{\rm early} = 0$. Each run of \textit{MIP} is terminated after 15 minutes.

\subsection{Simulation setup for Figure \ref{fig:early_stopping}}
\label{sec:appendix:setup-early}

The simulation setup is very similar to that of Figure \ref{fig:bootstrap_thres}, but we provide a full description here for completeness.
We use the graph structure \textit{Insurance Small} with $p = 15$ nodes and $s^\star = 25$ true edges in this experiment. For all $(k,j) \in E^\star$, the true function is set as $f^\star_{kj}(x) = ( \sin(x) - \mathbb{E}[\sin(X_k)] + \cos(x) - \mathbb{E}[\cos(X_k)] )  / 2$. The noise variances $\sigma^\star_j$ alternate between 0.5 and 1 across nodes. For each trial, we generate $n = 500$ independent and identically distributed samples.
We apply procedure \eqref{eq:MIP} in this simulation.
To construct basis functions, the same set of degree-two splines with two internal knots is used across all edges. We adopt the true moral graph as the super-structure.
In this experiment, we do not use any partial order sets or stable sets. We set $\lambda_n^2 = 0.01$. Each run of the MIP is terminated after 15 minutes.

\subsection{More on comparison to existing methods}

\subsubsection{More on Section \ref{sec:experiment:comparison}}
\label{sec:appendix:setup-comparison}

We describe the implementation details of six baseline approaches here:
\begin{itemize}
    \item \textit{NPVAR}: The implementation is available at \href{https://github.com/MingGao97/NPVAR}{\texttt{https://github.com/MingGao97/NPVAR}}. We first used the function \texttt{NPVAR()} with \texttt{layer.select = TRUE} to find the topological ordering of nodes. The \texttt{eta} parameter was chosen such that the each resulting layer contains only one node. We then used the function \texttt{prune()} to obtain the estimated adjacency matrix, where \texttt{cutoff} was chosen based on the Bayesian information criterion score.

    \vspace{0.05in}
    \item \textit{EqVar}: The implementation is available at \href{https://github.com/WY-Chen/EqVarDAG}{\texttt{https://github.com/WY-Chen/EqVarDAG}}. For the top-down and bottom-up versions, we used the functions \texttt{EqVarDAG\_TD()} and \texttt{EqVarDAG\_BU()}, respectively, with the parameter set to \texttt{mtd = "cvlasso"}. The \texttt{cutoff} parameter was chosen based on the Bayesian information criterion score.

    \vspace{0.05in}
    \item \textit{NoTears}: The implementation is available at \href{https://github.com/xunzheng/notears}{\texttt{https://github.com/xunzheng/notears}}. We used neural networks for the non-linearities with a single hidden layer of 5 neurons. We used the function \texttt{notears\_non-linear()} with the training parameters \texttt{lambda1} and \texttt{lambda2} chosen based on the Bayesian information criterion score.

    \vspace{0.05in}
    \item \textit{RESIT}: The implementation is available at \\ \href{http://people.tuebingen.mpg.de/jpeters/onlineCodeANM.zip}{\texttt{http://people.tuebingen.mpg.de/jpeters/onlineCodeANM.zip}}. We used the fucntion \texttt{ICML()} with parameter \texttt{model = train\_gam}, and its paramter \texttt{alpha} was chosen based on the Bayesian information criterion score.

    \vspace{0.05in}
    \item \textit{CCDr}: The implementation is available at the Python Causal Discovery Toolbox (\texttt{cdt}). We used the function \texttt{CCDr()} to fit the model.

    \vspace{0.05in}
    \item \textit{CAM}: The implementation is available at the R library \texttt{CAM} and the Python library \texttt{cdt}. We used the function \texttt{CAM()} in Python interface, with parameters \texttt{score = "non-linear", variablesel = True, pruning = True}. The parameter \texttt{cutoff} was chosen based on the Bayesian information criterion score.
\end{itemize}

\vspace{0.1in}
Tables \ref{tab:compare-b0-1} -- \ref{tab:compare-equal} show complete results for all ten graph structures. We observe that \textit{NPVAR} and \textit{EqVar}, which rely on the homoscedasticity assumption, perform particularly well when the noise variances are equal, but their performance deteriorates significantly under heteroscedastic settings. A similar pattern is observed for \textit{NoTears}. In contrast, the method \textit{CAM} tends to perform substantially better under heteroscedastic conditions, while its performance worsens when the variances are equal. Overall, among the baseline methods, \textit{CAM} yields the best performance in heteroscedastic settings, whereas \textit{NPVAR} performs best in the homoscedastic case. These observations support our choice of base procedures applied to the bootstrap samples in each scenario. Finally, we note that the total running time for \textit{MIP} should account not only for the values reported in each table, but also for the time spent on the bootstrap procedure. If the bootstrap samples are processed in parallel, this additional cost is simply equal to the running time of a single execution of \textit{CAM} (or \textit{NPVAR}). In summary, the \textit{MIP} algorithm achieves the most accurate graph recovery within the allotted optimization time budget of $60p$ seconds.

\begin{figure}[ht]
    \centering
    \includegraphics[width=0.75\textwidth]{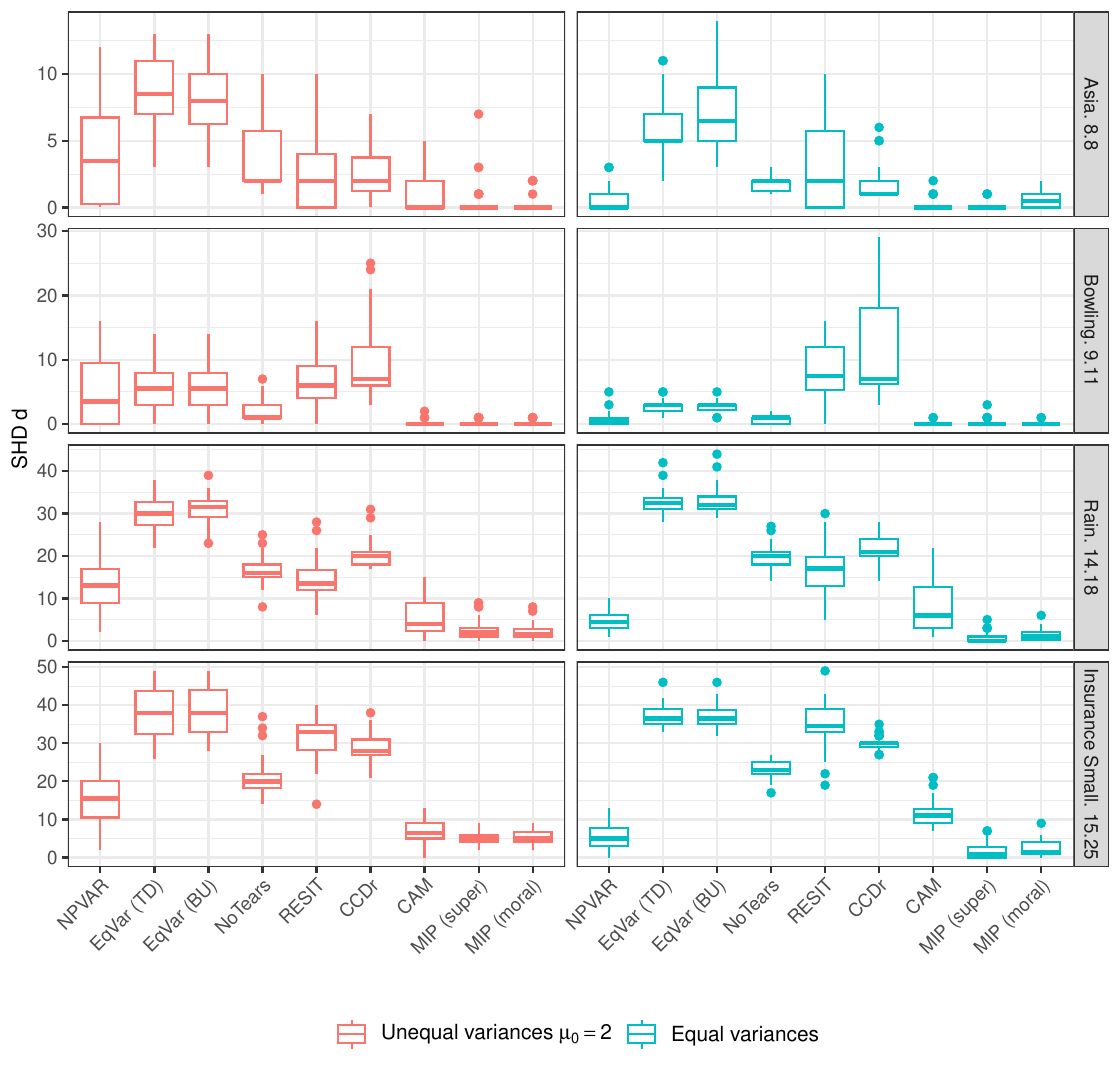}
    \caption{\small The performance comparison between \textit{MIP} and baseline methods. The true functions are $f^\star_{kj}(x) = \sin(x + \sin(x)) - \mathbb{E}[\sin(X_k + \sin(X_k))]$, and radial basis functions are used as basis functions. Each row corresponds to one graph structure, indicated by name.$p$.$s^\star$. The EqVar (TD) and EqVar (BU) refer to the top-down and bottom-up version of \textit{EqVar}, respectively. The MIP (super) and MIP (moral) refer to the \textit{MIP} using the estimated super-structure and the true moral graph, respectively. Different colors correspond to different variance schemes. Each box summarizes results from 30 independent trials.
    }
    \label{fig:compare_boxplot2}
\end{figure}

\subsubsection{Another choice of true and basis functions}
\label{sec:appendix:setup-comparison-more}

We use the same setup as in Section~\ref{sec:experiment:comparison}, except for the choice of the true functions and basis functions. Specifically, for all $(k,j) \in E^\star$, we set the true function as $f^\star_{kj}(x) = \sin(x + \sin(x)) - \mathbb{E}[\sin(X_k + \sin(X_k))]$. Across all edges, we use the same set of radial basis functions defined as
$b_r(x; k_r) = \exp \left( - (x - k_r)^2 \right)$,
where $k_r$ is the $r$-th knot, $r = 1, \dots, R_n$. These $R_n$ knots are placed at evenly spaced percentiles.
We note that for one-dimensional features, radial basis functions typically have weaker approximation power compared to spline basis functions when using a small number of basis terms. To achieve sufficient approximation accuracy in this experiment, we set $R_n = 30$, which, however, imposes a heavier computational burden.
Figure~\ref{fig:compare_boxplot2} compares the graph recovery performance of \textit{MIP} with baseline methods. The comparison focuses on graphs with up to $p = 15$ nodes, under both the homoscedastic and the heteroscedastic $\mu_0=2$ settings. 
Our proposed algorithm \textit{MIP} achieves the best overall performance across all setups. Complete results are presented in Tables~\ref{tab:compare-another-equal} and \ref{tab:compare-another-b0-3}.

\vspace{0.2in}
\begin{table}[ht]
\centering
{\footnotesize
\begin{tabular}{l|cc|cc|cc}
\hline
\multicolumn{1}{c|}{} & \multicolumn{2}{c|}{NPVAR}   & \multicolumn{2}{c|}{EqVar (TD)}  & \multicolumn{2}{c}{EqVar (BU)}  \\ \hline
Network.$p$.$s^\star$ & $d$          & Time          & $d$             & Time           & $d$            & Time           \\
Dsep.6.6              & 5.9 (2.5)    & 0.3 (0.0)     & 5.3 (2.6)       & 0.2 (0.0)      & 5.4 (2.7)      & 0.2 (0.0)      \\
Asia.8.8              & 9.6 (3.6)    & 0.6 (0.0)     & 7.9 (3.0)       & 0.3 (0.0)      & 8.0 (3.0)      & 0.3 (0.0)      \\
Bowling.9.11          & 10.7 (4.1)   & 0.8 (0.1)     & 8.0 (2.7)       & 0.4 (0.0)      & 7.9 (2.8)      & 0.4 (0.0)      \\
InsuranceSmall.15.25  & 27.1 (7.6)   & 4.1 (0.2)     & 27.5 (6.7)      & 1.0 (0.1)      & 27.7 (6.9)     & 0.9 (0.1)      \\
Rain.14.18            & 21.5 (7.1)   & 3.5 (0.2)     & 18.4 (6.6)      & 0.9 (0.1)      & 18.4 (6.4)     & 0.8 (0.1)      \\
Cloud.16.19           & 24.2 (6.0)   & 5.1 (0.2)     & 19.9 (5.0)      & 1.0 (0.1)      & 20.2 (5.0)     & 1.0 (0.1)      \\
Funnel.18.18          & 24.9 (9.9)   & 5.8 (0.2)     & 17.1 (5.9)      & 0.9 (0.1)      & 17.5 (6.0)     & 0.9 (0.0)      \\
Galaxy.20.22          & 29.1 (10.0)  & 7.8 (0.2)     & 20.6 (5.6)      & 1.0 (0.0)      & 21.2 (5.2)     & 1.0 (0.1)      \\
Insurance.27.52       & 67.9 (15.1)  & 19.6 (0.5)    & 73.2 (14.0)     & 1.4 (0.1)      & 72.7 (14.2)    & 1.4 (0.1)      \\
Factors.27.68         & 54.3 (12.2)  & 19.6 (0.5)    & 82.8 (21.2)     & 1.4 (0.0)      & 83.3 (20.6)    & 1.4 (0.1)      \\ \hline\hline
                      & \multicolumn{2}{c|}{NoTears} & \multicolumn{2}{c|}{RESIT}       & \multicolumn{2}{c}{CCDr}        \\ \hline
Network.$p$.$s^\star$ & $d$          & Time          & $d$             & Time           & $d$            & Time           \\
Dsep.6.6              & 4.9 (3.1)    & 3.5 (1.3)     & 3.1 (2.5)       & 3.4 (0.2)      & 3.4 (1.4)      & 3.5 (0.6)      \\
Asia.8.8              & 7.5 (3.2)    & 5.1 (3.2)     & 6.1 (3.4)       & 7.1 (0.4)      & 2.1 (2.7)      & 4.5 (0.6)      \\
Bowling.9.11          & 6.0 (3.6)    & 5.5 (1.8)     & 7.7 (4.7)       & 10.0 (0.3)     & 8.0 (3.7)      & 4.5 (0.8)      \\
InsuranceSmall.15.25  & 18.1 (6.2)   & 16.7 (7.4)    & 24.1 (6.2)      & 67.3 (1.4)     & 18.1 (4.0)     & 4.3 (1.0)      \\
Rain.14.18            & 13.8 (5.5)   & 13.2 (4.8)    & 18.0 (7.9)      & 51.7 (1.4)     & 8.5 (3.5)      & 4.6 (2.0)      \\
Cloud.16.19           & 17.2 (4.9)   & 15.0 (6.5)    & 20.8 (6.0)      & 86.2 (4.4)     & 5.5 (3.3)      & 3.9 (0.7)      \\
Funnel.18.18          & 12.7 (5.6)   & 11.1 (3.3)    & 17.6 (6.1)      & 110.3 (2.1)    & 4.9 (3.0)      & 4.0 (0.6)      \\
Galaxy.20.22          & 18.8 (7.1)   & 19.8 (5.3)    & 23.6 (7.5)      & 169.0 (4.7)    & 6.4 (2.6)      & 3.9 (0.6)      \\
Insurance.27.52       & 42.1 (12.6)  & 61.1 (16.1)   & 103.4 (10.2)    & 497.3 (9.8)    & 34.4 (7.1)     & 4.1 (0.6)      \\
Factors.27.68         & 46.8 (14.1)  & 140.9 (51.7)  & 119.4 (9.3)     & 489.2 (11.1)   & 72.1 (7.7)     & 4.3 (0.6)      \\ \hline\hline
                      & \multicolumn{2}{c|}{CAM}     & \multicolumn{2}{c|}{MIP (super)} & \multicolumn{2}{c}{MIP (moral)} \\ \hline
Network.$p$.$s^\star$ & $d$          & Time          & $d$             & Time           & $d$            & Time           \\
Dsep.6.6              & 2.8 (1.9)    & 10.8 (0.7)    & 1.2 (1.6)       & 4.9 (0.7)      & 1.1 (1.6)      & 4.7 (0.5)      \\
Asia.8.8              & 0.3 (0.7)    & 16.1 (1.3)    & 0.5 (0.9)       & 5.7 (1.2)      & 0.3 (0.8)      & 4.5 (0.8)      \\
Bowling.9.11          & 2.9 (3.5)    & 16.5 (1.8)    & 1.1 (1.5)       & 15.0 (19.4)    & 0.9 (1.3)      & 5.3 (1.1)      \\
InsuranceSmall.15.25  & 11.3 (5.1)   & 46.3 (4.4)    & 4.0 (2.8)       & 681.8 (326.0)  & 3.2 (3.0)      & 31.2 (40.3)    \\
Rain.14.18            & 6.1 (6.0)    & 45.8 (3.7)    & 1.9 (3.1)       & 297.7 (324.0)  & 0.8 (2.6)      & 11.6 (7.9)     \\
Cloud.16.19           & 3.0 (2.0)    & 57.0 (5.8)    & 2.4 (2.1)       & 133.6 (246.1)  & 0.9 (1.5)      & 6.8 (0.9)      \\
Funnel.18.18          & 0.2 (0.6)    & 51.2 (4.8)    & 1.4 (1.1)       & 511.5 (452.7)  & 0.5 (0.8)      & 11.5 (6.9)     \\
Galaxy.20.22          & 1.1 (1.6)    & 67.4 (5.4)    & 2.0 (1.8)       & 594.5 (500.6)  & 0.7 (1.3)      & 8.0 (0.9)      \\
Insurance.27.52       & 12.8 (6.6)   & 92.2 (6.9)    & 5.9 (4.3)       & 1631.8 (1.4)   & 3.1 (2.7)      & 718.0 (593.5)  \\
Factors.27.68         & 55.4 (11.4)  & 112.9 (6.4)   & 25.8 (10.7)     & 1633.6 (1.9)   & 11.2 (9.3)     & 1627.8 (0.5)   \\ \hline
\end{tabular}
}
\caption{\small The full performance of \textit{MIP} and baseline methods for the heteroscedastic scheme with $\mu_0 = 1$. The true functions are $f^\star_{kj}(x) = ( \sin(x) - \mathbb{E}[\sin(X_k)] + \cos(x) - \mathbb{E}[\cos(X_k)] )  / 2$, and splines are used as basis functions. Each entry gives the mean value and standard deviation in parenthesis over 30 independent trials. The EqVar (TD) and EqVar (BU) refer to the top-down and bottom-up version of \textit{EqVar}, respectively. The MIP (super) and MIP (moral) refer to the \textit{MIP} using the estimated super-structure and the true moral graph, respectively. }
\label{tab:compare-b0-1}
\end{table}

\begin{table}[ht]
\centering
{\footnotesize
\begin{tabular}{l|cc|cc|cc}
\hline
\multicolumn{1}{c|}{} & \multicolumn{2}{c|}{NPVAR}   & \multicolumn{2}{c|}{EqVar (TD)}  & \multicolumn{2}{c}{EqVar (BU)}  \\ \hline
Network.$p$.$s^\star$ & $d$          & Time          & $d$             & Time           & $d$            & Time           \\
Dsep.6.6              & 6.7 (3.5)    & 0.3 (0.0)     & 6.0 (3.3)       & 0.2 (0.0)      & 6.2 (3.5)      & 0.2 (0.0)      \\
Asia.8.8              & 8.4 (3.8)    & 0.6 (0.0)     & 6.7 (3.3)       & 0.3 (0.0)      & 6.8 (3.2)      & 0.3 (0.0)      \\
Bowling.9.11          & 11.0 (4.3)   & 0.8 (0.1)     & 7.7 (4.0)       & 0.4 (0.0)      & 8.5 (4.6)      & 0.3 (0.0)      \\
InsuranceSmall.15.25  & 27.7 (8.0)   & 3.4 (0.1)     & 26.5 (8.0)      & 0.7 (0.0)      & 27.4 (7.6)     & 0.7 (0.0)      \\
Rain.14.18            & 22.3 (6.3)   & 2.8 (0.2)     & 18.3 (6.1)      & 0.7 (0.1)      & 18.0 (5.5)     & 0.6 (0.1)      \\
Cloud.16.19           & 26.0 (8.4)   & 4.2 (0.2)     & 22.1 (5.3)      & 0.7 (0.0)      & 21.9 (5.8)     & 0.8 (0.1)      \\
Funnel.18.18          & 23.3 (7.8)   & 6.2 (0.5)     & 15.3 (5.0)      & 0.9 (0.1)      & 15.8 (5.9)     & 0.9 (0.1)      \\
Galaxy.20.22          & 30.0 (7.5)   & 7.8 (0.2)     & 20.1 (5.1)      & 1.0 (0.1)      & 20.9 (4.7)     & 1.0 (0.1)      \\
Insurance.27.52       & 63.9 (10.1)  & 18.8 (0.8)    & 69.8 (10.2)     & 1.3 (0.1)      & 70.2 (11.5)    & 1.4 (0.1)      \\
Factors.27.68         & 54.8 (10.7)  & 21.5 (0.8)    & 81.7 (22.9)     & 1.5 (0.1)      & 82.4 (22.3)    & 1.6 (0.1)      \\ \hline\hline
                      & \multicolumn{2}{c|}{NoTears} & \multicolumn{2}{c|}{RESIT}       & \multicolumn{2}{c}{CCDr}        \\ \hline
Network.$p$.$s^\star$ & $d$          & Time          & $d$             & Time           & $d$            & Time           \\
Dsep.6.6              & 5.5 (4.0)    & 4.3 (2.1)     & 3.4 (2.9)       & 3.4 (0.2)      & 3.7 (2.1)      & 4.1 (1.6)      \\
Asia.8.8              & 5.9 (3.7)    & 5.5 (3.2)     & 6.1 (4.1)       & 7.3 (0.4)      & 2.6 (4.1)      & 4.1 (0.8)      \\
Bowling.9.11          & 5.9 (4.1)    & 6.4 ( 8.0)    & 10.0 (4.8)      & 9.0 (0.3)      & 10.6 (3.9)     & 3.6 (0.5)      \\
InsuranceSmall.15.25  & 20.5 (6.8)   & 18.9 (6.8)    & 29.0 (7.4)      & 51.7 (2.0)     & 18.4 (3.9)     & 3.7 (0.5)      \\
Rain.14.18            & 14.1 (5.7)   & 12.6 (5.6)    & 19.1 (6.0)      & 39.0 (2.7)     & 9.5 (3.3)      & 3.5 (0.6)      \\
Cloud.16.19           & 19.5 (4.5)   & 15.5 (7.4)    & 21.2 (7.5)      & 69.0 (1.5)     & 4.3 (3.1)      & 3.9 (0.6)      \\
Funnel.18.18          & 11.5 (5.0)   & 10.0 (3.8)    & 21.3 (8.7)      & 114.7 (9.0)    & 4.7 (2.7)      & 5.0 (2.6)      \\
Galaxy.20.22          & 17.0 (5.8)   & 14.9 (6.0)    & 25.3 (7.6)      & 170.4 (4.7)    & 5.4 (3.1)      & 4.1 (0.5)      \\
Insurance.27.52       & 38.1 (9.7)   & 72.1 (29.7)   & 110.6 (15.4)    & 470.7 (29.4)   & 36.6 (5.0)     & 4.1 (0.5)      \\
Factors.27.68         & 46.5 (12.9)  & 196.3 (100.8) & 116.0 (11.0)    & 535.1 (20.7)   & 71.3 (6.4)     & 4.6 (1.0)      \\ \hline\hline
                      & \multicolumn{2}{c|}{CAM}     & \multicolumn{2}{c|}{MIP (super)} & \multicolumn{2}{c}{MIP (moral)} \\ \hline
Network.$p$.$s^\star$ & $d$          & Time          & $d$             & Time           & $d$            & Time           \\
Dsep.6.6              & 3.0 (2.1)    & 11.5 (1.4)    & 0.8 (1.6)       & 5.2 (1.0)      & 0.7 (1.6)      & 4.8 (0.6)      \\
Asia.8.8              & 0.0 (0.0)    & 15.5 (1.4)    & 0.6 (0.8)       & 5.2 (0.9)      & 0.3 (0.6)      & 4.1 (0.3)      \\
Bowling.9.11          & 5.7 (5.2)    & 15.4 (1.5)    & 1.1 (2.6)       & 47.3 (91.3)    & 1.0 (2.9)      & 6.4 (2.1)      \\
InsuranceSmall.15.25  & 14.3 (4.1)   & 41.6 (4.7)    & 6.4 (4.7)       & 861.6 (160.5)  & 4.2 (4.1)      & 46.2 (37.6)    \\
Rain.14.18            & 6.3 (6.4)    & 40.4 (4.4)    & 1.7 (1.8)       & 460.1 (344.3)  & 0.9 (1.5)      & 17.0 (11.1)    \\
Cloud.16.19           & 3.7 (2.3)    & 53.0 (4.9)    & 2.6 (2.0)       & 196.5 (319.6)  & 1.2 (1.5)      & 7.1 (1.6)      \\
Funnel.18.18          & 0.2 (0.7)    & 50.7 (6.0)    & 1.4 (1.3)       & 609.6 (488.7)  & 0.5 (1.0)      & 12.9 (7.3)     \\
Galaxy.20.22          & 1.6 (2.1)    & 66.9 (4.3)    & 2.4 (1.8)       & 683.4 (542.9)  & 0.7 (1.0)      & 9.1 (1.7)      \\
Insurance.27.52       & 17.4 (5.6)   & 94.4 (6.5)    & 7.6 (4.9)       & 1631.6 (1.1)   & 4.1 (3.7)      & 1173.0 (573.2) \\
Factors.27.68         & 63.4 (11.1)  & 117.0 (12.8)  & 30.4 (10.7)     & 1633.2 (2.0)   & 16.2 (7.7)     & 1627.6 (0.7)   \\ \hline
\end{tabular}
}
\caption{\small The full performance of \textit{MIP} and baseline methods for the heteroscedastic scheme with $\mu_0 = 2$. The true functions are $f^\star_{kj}(x) = ( \sin(x) - \mathbb{E}[\sin(X_k)] + \cos(x) - \mathbb{E}[\cos(X_k)] )  / 2$, and splines are used as basis functions. Each entry gives the mean value and standard deviation in parenthesis over 30 independent trials. The EqVar (TD) and EqVar (BU) refer to the top-down and bottom-up version of \textit{EqVar}, respectively. The MIP (super) and MIP (moral) refer to the \textit{MIP} using the estimated super-structure and the true moral graph, respectively. }
\label{tab:compare-b0-2}
\end{table}

\begin{table}[ht]
\centering
{\footnotesize
\begin{tabular}{l|cc|cc|cc}
\hline
\multicolumn{1}{c|}{} & \multicolumn{2}{c|}{NPVAR}   & \multicolumn{2}{c|}{EqVar (TD)}  & \multicolumn{2}{c}{EqVar (BU)}  \\ \hline
Network.$p$.$s^\star$ & $d$          & Time          & $d$             & Time           & $d$            & Time           \\
Dsep.6.6              & 4.8 (3.4)    & 0.3 (0.0)     & 4.0 (3.1)       & 0.2 (0.0)      & 3.9 (3.1)      & 0.2 (0.0)      \\
Asia.8.8              & 8.3 (3.7)    & 0.6 (0.0)     & 6.8 (2.9)       & 0.3 (0.0)      & 6.8 (3.2)      & 0.3 (0.0)      \\
Bowling.9.11          & 11.3 (4.6)   & 0.8 (0.1)     & 7.8 (3.0)       & 0.4 (0.0)      & 8.0 (2.8)      & 0.4 (0.0)      \\
InsuranceSmall.15.25  & 24.6 (6.9)   & 4.1 (0.1)     & 24.2 (5.8)      & 0.9 (0.1)      & 24.2 (5.3)     & 0.9 (0.1)      \\
Rain.14.18            & 21.1 (8.5)   & 3.5 (0.1)     & 18.2 (6.2)      & 0.9 (0.1)      & 18.3 (6.3)     & 0.8 (0.1)      \\
Cloud.16.19           & 21.7 (7.8)   & 4.5 (0.2)     & 20.6 (6.3)      & 0.8 (0.1)      & 21.0 (6.6)     & 0.9 (0.1)      \\
Funnel.18.18          & 25.0 (5.8)   & 5.8 (0.2)     & 17.3 (5.5)      & 0.9 (0.1)      & 17.0 (5.8)     & 0.9 (0.1)      \\
Galaxy.20.22          & 33.1 (7.9)   & 7.8 (0.3)     & 21.6 (5.0)      & 1.0 (0.1)      & 22.3 (5.3)     & 1.0 (0.1)      \\
Insurance.27.52       & 58.0 (10.4)  & 19.7 (1.1)    & 66.8 (10.0)     & 1.4 (0.1)      & 68.0 (8.7)     & 1.4 (0.1)      \\
Factors.27.68         & 46.1 (8.4)   & 19.6 (0.5)    & 67.5 (12.6)     & 1.4 (0.1)      & 67.0 (13.3)    & 1.4 (0.1)      \\ \hline\hline
                      & \multicolumn{2}{c|}{NoTears} & \multicolumn{2}{c|}{RESIT}       & \multicolumn{2}{c}{CCDr}        \\ \hline
Network.$p$.$s^\star$ & $d$          & Time          & $d$             & Time           & $d$            & Time           \\
Dsep.6.6              & 4.2 (2.9)    & 4.2 (3.1)     & 5.0 (3.0)       & 3.4 (0.3)      & 3.6 (1.5)      & 3.9 (0.9)      \\
Asia.8.8              & 5.6 (3.4)    & 5.9 (2.5)     & 7.0 (2.8)       & 7.2 (0.4)      & 2.2 (2.7)      & 3.8 (0.7)      \\
Bowling.9.11          & 5.0 (4.3)    & 6.1 (3.0)     & 9.3 (4.8)       & 9.8 (0.4)      & 10.9 (4.0)     & 4.0 (0.7)      \\
InsuranceSmall.15.25  & 16.3 (4.6)   & 20.2 (6.5)    & 28.5 (7.7)      & 66.3 (1.2)     & 18.2 (3.6)     & 3.9 (0.5)      \\
Rain.14.18            & 12.6 (5.8)   & 11.8 (3.9)    & 22.3 (8.4)      & 50.4 (1.1)     & 10.1 (2.5)     & 3.6 (0.4)      \\
Cloud.16.19           & 16.0 (5.5)   & 13.5 (4.4)    & 21.4 (7.7)      & 73.4 (3.2)     & 4.1 (2.5)      & 7.3 (1.2)      \\
Funnel.18.18          & 11.2 (5.3)   & 9.6 (4.1)     & 20.7 (7.6)      & 109.7 (2.7)    & 4.0 (2.4)      & 3.9 (0.5)      \\
Galaxy.20.22          & 18.4 (7.4)   & 19.3 (9.0)    & 28.4 (8.7)      & 167.8 (3.7)    & 7.5 (4.0)      & 4.5 (0.9)      \\
Insurance.27.52       & 33.7 (7.7)   & 91.7 (49.2)   & 113.7 (15.8)    & 511.1 (24.8)   & 32.7 (7.3)     & 8.8 (2.1)      \\
Factors.27.68         & 33.1 (8.2)   & 182.0 (79.3)  & 118.5 (10.2)    & 487.5 (6.7)    & 69.7 (5.5)     & 4.0 (0.6)      \\ \hline\hline
                      & \multicolumn{2}{c|}{CAM}     & \multicolumn{2}{c|}{MIP (super)} & \multicolumn{2}{c}{MIP (moral)} \\ \hline
Network.$p$.$s^\star$ & $d$          & Time          & $d$             & Time           & $d$            & Time           \\
Dsep.6.6              & 2.9 (1.9)    & 11.0 (0.8)    & 1.4 (2.0)       & 4.4 (0.7)      & 1.5 (2.2)      & 4.1 (0.5)      \\
Asia.8.8              & 0.1 (0.4)    & 14.8 (0.9)    & 0.4 (0.8)       & 5.2 (0.9)      & 0.2 (0.6)      & 4.1 (0.4)      \\
Bowling.9.11          & 7.4 (3.9)    & 17.2 (1.7)    & 1.5 (2.8)       & 46.6 (80.4)    & 1.1 (2.8)      & 7.3 (4.4)      \\
InsuranceSmall.15.25  & 15.8 (5.2)   & 48.7 (5.1)    & 6.6 (5.9)       & 858.9 (176.1)  & 5.3 (5.7)      & 69.5 (48.7)    \\
Rain.14.18            & 8.0 (6.2)    & 45.6 (4.7)    & 1.8 (3.4)       & 569.0 (356.0)  & 1.3 (3.4)      & 22.4 (17.5)    \\
Cloud.16.19           & 4.4 (1.7)    & 55.3 (4.7)    & 2.4 (2.0)       & 172.1 (279.7)  & 1.1 (1.5)      & 8.0 (1.5)      \\
Funnel.18.18          & 0.2 (0.6)    & 51.2 (5.5)    & 1.9 (1.4)       & 675.3 (447.5)  & 0.8 (1.1)      & 10.9 (1.5)     \\
Galaxy.20.22          & 3.2 (3.0)    & 68.9 (4.5)    & 1.6 (1.8)       & 696.1 (557.5)  & 0.4 (0.9)      & 11.3 (7.9)     \\
Insurance.27.52       & 20.9 (8.6)   & 97.0 (7.6)    & 10.2 (6.4)      & 1632.4 (1.4)   & 6.9 (4.8)      & 1382.3 (456.5) \\
Factors.27.68         & 59.8 (7.1)   & 115.2 (6.5)   & 34.1 (9.1)      & 1633.2 (2.0)   & 16.4 (8.0)     & 1627.8 (0.8)   \\ \hline
\end{tabular}
}
\caption{\small The full performance of \textit{MIP} and baseline methods for the heteroscedastic scheme with $\mu_0 = 3$. The true functions are $f^\star_{kj}(x) = ( \sin(x) - \mathbb{E}[\sin(X_k)] + \cos(x) - \mathbb{E}[\cos(X_k)] ) / 2$, and splines are used as basis functions. Each entry gives the mean value and standard deviation in parenthesis over 30 independent trials. The EqVar (TD) and EqVar (BU) refer to the top-down and bottom-up version of \textit{EqVar}, respectively. The MIP (super) and MIP (moral) refer to the \textit{MIP} using the estimated super-structure and the true moral graph, respectively. }
\label{tab:compare-b0-3}
\end{table}

\begin{table}[ht]
\centering
{\footnotesize
\begin{tabular}{l|cc|cc|cc}
\hline
\multicolumn{1}{c|}{} & \multicolumn{2}{c|}{NPVAR}   & \multicolumn{2}{c|}{EqVar (TD)}  & \multicolumn{2}{c}{EqVar (BU)}  \\ \hline
Network.$p$.$s^\star$ & $d$          & Time          & $d$             & Time           & $d$            & Time           \\
Dsep.6.6              & 0.7 (1.3)    & 0.3 (0.0)     & 0.7 (1.3)       & 0.3 (0.1)      & 0.7 (1.4)      & 0.3 (0.1)      \\
Asia.8.8              & 0.6 (1.1)    & 0.7 (0.0)     & 0.5 (1.0)       & 0.4 (0.0)      & 0.5 (0.7)      & 0.4 (0.0)      \\
Bowling.9.11          & 0.8 (1.7)    & 1.0 (0.1)     & 0.4 (0.8)       & 0.5 (0.1)      & 0.7 (1.4)      & 0.5 (0.1)      \\
InsuranceSmall.15.25  & 2.5 (3.4)    & 3.7 (0.2)     & 9.4 (2.9)       & 0.8 (0.0)      & 9.8 (2.8)      & 0.8 (0.1)      \\
Rain.14.18            & 4.0 ( 4.5)   & 3.0 (0.2)     & 3.2 (2.8)       & 0.7 (0.0)      & 3.9 (2.5)      & 0.7 (0.1)      \\
Cloud.16.19           & 2.2 (3.0)    & 4.4 (0.3)     & 3.7 (2.4)       & 0.8 (0.1)      & 5.5 (4.5)      & 0.9 (0.1)      \\
Funnel.18.18          & 2.9 (5.4)    & 7.0 (0.3)     & 2.1 (1.3)       & 1.1 (0.1)      & 2.2 (1.4)      & 1.1 (0.1)      \\
Galaxy.20.22          & 3.0 (4.3)    & 7.9 (0.6)     & 4.2 (4.3)       & 1.0 (0.1)      & 4.5 (3.8)      & 1.0 (0.1)      \\
Insurance.27.52       & 16.5 (9.4)   & 21.2 (1.6)    & 36.0 (5.9)      & 1.6 (0.1)      & 37.3 (5.9)     & 1.6 (0.2)      \\
Factors.27.68         & 22.9 (5.2)   & 21.0 (1.0)    & 55.0 (6.1)      & 1.6 (0.1)      & 53.7 (6.9)     & 1.6 (0.2)      \\ \hline\hline
                      & \multicolumn{2}{c|}{NoTears} & \multicolumn{2}{c|}{RESIT}       & \multicolumn{2}{c}{CCDr}        \\ \hline
Network.$p$.$s^\star$ & $d$          & Time          & $d$             & Time           & $d$            & Time           \\
Dsep.6.6              & 2.0 (2.0)    & 3.8 (2.1)     & 5.5 (3.6)       & 4.0 (0.3)      & 3.7 (1.1)      & 4.7 (1.2)      \\
Asia.8.8              & 1.0 (1.8)    & 4.7 (1.9)     & 8.4 (3.4)       & 8.7 (0.3)      & 2.6 (2.8)      & 3.4 (0.5)      \\
Bowling.9.11          & 0.0 (0.0)    & 5.0 (1.7)     & 12.3 (5.5)      & 11.7 (0.6)     & 11.5 (4.2)     & 4.7 (1.0)      \\
InsuranceSmall.15.25  & 11.2 (3.8)   & 34.2 (17.5)   & 36.3 (7.7)      & 56.9 (2.8)     & 19.1 (3.5)     & 3.7 (0.6)      \\
Rain.14.18            & 5.8 (3.5)    & 19.0 (11.4)   & 25.1 (10.1)     & 44.1 (2.6)     & 9.3 (1.8)      & 5.8 (2.5)      \\
Cloud.16.19           & 5.8 (3.7)    & 16.3 (8.4)    & 26.3 (7.9)      & 75.1 (1.7)     & 3.9 (1.9)      & 4.1 (0.6)      \\
Funnel.18.18          & 1.7 (1.7)    & 11.0 (5.1)    & 25.2 (7.4)      & 134.7 (4.8)    & 3.1 (2.3)      & 3.7 (0.5)      \\
Galaxy.20.22          & 5.0 (3.8)    & 22.0 (9.5)    & 31.3 (6.2)      & 170.0 (14.5)   & 8.3 (3.9)      & 5.1 (0.8)      \\
Insurance.27.52       & 19.3 (6.7)   & 107.8 (51.1)  & 120.5 (14.7)    & 533.8 (15.2)   & 32.1 (5.0)     & 4.4 (0.6)      \\
Factors.27.68         & 17.1 (9.9)   & 182.4 (41.0)  & 118.7 (9.9)     & 522.6 (23.8)   & 63.2 (3.4)     & 5.6 (1.3)      \\ \hline\hline
                      & \multicolumn{2}{c|}{CAM}     & \multicolumn{2}{c|}{MIP (super)} & \multicolumn{2}{c}{MIP (moral)} \\ \hline
Network.$p$.$s^\star$ & $d$          & Time          & $d$             & Time           & $d$            & Time           \\
Dsep.6.6              & 4.9 (0.6)    & 11.4 (0.8)    & 0.0 (0.0)       & 0.9 (0.6)      & 0.0 (0.0)      & 0.4 (0.5)      \\
Asia.8.8              & 0.3 (0.8)    & 15.9 (0.9)    & 0.0 (0.0)       & 1.1 (0.7)      & 0.1 (0.3)      & 0.7 (0.6)      \\
Bowling.9.11          & 12.1 (6.6)   & 18.7 (1.3)    & 0.0 (0.2)       & 2.3 (1.5)      & 0.1 (0.3)      & 1.6 (0.5)      \\
InsuranceSmall.15.25  & 23.6 (4.4)   & 47.6 (3.6)    & 0.2 (0.5)       & 180.3 (280.8)  & 0.2 (0.5)      & 4.2 (3.2)      \\
Rain.14.18            & 16.3 (4.4)   & 41.6 (2.1)    & 0.0 (0.2)       & 40.5 (114.3)   & 0.1 (0.3)      & 3.7 (0.7)      \\
Cloud.16.19           & 4.2 (1.6)    & 53.7 (2.7)    & 0.3 (0.8)       & 58.4 (111.8)   & 0.2 (0.7)      & 3.0 (0.8)      \\
Funnel.18.18          & 0.4 (1.1)    & 56.0 (4.2)    & 0.1 (0.4)       & 57.2 (195.8)   & 0.2 (0.6)      & 3.3 (1.2)      \\
Galaxy.20.22          & 9.9 (6.0)    & 76.3 (5.0)    & 0.0 (0.2)       & 271.2 (325.2)  & 0.1 (0.3)      & 4.7 (2.0)      \\
Insurance.27.52       & 33.9 (6.4)   & 103.8 (9.2)   & 1.0 (1.0)       & 1625.4 (1.5)   & 0.3 (0.5)      & 606.9 (598.2)  \\
Factors.27.68         & 61.3 (5.8)   & 131.0 (7.2)   & 2.8 (2.3)       & 1626.1 (0.8)   & 0.7 (0.9)      & 1130.9 (633.6) \\ \hline
\end{tabular}
}
\caption{\small The full performance of \textit{MIP} and baseline methods for the homoscedastic scheme. The true functions are $f^\star_{kj}(x) = ( \sin(x) - \mathbb{E}[\sin(X_k)] + \cos(x) - \mathbb{E}[\cos(X_k)] )  / 2$, and splines are used as basis functions. Each entry gives the mean value and standard deviation in parenthesis over 30 independent trials. The EqVar (TD) and EqVar (BU) refer to the top-down and bottom-up version of \textit{EqVar}, respectively. The MIP (super) and MIP (moral) refer to the \textit{MIP} using the estimated super-structure and the true moral graph, respectively. }
\label{tab:compare-equal}
\end{table}

\begin{table}[ht]
\centering
{\footnotesize
\begin{tabular}{l|cc|cc|cc}
\hline
\multicolumn{1}{c|}{} & \multicolumn{2}{c|}{NPVAR}   & \multicolumn{2}{c|}{EqVar (TD)}  & \multicolumn{2}{c}{EqVar (BU)}  \\ \hline
Network.$p$.$s^\star$ & $d$          & Time          & $d$             & Time           & $d$            & Time           \\
Dsep.6.6              & 0.6 (0.8)    & 0.3 (0.0)     & 4.3 (1.1)       & 0.2 (0.0)      & 4.3 (1.1)      & 0.2 (0.0)      \\
Asia.8.8              & 0.7 (0.9)    & 0.6 (0.0)     & 6.2 (2.5)       & 0.3 (0.0)      & 7.0 (2.9)      & 0.3 (0.0)      \\
Bowling.9.11          & 0.8 (1.1)    & 0.8 (0.0)     & 2.9 (1.0)       & 0.4 (0.0)      & 2.9 (0.9)      & 0.4 (0.0)      \\
InsuranceSmall.15.25  & 5.2 (3.1)    & 3.3 (0.1)     & 37.2 (2.9)      & 0.7 (0.1)      & 37.0 (3.0)     & 0.7 (0.0)      \\
Rain.14.18            & 4.7 (2.3)    & 2.8 (0.1)     & 32.9 (2.9)      & 0.6 (0.1)      & 33.2 (3.4)     & 0.6 (0.0)      \\ \hline\hline
                      & \multicolumn{2}{c|}{NoTears} & \multicolumn{2}{c|}{RESIT}       & \multicolumn{2}{c}{CCDr}        \\ \hline
Network.$p$.$s^\star$ & $d$          & Time          & $d$             & Time           & $d$            & Time           \\
Dsep.6.6              & 1.3 (0.6)    & 4.5 (1.8)     & 1.9 (2.4)       & 3.5 (0.2)      & 4.4 (0.7)      & 3.6 (0.5)      \\
Asia.8.8              & 1.8 (0.6)    & 7.2 (5.2)     & 3.4 (3.4)       & 7.3 (0.3)      & 1.7 (1.4)      & 4.2 (0.5)      \\
Bowling.9.11          & 0.8 (0.7)    & 11.7 (5.4)    & 8.2 (4.5)       & 10.2 (0.4)     & 12.0 (7.9)     & 6.5 (3.3)      \\
InsuranceSmall.15.25  & 23.0 (2.4)   & 33.9 (16.8)   & 34.6 (6.3)      & 56.9 (1.3)     & 29.9 (1.9)     & 3.6 (0.4)      \\
Rain.14.18            & 19.9 (2.8)   & 51.7 (46.9)   & 16.8 (6.1)      & 43.7 (1.9)     & 21.4 (3.3)     & 3.6 (0.5)      \\ \hline\hline
                      & \multicolumn{2}{c|}{CAM}     & \multicolumn{2}{c|}{MIP (super)} & \multicolumn{2}{c}{MIP (moral)} \\ \hline
Network.$p$.$s^\star$ & $d$          & Time          & $d$             & Time           & $d$            & Time           \\
Dsep.6.6              & 4.7 (2.2)    & 7.9 (0.4)     & 0.1 (0.4)       & 1.5 (1.2)      & 0.1 (0.3)      & 0.8 (0.8)      \\
Asia.8.8              & 0.2 (0.5)    & 9.7 (0.5)     & 0.1 (0.3)       & 35.8 (123.5)   & 0.7 (0.8)      & 34.1 (121.3)   \\
Bowling.9.11          & 0.1 (0.3)    & 10.9 (0.4)    & 0.3 (0.6)       & 0.7 (0.8)      & 0.1 (0.3)      & 1.8 (1.5)      \\
InsuranceSmall.15.25  & 11.8 (3.7)   & 24.9 (1.7)    & 1.8 (2.2)       & 438.8 (419.8)  & 2.4 (2.1)      & 64.6 (227.9)   \\
Rain.14.18            & 7.9 (5.6)    & 21.1 (1.3)    & 0.7 (1.3)       & 279.5 (363.3)  & 1.5 (1.5)      & 61.7 (212.2)   \\ \hline
\end{tabular}
}
\caption{\small The full performance of \textit{MIP} and baseline methods for the homoscedastic scheme. The true functions are $f^\star_{kj}(x) = \sin(x + \sin(x)) - \mathbb{E}[\sin(X_k + \sin(X_k))]$, and radial basis functions are used as basis functions. Each entry gives the mean value and standard deviation in parenthesis over 30 independent trials. The EqVar (TD) and EqVar (BU) refer to the top-down and bottom-up version of \textit{EqVar}, respectively. The MIP (super) and MIP (moral) refer to the \textit{MIP} using the estimated super-structure and the true moral graph, respectively. }
\label{tab:compare-another-equal}
\end{table}

\begin{table}[ht]
\centering
{\footnotesize
\begin{tabular}{l|cc|cc|cc}
\hline
\multicolumn{1}{c|}{} & \multicolumn{2}{c|}{NPVAR}   & \multicolumn{2}{c|}{EqVar (TD)}  & \multicolumn{2}{c}{EqVar (BU)}  \\ \hline
Network.$p$.$s^\star$ & $d$          & Time          & $d$             & Time           & $d$            & Time           \\
Dsep.6.6              & 1.8 (2.4)    & 0.3 (0.0)     & 4.1 (2.3)       & 0.2 (0.0)      & 4.0 (2.1)      & 0.2 (0.0)      \\
Asia.8.8              & 4.1 (3.9)    & 0.6 (0.0)     & 8.6 (2.7)       & 0.3 (0.0)      & 8.2 (2.7)      & 0.3 (0.0)      \\
Bowling.9.11          & 4.8 (5.0)    & 0.9 (0.1)     & 5.7 (3.4)       & 0.4 (0.0)      & 5.7 (3.4)      & 0.4 (0.0)      \\
InsuranceSmall.15.25  & 15.2 (7.3)   & 3.4 (0.2)     & 38.0 (6.3)      & 0.7 (0.1)      & 38.7 (6.4)     & 0.7 (0.0)      \\
Rain.14.18            & 13.2 (6.7)   & 2.9 (0.1)     & 30.2 (3.9)      & 0.6 (0.0)      & 30.9 (3.5)     & 0.6 (0.0)      \\ \hline\hline
                      & \multicolumn{2}{c|}{NoTears} & \multicolumn{2}{c|}{RESIT}       & \multicolumn{2}{c}{CCDr}        \\ \hline
Network.$p$.$s^\star$ & $d$          & Time          & $d$             & Time           & $d$            & Time           \\
Dsep.6.6              & 1.1 (0.9)    & 4.3 (3.0)     & 2.2 (2.8)       & 3.5 (0.3)      & 4.3 (2.4)      & 4.4 (1.3)      \\
Asia.8.8              & 4.0 (2.8)    & 5.8 (4.5)     & 2.7 (2.7)       & 7.2 (0.4)      & 2.7 (1.8)      & 3.6 (0.7)      \\
Bowling.9.11          & 2.3 (2.1)    & 10.5 (9.0)    & 6.8 (4.2)       & 10.2 (0.4)     & 9.5 (6.3)      & 3.5 (0.5)      \\
InsuranceSmall.15.25  & 21.2 (5.4)   & 21.7 (8.8)    & 30.9 (5.7)      & 56.2 (1.7)     & 28.6 (3.7)     & 3.8 (0.6)      \\
Rain.14.18            & 16.6 (3.6)   & 14.7 (5.9)    & 14.7 (5.2)      & 43.0 (1.2)     & 20.2 (3.3)     & 4.3 (0.7)      \\ \hline\hline
                      & \multicolumn{2}{c|}{CAM}     & \multicolumn{2}{c|}{MIP (super)} & \multicolumn{2}{c}{MIP (moral)} \\ \hline
Network.$p$.$s^\star$ & $d$          & Time          & $d$             & Time           & $d$            & Time           \\
Dsep.6.6              & 2.6 (2.8)    & 13.7 (3.1)    & 2.2 (2.0)       & 252.3 (230.2)  & 1.9 (1.9)      & 232.8 (218.0)  \\
Asia.8.8              & 0.8 (1.3)    & 9.6 (0.6)     & 0.6 (1.5)       & 420.8 (347.5)  & 0.3 (0.7)      & 340.5 (334.9)  \\
Bowling.9.11          & 0.1 (0.4)    & 10.5 (0.4)    & 0.2 (0.4)       & 663.3 (318.7)  & 0.2 (0.4)      & 714.8 (317.5)  \\
InsuranceSmall.15.25  & 6.8 (3.1)    & 22.3 (2.2)    & 4.8 (1.7)       & 756.1 (654.2)  & 5.2 (2.0)      & 828.3 (575.3)  \\
Rain.14.18            & 5.7 (4.6)    & 21.7 (2.2)    & 2.4 (2.4)       & 827.8 (619.7)  & 2.0 (2.0)      & 782.9 (598.3)  \\ \hline
\end{tabular}
}
\caption{\small The full performance of \textit{MIP} and baseline methods for the heteroscedastic scheme with $\mu_0 =2$. The true functions are $f^\star_{kj}(x) = \sin(x + \sin(x)) - \mathbb{E}[\sin(X_k + \sin(X_k))]$, and radial basis functions are used as basis functions. Each entry gives the mean value and standard deviation in parenthesis over 30 independent trials. The EqVar (TD) and EqVar (BU) refer to the top-down and bottom-up version of \textit{EqVar}, respectively. The MIP (super) and MIP (moral) refer to the \textit{MIP} using the estimated super-structure and the true moral graph, respectively. }
\label{tab:compare-another-b0-3}
\end{table}


\end{document}